\newtheorem{theorem}{Theorem}
\newtheorem{definition}[theorem]{Definition}
\newtheorem{lemma}[theorem]{Lemma}
\newtheorem*{lemma*}{Main Lemma}
\newtheorem{corollary}[theorem]{Corollary}
\newtheorem{remark}[theorem]{Remark}
\newtheorem{claim}[theorem]{Claim}
\newcommand{\Aa}{\mathcal{A}}
\newcommand{\N}{\mathbb{N}}
\newcommand{\C}{\mathscr{C}}
\newcommand{\Cc}{\C}
\newcommand{\D}{\mathscr{D}}
\newcommand{\Dd}{\D}
\newcommand{\E}{\mathscr{E}}
\newcommand{\Ee}{\E}
\newcommand{\F}{\mathcal{F}}
\newcommand{\PP}{\mathcal{P}}
\newcommand{\B}{\mathcal{B}}
\newcommand{\Tf}{\mathsf{T}}
\newcommand{\T}{\mathsf{T}}
\newcommand{\ST}{\mathsf{S}}
\newcommand{\ET}{\mathsf{E}}
\newcommand{\FO}{\mathsf{FO}}
\newcommand{\leader}{\mathsf{leader}}
\newcommand{\branch}{\mathsf{branch}}
\newcommand{\adj}{\mathsf{adj}}
\newcommand{\double}{\mathsf{Double}}
\newcommand{\Oh}{\mathcal{O}}
\newcommand{\eps}{\varepsilon}
\renewcommand{\phi}{\varphi}
\newcommand{\grad}{\nabla}
\renewcommand{\leq}{\leqslant}
\renewcommand{\geq}{\geqslant}
\renewcommand{\le}{\leqslant}
\renewcommand{\ge}{\geqslant}
\newcommand{\aff}[1]{\textcolor{black!60}{\small{#1}}}
\title{Efficient reversal of transductions of sparse graph classes\footnote{\funding}}
\date{}
\author{
	Jan Dreier \\
	\aff{TU Wien} \\
	\aff{dreier@ac.tuwien.ac.at}
	\and
	Jakub Gajarsk\'y \\
	\aff{Masaryk University and}\\
	\aff{University of Warsaw} \\
	\aff{gajarsky@mimuw.edu.pl}
	\and
	Michał Pilipczuk \\
	\aff{University of Warsaw} \\
	\aff{michal.pilipczuk@mimuw.edu.pl}
}
\begin{document}

\newcommand{\funding}{JG is supported by the Polish National Science Centre
SONATA-18 grant number 2022/47/D/ST6/03421.
MP is supported by the project BOBR that has received funding from the European Research Council (ERC) under the European Union’s Horizon 2020 research and innovation programme, grant agreement No. 948057.}

\maketitle

\begin{abstract}
	(First-order) transductions are a basic notion capturing graph modifications that can be described in first-order logic. In this work, we propose an efficient algorithmic method to approximately reverse the application of a transduction, assuming the source graph is sparse. Precisely, for any graph class $\Cc$ that has structurally bounded expansion (i.e., can be transduced from a class of bounded expansion), we give an $\Oh(n^4)$-time algorithm that given a graph $G\in \Cc$, computes a vertex-colored graph $H$ such that $G$ can be recovered from $H$ using a first-order interpretation and $H$ belongs to a graph class $\Dd$ of bounded expansion. This answers an open problem raised by Gajarsk\'y et al.~[ACM TOCL,~'20]. In fact, for our procedure to work we only need to assume that $\Cc$ is monadically stable (i.e., does not transduce the class of all half-graphs) and has inherently linear neighborhood complexity (i.e., the neighborhood complexity is linear in all graph classes transducible from~$\Cc$). This renders the conclusion that the graph classes satisfying these two properties coincide with classes of structurally bounded~expansion.
\end{abstract}

\begin{textblock}{20}(-1.75, 5.3)
	\includegraphics[width=40px]{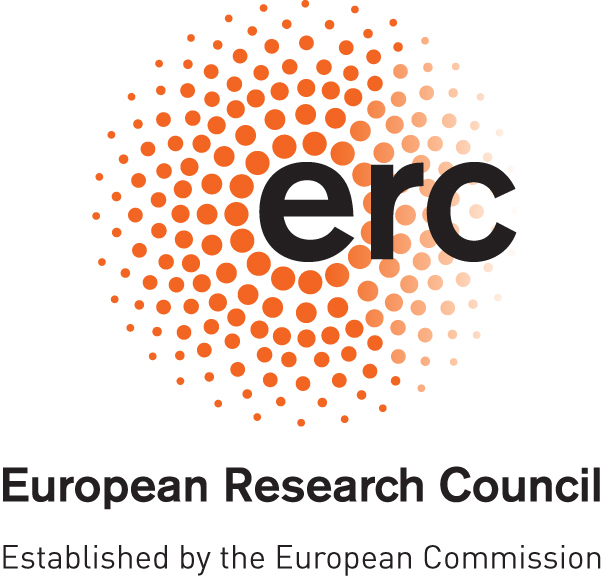}
\end{textblock}
\begin{textblock}{20}(-1.75, 6.3)
	\includegraphics[width=40px]{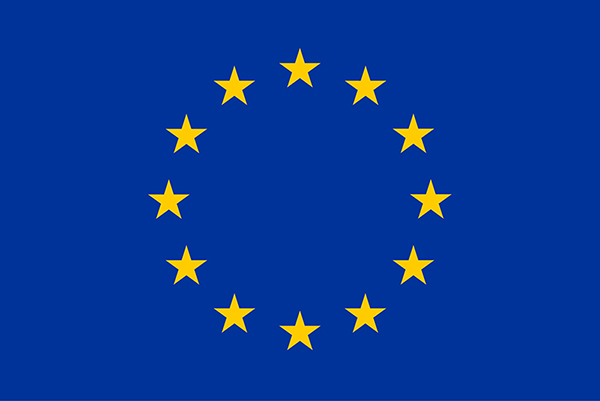}
\end{textblock}

\section{Introduction}

Consider the operation of {\em{squaring}} a graph: given a graph $H$, compute the graph $G\coloneqq H^2$ on the same vertex set such that two vertices are adjacent in $G$ if and only if they are at distance at most $2$ in $H$. It should be expected that if the source graph $H$ belongs to some class of well-behaved graphs, for instance is planar, or tree-like, or in some other way sparse, then the output graph $G$ retains much of the useful structure of $H$. Exposing this structure would be of course easy if $G$ was provided along with $H$ on the input. But given $G$ alone, how would one compute the source graph $H$ efficiently? For the concrete case of squaring a graph, this problem
(given \(G\), find an \(H\) with \(G=H^2\)) is actually $\mathsf{NP}$-hard in general~\cite{squares_hardness}.

The operation of squaring a graph, discussed above, is a basic example of a more general framework of graph modification operations that can be described in logic: first-order interpretations and transductions. A (simple\footnote{The standard notion of a first-order interpretation discussed in the literature on logic allows the vertices of the output graph to be $k$-tuples of vertices of the input graph, for some fixed $k\in \N$. In this paper we restrict attention to {\em{simple}} or {\em{one-dimensional}} interpretations, that is, those with $k=1$.}) first-order interpretation $I$ is described by a pair $\psi(x,y),\nu(x)$ of formulas of first-order logic on graphs (called further $\FO$), where $\psi$ has two free variables and $\nu$ has one. Applying $I$ to a graph $H$ yields a new graph $G=I(H)$ obtained from $H$ by (i) replacing the adjacency relation by making any two vertices $u,v$ adjacent in $G$ if and only if $\psi(u,v)\vee \psi(v,u)$ holds in $H$, and (ii) restricting the vertex set to those vertices $u$ for which $\nu(u)$ holds in $H$. Thus, to model the operation of squaring a graph as a first-order interpretation we would take $\psi(x,y)=\adj(x,y)\vee \exists_z (\adj(x,z)\wedge \adj(y,z))$ and $\nu(x)=\mathtt{true}$. A {\em{first-order transduction}} is a more general notion that additionally allows (i) copying the vertex set a fixed number of times, and (ii) marking the vertices with a fixed set of labels (called {\em{colors}}) that can be accessed in the formulas $\psi$ and~$\nu$. See \Cref{sec:prelims} for a formal definition. In this work we will consider only first-order interpretations and transductions, hence from now on we drop the prefix {\em{first-order}} for brevity.

For graph classes $\Cc$ and $\Dd$, we say that $\Cc$ is {\em{transducible}} from $\Dd$ if there exists a transduction $\Tf$ such that $\Cc\subseteq \Tf(\Dd)$, where $\Tf(\Dd)\coloneqq \bigcup_{H\in \Dd} \Tf(H)$ comprises of all the possible outputs of $\Tf$ applied to a graph from~$\Dd$. With this notion recalled, we can make the main questions motivating our work more concrete:
\begin{enumerate}
\item Assuming $\Dd$ is a well-behaved class of graphs and $\Cc$ is transducible from $\Dd$, what can we say about the structure of graphs in $\Cc$? 
\item Can we efficiently revert the transduction, that is, given a graph $G\in \Cc$ compute a colored graph $H\in \Dd$ in which $G$ can be interpreted?
\end{enumerate}

The first, graph-theoretic question has been pivotal for the development of the theory of {\em{structurally sparse graph classes}}, that is, those transducible from classes of sparse graph. Of particular importance to our motivation is the work of Gajarsk\'y, Kreutzer, Ne\v{s}et\v{r}il, Ossona de Mendez, Pilipczuk, Siebertz, and Toru\'nczyk~\cite{gajarsky2020sbe}, who investigated classes of structurally bounded expansion and characterized them through the existence of low shrubdepth covers. Here, a class $\Cc$ has {\em{structurally bounded expansion}} if it is transducible from some class $\Dd$ of bounded expansion, whereas $\Dd$~has {\em{bounded expansion}} if for every $d\in \N$ there is a finite bound $c(d)\in \N$ on the average degree of graphs that can be obtained from graph from $\Dd$ by first deleting some vertices and then contracting mutually disjoint subgraphs of radius at most~$d$. Bounded expansion is the central notion of uniform sparsity in graphs considered in the theory of Sparsity of Ne\v{s}et\v{r}il and Ossona de Mendez, and it covers all classes that forbid a fixed topological minor, numerous classes of sparse graphs coming from geometry, and many other examples; see~\cite{sparsity,sparsityNotes} for an introduction to the theory of Sparsity. 
Further combinatorial descriptions of classes of structurally bounded expansion were given by Dreier~\cite{lacon_decompositions} (through {\em{lacon-}} and {\em{parity decompositions}}) and by Dreier, Gajarsk\'y, Kiefer, Pilipczuk, and Toru\'nczyk~\cite{bushes_quasibushes} (through bounded-depth {\em{bushes}}). Structural characterizations are also known for classes of structurally bounded degree~\cite{gajarsky2020bd_interp}, structurally bounded pathwidth~\cite{nesetril2021linrw_stable}, structurally bounded treewidth~\cite{DBLP:conf/soda/NesetrilMPRS21}, structurally bounded twin-width~\cite{GajarskyPT22}, and structurally nowhere dense~classes~\cite{bushes_quasibushes}.

The second, algorithmic question remains so far much less understood. Suitable polynomial-time algorithms have been proposed for classes of structurally bounded degree~\cite{gajarsky2020bd_interp}, their ``second-level'' generalizations --- classes of structurally tree-rank at most $2$~\cite{RoseLemma}, classes of structurally bounded pathwidth~\cite{nesetril2021linrw_stable}, and classes of structurally bounded treewidth~\cite{DBLP:conf/soda/NesetrilMPRS21}. In all these cases, the discussed {\em{sparsification procedures}} adhere to the following template: Consider a graph class $\Cc$ that is structurally $\PP$, that is, $\Cc$ is of the form $\Cc \subseteq \T(\D)$, where $\D$ is a graph class enjoying property $\PP$, with ${\cal P}\in \{\textrm{bounded degree, tree-rank}\leq 2\textrm{, bounded pathwidth, bounded treewidth}\}$. Then the sparsification procedure computes a colored graph $H$ belonging to some fixed class $\Dd'$ enjoying $\cal P$ such that $G=I(H)$ for some fixed interpretation $I$. Note that 
this means that the output graph $H$ will in general \emph{not} be from the original `source' graph class $\Dd$.

Namely, the sparsification procedure may (and usually will) come up with another class $\Dd'$ enjoying $\cal P$, and another interpretation $I$ different from the one used in $\Tf$, such that every graph in $\Cc$ can be interpreted from a colored graph belonging to $\Dd'$ using $I$. As shown in~\cite{gajarsky2020bd_interp}, this relaxation of the concept of reversing a transduction is necessary to obtain tractability, because the problem of reversing a fixed interpretation faithfully is $\mathsf{NP}$-complete already when the source class comprises of graphs of maximum degree at most~$3$.

For the case of classes of structurally bounded expansion, none of the existing combinatorial characterizations~\cite{lacon_decompositions,bushes_quasibushes,gajarsky2020sbe} provides an efficient sparsification procedure adhering to the template above. The question about the existence of such a procedure was asked by Gajarsk\'y et al. in~\cite{gajarsky2020sbe}, and has been circulating in the community since the publication of this work. We remark that recently, Braunfeld, Ne\v{s}et\v{r}il, Ossona de Mendez, and Siebertz~\cite{horizons} gave\footnote{In~\cite{horizons}, Braunfeld et al. state only a combinatorial result. However, all the tools they use are algorithmic and hence their reasoning can be turned into a polynomial-time sparsification procedure.} a sparsification procedure that applies in an even larger generality of {\em{monadically stable}} classes. However the output of their procedure is only guaranteed to belong to a class that is {\em{almost nowhere dense}}, even if the input class actually has structurally bounded expansion. In essence, this means that in the output graph, all the relevant structural parameters, such as the weak coloring numbers, are only guaranteed to be bounded by subpolynomial functions of the graph size, instead of by constants, as would be the case in a bounded expansion class. In particular, this severely limits the usability of such sparsification in the algorithmic context,
and does not yield the algorithmic implications discussed further below.

\paragraph*{Our contribution.} We answer the question of Gajarsk\'y et al.~\cite{gajarsky2020sbe} by proving the following result.

\begin{restatable}{theorem}{thmmain}\label{thm:main}
    Let $\C$ be a monadically stable graph class with inherently linear neighborhood complexity.
    Then there exists a graph class $\D$ of bounded expansion transducible from $\Cc$, an algorithm $\Aa$, and a first-order interpretation $I$ such that the following holds: For any input graph $G \in \C$ on $n$ vertices, the algorithm $\Aa$ computes in time $\Oh(n^4)$ a colored graph $H \in \D$ such that $G = I(H)$.
\end{restatable}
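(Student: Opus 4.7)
My plan is to construct, for each input $G \in \Cc$, a hierarchical decomposition of $G$---in the spirit of the bushes of~\cite{bushes_quasibushes} or the lacon/parity decompositions of~\cite{lacon_decompositions}---and then materialize this decomposition as a colored graph $H$ in a bounded expansion class $\Dd$, with a fixed interpretation $I$ that reads the colors and short-range adjacencies of $H$ to recover the adjacencies of $G$. The high-level strategy has two pillars: a combinatorial characterization that monadic stability plus inherently linear neighborhood complexity forces structurally bounded expansion, and an efficient algorithmic realization of that characterization.

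For the combinatorial pillar, I would prove that under the hypotheses of the theorem, $\Cc$ admits bush decompositions of bounded depth and bounded branching, which in particular yields a representation in a bounded expansion class. Monadic stability should rule out the staircase-type patterns that would force unbounded depth: any such pattern would allow one to transduce an infinite half-graph. The inherently linear neighborhood complexity assumption should then bound the fan-out at each level of the bush: a superconstant fan-out at some level would let us build, inside some further transduction of $\Cc$, a family of vertex subsets whose neighborhoods have superlinear complexity, contradicting inherent linearity. Put together, these two constraints should pin down the structural parameters of the bush by constants depending only on $\Cc$.

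For the algorithmic pillar, I would proceed by iterative refinement. Begin with the sparsification of Braunfeld et al.~\cite{horizons}, which for any monadically stable class produces in polynomial time a colored graph $H_0$ from which $G$ can be interpreted, with $H_0$ in an almost nowhere dense class. Then refine $H_0$ level by level, grouping vertices into common leaves of the bush according to their neighborhood types with respect to a shrinking collection of ``centers'' chosen from the previous level. By the combinatorial step, a constant number of refinement rounds suffice; each round reduces to computing and comparing bounded-radius neighborhood profiles pairwise, which can be done in $\Oh(n^3)$ per round, for a total of $\Oh(n^4)$. The output $H$ is the colored graph encoding the final bush, and $I$ is a fixed first-order formula that rereads the types encoded as colors to recover the edges of $G$; crucially, because every numerical parameter (color alphabet, radius of types, bush depth) is bounded by a constant depending only on $\Cc$, the interpretation $I$ does not depend on the specific input $G$.

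The main obstacle is the combinatorial pillar: going from Braunfeld et al.'s ``almost nowhere dense'' conclusion to genuine bounded expansion of the output class. The crux is a quantitative lemma asserting that if a candidate bush-like representation has superconstant grad or weak coloring number, then one can extract---inside some transduction of the original class---a family of vertex subsets with superlinear neighborhood complexity; together with the inherent linearity hypothesis this rules out such blow-up and establishes boundedness of all shallow-minor invariants. A subsidiary technical issue is ensuring that the refinement is canonical enough that the interpretation $I$ remains fixed across all $G\in\Cc$; I would address this by making every choice made by the algorithm (partitions, center selection, type computation) depend only on the data of $\Cc$ and of $G$ read through fixed first-order formulas, never on auxiliary randomness or on $n$.
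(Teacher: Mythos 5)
Your high-level architecture matches the paper's: a bounded-depth bush-like tree whose depth is controlled by monadic stability (via half-graph/ladder-index arguments) and whose per-level fan-out is controlled by neighborhood complexity, finished off by a Dvo\v{r}\'ak-style argument (weakly sparse plus no dense shallow subdivisions plus linear neighborhood complexity implies bounded expansion). That last upgrade step is indeed roughly your ``quantitative lemma,'' and it is the comparatively routine part. The genuine gap is in your algorithmic pillar, and it is exactly the gap the paper identifies as the known bottleneck: starting from the Braunfeld et al.\ sparsification and ``refining level by level'' does not explain how to get \emph{constant} overlap. All previously known efficient constructions of compact neighborhood covers (and of the clusters feeding the Flipper-game/quasi-bush machinery) only achieve overlap $k\cdot \log^{\Oh(1)} n$, which propagates into superconstant sparsity parameters of the output and kills bounded expansion. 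Your proposal never says what mechanism removes those polylogarithmic factors; ``grouping vertices by neighborhood types with respect to centers'' is precisely the step whose overlap one cannot a priori bound by a constant. Asserting that superconstant fan-out would contradict linear neighborhood complexity is not enough: an \emph{arbitrary} partition into leader-dominated parts can have large overlap even in very nice classes; the point is to \emph{construct} a partition with constant overlap, and that requires a new idea.

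The paper's new idea is a greedy elimination algorithm (its Main Lemma): repeatedly collapse twins on one side, find a pair of $k$-near-twins on the other side via Haussler's Packing Lemma (this is where linear neighborhood complexity enters, quantitatively), delete one of them, and freeze a part whenever it loses its last neighbor. The constant overlap of the resulting partition is then proved by showing (i) the quotient bipartite graph between vertices and parts excludes a fixed half-graph, and (ii) every deleted vertex has a later near-twin in that quotient, so that a near-twin transitivity lemma for stable graphs (Gajarsk\'y--McCarty) forces all degrees in the quotient to be within an additive constant of the degree of the last surviving vertex, which is $1$. None of this machinery---Haussler's packing lemma, the near-twin graph, or the transitivity argument---appears in your proposal, and without it the constant-overlap claim, and hence bounded expansion of $\Dd$, is unsupported. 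A secondary issue: you cannot invoke existing bush decompositions for the combinatorial pillar, since those assume structurally bounded expansion, which under your hypotheses is only available as a \emph{consequence} of the theorem you are trying to prove.
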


Let us explain the so-far undefined terms used in \cref{thm:main}. First, a graph class $\Cc$ is {\em{monadically stable}} if $\Cc$ does not transduce the class of all half-graphs. (A {\em{half-graph}} of order $n$ is a bipartite graph with sides $\{u_1,\ldots,u_n\}$ and $\{v_1,\ldots,v_n\}$ where $u_i$ and $v_j$ are adjacent iff $i\leq j$.) This notion appears to play a central role in the theory of transducibility, see~\cite[Section 4.1.1]{michal_survey} for a comprehensive introduction. In particular, every class of structurally bounded expansion is monadically stable. Second, we say that a graph class $\Cc$ has {\em{linear neighborhood complexity}} if there exists a constant $c$ such that for every graph $G\in \Cc$ and a nonempty vertex subset $S\subseteq V(G)$, we have $|\{N(u)\cap S\colon u\in V(G)\}|\leq c\cdot |S|$; and $\Cc$ has {\em{inherently linear neighborhood complexity}} if every class transducible from $\Cc$ has linear neighborhood complexity. It follows from the results of Pilipczuk, Siebertz, and Toru\'nczyk~\cite{PilipczukST18a} that every class of structurally bounded expansion has inherently linear neighborhood complexity. Thus, the prerequisites of \cref{thm:main} hold for every class of structurally bounded expansion, but the result implies more: Since the conclusion entails that the original class $\Cc$ is transducible from a bounded expansion class $\Dd$, in fact the prerequisites --- monadic stability and inherently linear neighborhood complexity --- are {\em{equivalent}} to having structurally bounded expansion. That is, \cref{thm:main} implies the following.

\begin{corollary}\label{cor:main-structural}
	The following conditions are equivalent for a graph class $\Cc$:
	\begin{itemize}[nosep]
		\item $\Cc$ has structurally bounded expansion;
		\item $\Cc$ is monadically stable and has inherently linear neighborhood complexity; and
		\item $\Cc$ is monadically stable and has bounded merge-width.
	\end{itemize}
\end{corollary}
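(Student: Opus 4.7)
The plan is to establish the three-way equivalence as a short cycle of implications, with \cref{thm:main} providing the hard direction and the remaining links following from prior work.

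For (1)$\Rightarrow$(2), suppose $\Cc \subseteq \Tf(\Dd)$ for some bounded expansion class $\Dd$ and transduction $\Tf$. Bounded expansion classes are monadically stable; since monadic stability is defined through non-transducibility of the half-graphs and transductions compose, the property is inherited by $\Cc$. For the neighborhood complexity condition I would observe that any class $\Cc'$ transducible from $\Cc$ is also transducible from $\Dd$ (again by composition), hence has structurally bounded expansion; the bound of Pilipczuk, Siebertz, and Toru\'nczyk~\cite{PilipczukST18a} cited in the introduction then gives linear neighborhood complexity for every such $\Cc'$, establishing inherently linear neighborhood complexity of $\Cc$.

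The implication (2)$\Rightarrow$(1) is essentially a restatement of \cref{thm:main}: the theorem produces a bounded expansion class $\Dd$ and an interpretation $I$ with $\Cc \subseteq I(\Dd)$, and since every interpretation is a restricted form of transduction, this witnesses that $\Cc$ is transducible from $\Dd$, i.e., $\Cc$ has structurally bounded expansion. Together with (1)$\Rightarrow$(2) this closes the cycle between the first two items.

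To incorporate (3) I would invoke the external characterization of merge-width which asserts that, under monadic stability, bounded merge-width coincides with structurally bounded expansion; chaining it with the equivalence (1)$\Leftrightarrow$(2) already established yields the final claim. The main obstacle in the whole argument is of course \cref{thm:main} itself; once that is available, \cref{cor:main-structural} reduces to a compact bookkeeping of known facts, with the Pilipczuk--Siebertz--Toru\'nczyk neighborhood complexity bound and the external merge-width characterization as the only nontrivial auxiliary inputs.
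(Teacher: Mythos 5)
Your treatment of the first two items is correct and matches the paper's intended argument: (1)$\Rightarrow$(2) via the known facts that bounded expansion classes are monadically stable (with stability pulled back through the transduction by composition) and that structurally bounded expansion implies inherently linear neighborhood complexity~\cite{PilipczukST18a}; and (2)$\Rightarrow$(1) by reading \cref{thm:main} as producing a transduction of $\Cc$ from the bounded expansion class $\Dd$ (the interpretation $I$ applied to colored members of $\Dd$ is a transduction).

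The gap is in how you incorporate item (3). You invoke ``the external characterization of merge-width which asserts that, under monadic stability, bounded merge-width coincides with structurally bounded expansion.'' No such external result exists; that equivalence is precisely (part of) what \cref{cor:main-structural} establishes, so your argument for the third item is circular. The facts actually available from the literature are strictly weaker: \cite{mergewidth} gives that every class of structurally bounded expansion has bounded merge-width and that bounded merge-width is preserved under transductions, and \cite{BonamyG25} gives that bounded merge-width implies linear neighborhood complexity; combining the latter two yields that bounded merge-width implies \emph{inherently} linear neighborhood complexity. The correct chain is therefore: (1)$\Rightarrow$(3) because structurally bounded expansion gives both monadic stability and (via \cite{mergewidth}) bounded merge-width; and (3)$\Rightarrow$(2) because bounded merge-width gives inherently linear neighborhood complexity while monadic stability is carried along; the cycle is then closed by (2)$\Rightarrow$(1), i.e., by \cref{thm:main}. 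In other words, the third item genuinely needs the main theorem of the paper to close the loop --- it cannot be outsourced to a pre-existing characterization.
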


Here, merge-width is a family of graph parameters introduced recently by Dreier and Toru\'nczyk in~\cite{mergewidth}. The results of~\cite{mergewidth} together with the recent proof of Bonamy and Geniet~\cite{BonamyG25} that classes of bounded merge-width have linear neighborhood complexity imply that (i) every class of structurally bounded expansion has bounded merge-width, and (ii) every class of bounded merge-width has inherently linear neighborhood complexity. This justifies including the last item in \cref{cor:main-structural}.

\paragraph{Algorithmic implications.}
Let us briefly discuss some algorithmic consequences of \cref{thm:main}. The main motivation behind the question of Gajarsk\'y et al.~\cite{gajarsky2020sbe} was to give a fixed-parameter algorithm for the model-checking problem for first-order logic  on any class of structurally bounded expansion. Recall that this means that given an $n$-vertex graph $G$ from the class in question and a first-order sentence $\varphi$, one can decide whether $\varphi$ holds in $G$ in time $\Oh_\varphi(n^c)$ for some universal constant $c$, where the $\Oh_\varphi(\cdot)$ notation hides factors that may depend on~$\varphi$. The idea is that a polynomial-time sparsification procedure for a class of structurally bounded expansion $\Cc$ immediately allow us to reduce the $\mathsf{FO}$ model-checking problem on $\Cc$ to the setting of a bounded expansion class $\Dd$, where one can use the algorithm of Dvo\v{r}\'ak, Kr\'al', and Thomas~\cite{dvorak2013fo_be}. Indeed, if we are given a graph $G\in \Cc$ and we have that $G=I(H)$ for some fixed interpretation $I=(\psi,\nu)$ and a graph $H\in \Dd$, then testing whether a given first-order sentence $\varphi$ holds in $G$ is equivalent to testing whether a suitably ``pulled back'' sentence $\varphi'$ holds in $H$. (More precisely, $\varphi'$ is obtained from $\varphi$ by replacing every adjacency check by the formula $\psi$ and guarding every quantification with the formula $\nu$.)

By now, that $\mathsf{FO}$ model-checking is fixed-parameter tractable on every graph class of structurally bounded expansion follows from the more general results of~\cite{DreierMS23,stable_MC}, which establish this conclusion for structurally nowhere dense and monadically stable classes, respectively. The algorithms of~\cite{DreierMS23,stable_MC} circumvent the need of performing efficient sparsification\footnote{We note here that the sparsification result of Braunfeld et al.~\cite{horizons} heavily relies on the techniques developed in~\cite{stable_MC,DreierMS23,GajarskyMMOPPSS23} for working with the $\mathsf{FO}$ model-checking problem on monadically stable graph classes.} and work directly on the considered graph $G$. However, the sparsification route has still its merits, as our understanding of $\mathsf{FO}$ on graph classes of bounded expansion is much more robust than on monadically stable classes. In particular, $\mathsf{FO}$ on graph classes of bounded expansion admits a quantifier elimination procedure that can be executed in fixed-parameter time (see~\cite{dvorak2013fo_be,grokre11,PilipczukST18} for different presentations of this result). This allows obtaining algorithmic results for various problems connected to $\mathsf{FO}$ beyond basic model-checking, such as:
\begin{itemize}[nosep]
	\item counting solutions to $\FO$ queries and evaluation of various aggregation queries~\cite{KazanaS19,Torunczyk20}; 
	\item partially dynamic data structures supporting answering $\FO$ queries in constant time~\cite{dvorak2013fo_be,Torunczyk20};
	\item enumeration of $\FO$ queries with constant delay, also in a partially dynamic setting~\cite{KazanaS19,Torunczyk20}; and
	\item const.-factor approximation algorithms for monotone maximization problems expressible in $\FO$~\cite{Dvorak22}.
\end{itemize}
Similarly as for the model-checking problem,
\cref{thm:main} can be used to automatically lift all these results to the generality of any graph class of structurally bounded expansion, at the cost of an $\Oh(n^4)$-time preprocessing. We remark that none of these results has been worked out in the generality of structurally nowhere dense or monadically stable classes, and in fact they are expected to hold only with worse complexity guarantees (with all the constant upper bounds replaced by subpolynomial functions of the vertex count). So all the algorithmic consequences of \cref{thm:main} mentioned above are genuinely new~findings.

Finally, we would like to point out a useful feature of \cref{thm:main}: the sparsified graph class $\Dd$ can be transduced from the given class $\Cc$. This means that if we know something more about $\Cc$, for instance that it enjoys some property $\cal P$ that is closed under applying transductions, then $\Dd$ also enjoys $\cal P$. By substituting $\cal P$ for the property of having bounded linear cliquewidth, bounded cliquewidth, and bounded twin-width (which are all closed under transductions, see e.g.~\cite{michal_survey}), we conclude that:
\smallskip
\begin{itemize}[nosep]
	\item if $\Cc$ is a monadically stable class of bounded linear cliquewidth, then $\Dd$ has bounded pathwidth;
	\item if $\Cc$ is a monadically stable class of bounded cliquewidth, then $\Dd$ has bounded treewidth; and
	\item if $\Cc$ is a monadically stable class of bounded twin-width, then $\Dd$ has bounded sparse twin-width.
\end{itemize}
\smallskip
(Here we use the fact that classes that have both bounded expansion and bounded linear cliquewidth in fact have bounded pathwidth, and similarly for the other two notions; see~\cite{michal_survey}. Also, any class of bounded (linear) cliquewidth or bounded twin-width has inherently linear neighborhood complexity~\cite{BonnetKRTW22,Przybyszewski23}.) Since $\Cc$ can be transduced back from $\Dd$ using the interpretation $I$, we can immediately derive from \cref{thm:main} the following three results known in the literature:
\smallskip
\begin{itemize}[nosep]
	\item Every monadically stable graph class of bounded linear cliquewidth is transducible from a class of bounded pathwidth~\cite{nesetril2021linrw_stable}.
	\item Every monadically stable graph class of bounded cliquewidth is transducible from a class of bounded treewidth~\cite{DBLP:conf/soda/NesetrilMPRS21}.
	\item Every monadically stable graph class of bounded twin-width is transducible from a class of bounded sparse twin-width~\cite{GajarskyPT22}.
\end{itemize}

\paragraph{Technical contribution.}
Previous sparsification results~\cite{horizons,bushes_quasibushes} create a sparse representation of a graph by combining the \emph{Flipper game} of~\cite{GajarskyMMOPPSS23} and \emph{neighborhood covers}. A neighborhood cover of a graph $G$ is a collection \(\cal X\) of subsets of \(V(G)\) called \emph{clusters} such that for each \(v \in V(G)\) there is  \( X \in \cal X\) such that  \(N_G[v] \subseteq X \). We say that a neighborhood cover \(\cal X\) is \emph{compact} if there exists a partition \(\PP\) of \(V(G)\) such that (i) for each \(P \in \PP\) there is \(u \in V(G)\) with \(P \subseteq N_G[u]\) and (ii) for each \( X \in \cal X\) we have that \( X = N_G[P]\) for some \(P \in \PP\). Furthermore, we say that a neighborhood cover \(\cal X\) has \emph{overlap} at most \(k\) if every vertex \(v\) is contained in at most \(k\) clusters of \(\cal X\).

It is well-known that structurally bounded expansion classes admit compact neighborhood covers of constant overlap~\cite{coloring-covering},
and thus a non-algorithmic version of \Cref{thm:main} (for the special case that \(\C\) has structurally bounded expansion)
easily follows from existing work~\cite{horizons,bushes_quasibushes}.
However, so far, \emph{efficient computation of compact neighborhood covers} has been the central bottleneck towards proving the crucial algorithmic aspect of \Cref{thm:main}.
On graph classes that admit compact neighborhood covers with overlap~\(k\), we only knew how to compute compact neighborhood covers of overlap \(k\cdot \log^{\Oh(1)} n\)~\cite{DreierMS23,stable_MC},
which only allows us to compute sparsity parameters of order \(k\cdot \log^{\Oh(1)} n\).
This is insufficient for any of the above mentioned structural or algorithmic applications.
In this paper, we fill this gap:

\begin{restatable}{lemma}{lemmacovers}\label{lem:covers}
    Let \(\C\) be a monadically stable graph class with inherently linear neighborhood complexity.
   Then there is a constant \(k \in \N\) and an algorithm that computes for any \(G \in \C\) a compact neighborhood cover of overlap \(k\)  in time \(\mathcal O(|G|^4)\).
\end{restatable}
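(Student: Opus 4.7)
My plan is to follow the general template of combining a Flipper-game decomposition with a neighborhood-cover construction, as in the previous sparsification works \cite{horizons,bushes_quasibushes,stable_MC,DreierMS23}, but to exploit the assumption of inherently linear neighborhood complexity to eliminate the $\log^{\Oh(1)} n$ factor that those works incur in the overlap parameter. The first step is to compute, in polynomial time, a Flipper strategy tree of constant depth $d=d(\Cc)$. This is enabled by monadic stability: the Flipper game of \cite{GajarskyMMOPPSS23} admits a constant-round winning strategy on any monadically stable class, and an explicit such strategy (for sufficiently large radius, giving a hierarchical ``flip decomposition'') can be computed in polynomial time using the techniques of \cite{stable_MC,DreierMS23}. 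Each leaf of the resulting tree corresponds to a trivialized region of $G$, and each internal node corresponds to a small set of vertices whose flip makes the radius of all currently relevant balls drop.

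The second step is a bottom-up traversal of the flip tree. At each node I have a vertex set $S$ together with the flipped graph on $S$, and I want to partition $S$ into parts, each contained in some closed $G$-neighborhood $N_G[u]$, in such a way that each vertex of $V(G)$ is $G$-adjacent to only $\Oh(1)$ parts produced at this level. This is where inherently linear neighborhood complexity is crucial: the class consisting of the flipped graph, being transducible from $\Cc$, has linear neighborhood complexity, hence the number of distinct traces $\{N_G[v]\cap S\colon v\in V(G)\}$ is $\Oh(|S|)$. From this bound I can partition $S$ into $\Oh(|S|)$ twin-like classes of uniform external behavior and, inside each class, further group vertices so that each part sits inside a closed neighborhood of a chosen representative. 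Unlike in \cite{DreierMS23,stable_MC}, where only boundedness of neighborhood complexity is available and a recursive split costs a logarithmic factor, linearity allows the partition to be built in a single shot.

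Finally, I combine the level-wise partitions into a global partition $\PP$ of $V(G)$. Since every vertex contributes to a part at each of the $d=\Oh(1)$ levels and each level contributes $\Oh(1)$ to the overlap, the resulting cover $\{N_G[P]\colon P\in \PP\}$ is compact and has overlap $k=\Oh(1)$. For the running time, the Flipper tree can be computed in $\Oh(n^4)$, and each of the constantly many levels needs only $\Oh(n^3)$-time neighborhood trace enumeration, giving the stated $\Oh(n^4)$ bound. The principal obstacle I foresee is the interaction between the different flip operations across levels: the ``neighborhoods'' that we need to keep simple are not the original $G$-neighborhoods but their images after an accumulated sequence of flips, and one must verify that inherent linearity transfers cleanly through this composition so that the one-shot partitioning argument remains valid at every level of the tree.
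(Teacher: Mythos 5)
Your proposal has a genuine gap at its central step, and it is exactly the step that constitutes the paper's core technical contribution. You assert that, at each node of the flip tree with vertex set $S$, linear neighborhood complexity lets you ``in a single shot'' partition $S$ into parts, each inside some closed neighborhood $N_G[u]$, so that every vertex of $V(G)$ meets only $\Oh(1)$ parts. But knowing that the number of distinct traces $\{N_G[v]\cap S : v\in V(G)\}$ is $\Oh(|S|)$ does not by itself yield such a partition: grouping $S$ into trace classes and then sub-grouping each class inside chosen neighborhoods gives no control whatsoever on how many parts a single external vertex can touch. This bounded fan-out property is precisely what \cref{lem:main} establishes, and its proof is not a one-liner from linearity: it needs Haussler's Packing Lemma to repeatedly extract near-twins on the left side, an iterative elimination order $a_1,\dots,a_n$, a nontrivial Ramsey-type argument (\cref{lem:quotient_no_halfgraph}) showing that the quotient graph $G(A,\F)$ excludes large half-graphs (which does not follow immediately from monadic stability of $\Cc$), and the near-twin transitivity lemma of \cite{RoseLemma}. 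Without an argument replacing this chain, your level-wise partition step is unsupported, and the $\log^{\Oh(1)}n$ loss of \cite{DreierMS23,stable_MC} --- which is incurred exactly at this point --- is not eliminated. A secondary issue: a compact cover requires a single partition $\PP$ of $V(G)$ with clusters $N_G[P]$; if every vertex ``contributes to a part at each of the $d$ levels,'' the union of the level-wise partitions is not a partition, and you do not explain how to reconcile this.

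For comparison, the paper avoids the Flipper game entirely. It applies \cref{lem:main} once to the bipartite double cover $\double(G)$ (whose class is transducible from $\Cc$, hence still monadically stable with linear neighborhood complexity), obtaining a partition $\F$ of one side into parts each dominated by a leader, with every vertex on the other side meeting at most $k^*$ parts; projecting back to $V(G)$ gives the partition $\PP$ and the compact cover $\{N_G[\rho(P)] : P\in\F\}$ of overlap $k^*$ directly. So the decomposition machinery you propose to import is unnecessary for this lemma, and the real work lives in the single-level statement you treat as routine.
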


While not properly documented in the literature, this lemma,
in combination with prior work on bushes and flipper games, would allow an experienced reader to prove \Cref{thm:main}.
However, such a proof would rely on many nontrivial, external results such as the Flipper game~\cite{GajarskyMMOPPSS23},
which itself relies on further nontrivial results such as flip-flatness~\cite{dreier2022indiscernibles}.
In this paper, we take a much more direct route.
We replace the use of neighborhood covers with the following lemma, our core technical contribution.

\begin{lemma*}[simplified version of \Cref{lem:main}]
    Let $\C$ be a monadically stable class of bipartite graphs with linear neighborhood complexity. Then there exists an integer $k^*\in \N$ and an algorithm with running time $\mathcal{O}(|G|^4)$ that given any  $G = (A,B,E)$ from $\C$ without isolated vertices computes a partition $\F= \{ P_1,\ldots, P_{|\F|}\}$ of $B$ and vertices $\ell_1,\ldots, \ell_{|\F|}$ from $A$ such that:
    \begin{itemize}[nosep]
        \item $P_i \subseteq N_G(\ell_i)$ for each $i\in \{1,\ldots,|\F|\}$, and
        \item every $v \in A$ has a neighbor in at most $k^*$ parts of $\F$, i.e.,
        \(|\{P \in \F\ |\ N_G(v) \cap P \not= \emptyset\}| \le k^*\).
    \end{itemize}
Each vertex $\ell_i$ is called the \emph{leader} of part $P_i$.
\end{lemma*}

It is easily shown (\Cref{sec:mainlemma-implies-covers}) that  \Cref{lem:main} implies \Cref{lem:covers}.
The proof of \Cref{lem:main} itself is surprisingly short and self-contained.
Its only external dependencies are Haussler's Packing Lemma~\cite{haussler1995sphere} and a simple transitivity statement about the near-twin relation in graphs that exclude a half-graph~\cite[Lemma~23]{RoseLemma}.
To derive \Cref{thm:main} from \Cref{lem:main}, we replace the established Flipper game methods from the literature~\cite{horizons,bushes_quasibushes} with an iterated application of \Cref{lem:main}
and basic half-graph based~arguments.

\paragraph*{Organization.} In \cref{sec:overview} we present a brief technical overview of the proof of \cref{thm:main}, focusing on the key ideas. In \cref{sec:prelims} we establish terminology, recall key definitions concerning interpretations and transductions, and prove some auxiliary lemmas. \cref{sec:mainlemma} is devoted to the proof of the main technical statement, \Cref{lem:main}, that captures a single step of our sparsification procedure. Then, in \cref{sec:sparsify} we show how to iterate this lemma, thereby proving \cref{thm:main}.
 
Finally, in \cref{sec:mon-dep} we show that our method can be also used to recover the sparsification result of Braunfeld et al.~\cite{horizons}: by applying our procedure to any monadically stable graph class, we may sparsify it to an almost nowhere dense class (see \cref{thm:almost-nowhere-dense-sparsification} for a formal statement). 

\section{Technical overview}\label{sec:overview}

For technical reasons it will be convenient to work with bipartite graphs. 
This comes at no loss of generality, since given any graph $G$, one can compute a bipartite graph $G'$ such that if we can efficiently sparsify~$G'$, then we can also sparsify the original graph $G$. The construction of $G'$ from $G$ essentially boils down to copying every vertex and, for a pair of adjacent vertices $u,v$, making the first copy of $u$ adjacent to the second copy of $v$ and vice versa.
Consequently, for the rest of the section we will work with a monadically stable class $\C$ of bipartite graphs of inherently linear neighborhood complexity. Moreover, we can assume that $\C$ is hereditary and closed under bipartite complements, as both properties remain preserved if we close $\C$ under taking induced subgraphs and bipartite complements.

\paragraph{Description of the outcome.}
Our goal is to efficiently compute, for any $G \in \C$, a sparse graph $H$ such that $G = I(H)$ for some fixed interpretation $I$ depending only on $\C$. The graph $H$ that we will compute will have a special form, consisting of two parts: it will contain all the vertices of the original graph $G$, but without any edges connecting them, and a rooted tree $T$ to which the vertices of $G$ will be suitably connected.
Crucially, the tree $T$ will be of bounded height (by a bound depending only on $\C$). 
In what follows we will refer to the vertices of $T$ as \emph{nodes}.
We will show that each \(v \in V(G)\) is adjacent only to a bounded number of nodes from \(T\) (by a bound $d$ depending only on $\C$). This will guarantee that $H$ is sparse --- in particular, it is $d$-degenerate (i.e. every subgraph has a vertex of degree at most $d$). Consequently, the graph class $\D$ consisting of all graphs $H$ computed by our sparsification procedure, is $d$-degenerate. Being degenerate is a notion of sparseness that is considerably weaker than being of bounded expansion; however, we will be able to show that $\D$ in fact has bounded expansion.

\paragraph{Recovering $G$ from $H$.}
We now describe how one can recover the original graph $G$ from $H$. The construction of $T$ and $H$ will guarantee that if $u,v$ are any two vertices of $G$ in $H$, then there  always exists a node $p$ such that both $u$ and $v$ are adjacent to $p$, but no child $p'$ of $p$ has this property. To determine whether $u$ and $v$ are adjacent in $G$, we simply count the number of nodes on the path from $p$ to the root of $T$. If this number is even, then they are adjacent, and otherwise they are non-adjacent (see also \Cref{fig:sparse_graph}). This property, together with the fact that $T$ is of bounded height, allow us to recover the adjacency of $G$ from $H$ by a first-order formula. We remark that while for any $u,v \in V(G)$ there may exist several distinct nodes of $T$ with the property described above, the construction of $T$ and $H$ will guarantee that performing the parity check for any such node will lead to the same result.

We remark that the form of the graph $H$ described above as well as the mechanics of recovering $G$ from $H$ are heavily inspired by the notions of \emph{bushes} and \emph{quasi-bushes} proposed by Dreier et al. in~\cite{bushes_quasibushes}. It would be possible to adjust our algorithm to produce bushes or quasi-bushes adhering to the definitions from~\cite{bushes_quasibushes}, however, the somewhat simpler form of the graph $H$ described above is more suitable for the arguments used in our proofs.

\begin{figure}[t]
  \begin{minipage}[t]{0.5\textwidth}
    \vspace*{0pt}

    \includegraphics[scale=1.1]{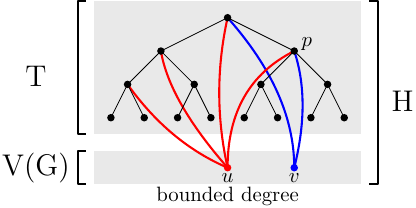}

  \end{minipage}\hfill
  \begin{minipage}[t]{0.5\textwidth}
    \vspace*{11pt}
    \captionof{figure}{
        For given vertices \(u,v \in V(G)\), we display a node \(p\) adjacent to both in \(H\), such that no child of \(p\) is adjacent to both.
        There are an even number of vertices on the path from the root to \(p\), and thus \(u,v\) are adjacent in \(G\).
    }
    \label{fig:sparse_graph}
  \end{minipage}
\end{figure}

\paragraph{Construction of $T$ and $H$.}

The key ingredient behind the construction of \(T\) and \(H\) is the following simple consequence of our main technical statement, \Cref{lem:main}.

Given any $G=(A,B,E)\in \Cc$ without isolated vertices, we can efficiently compute a collection of induced subgraphs $G_1,\ldots, G_m$  of $G$ (here $m$ is not bounded) satisfying the following conditions.
\smallskip
\begin{enumerate}[nosep]
    \item Each $v \in V(G)$ is in at least one and in at most $k^*$ graphs from $G_1,\ldots, G_m$, where $k^*$ is an integer that depends only on $\C$.
    \item Every edge of $G$ is contained in at least one $G_i$.
    \item Depending on the direction we apply \Cref{lem:main}, one of the following will be enforced:

    \begin{enumerate}
        \item \label{ca} All graphs $G_i$ contain a vertex from \(A \cap V(G_i)\) that is adjacent to all vertices from \(B \cap V(G_i)\).
        \item \label{cb} All graphs $G_i$ contain a vertex from \(B \cap V(G_i)\) that is adjacent to all vertices from \(A \cap V(G_i)\).

    \end{enumerate} 
\end{enumerate}
\smallskip
We will refer to this collection of induced subgraphs $G_1,\ldots, G_m$ of $G$ as a  \emph{left branching} or a \emph{right branching} of $G$, depending on whether we enforce~(\ref{ca}) or~(\ref{cb}).

From \Cref{lem:main} presented earlier, 
one can obtain a left branching of $G$ by setting 
$m\coloneqq |\F|$,
$G_i\coloneqq G[N(P_i),P_i]$,
and choosing the dominating vertex of (\ref{ca}) to be \(\ell_i\).
To obtain a right branching, we simply exchange the roles of $A$ and $B$.

Let us now describe the tree $T$. 
As part of the construction, each node $p$ will be assigned a graph $G(p)$ that is either an induced subgraph of $G$ (if $p$ is at even depth),
or the bipartite complement of an induced subgraph of $G$ (if $p$ is at odd depth). 
To construct \(T\), we start with creating a root node $r$ and we set $G(r) \coloneqq G$.
Then we remove all isolated vertices from $G(r)$ and compute a left branching $G_1,\ldots, G_m$ of the resulting graph.
For each branch $G_i$, we create a new child $p_i$ of $r$, and set $G(p_i)$ to be the bipartite complement of $G_i$.
We then recurse into each child $p_i$ of $r$, and proceed analogously for $G(p_i)$, again removing all isolated vertices of each $G(p_i)$, but this time using a \emph{right} branching of the resulting graph. For each branch, we again create a node and again assign the bipartite complement of the branch to this child. To summarize, the general construction looks as follows: For any node $p$ and graph $G(p)$, we remove from $G(p)$ all isolated vertices (if there are any) and compute a left or right branching $G_1,\ldots,G_m$ of $G(p)$, depending on whether $p$ is at even or odd depth. Then for each branch $G_i$ we create a new child $p_i$ of $p$, and set $G(p_i)$ to be the bipartite complement of $G_i$ and proceed recursively into each $p_i$. The recursion stops when $G(p)$ is an edgeless graph.

For any $G\in \Cc$, we define the {\em{ladder index}} of $G$ as the maximum order of a half-graph that is contained in $G$ as an induced subgraph. Since $\Cc$ is monadically stable, there is a finite bound $t$, depending only on~$\Cc$, on the ladder index of graphs from $\Cc$. The ladder index is a convenient parameter to measure the progress of our algorithm: the alternation between induced subgraphs of $G$ and complements of induced subgraphs of $G$ in our choice of $G(p)$, as well as between using left branchings and right branchings, leads to the decrease of the ladder index in every second step of the construction. To be more precise -- it is not difficult to show that if $p$ and $p''$ are nodes of $T$ such that $p$ is the grandparent of $p''$ in $T$, then the ladder index of $G(p'')$ is smaller than the ladder index of $G(p)$.
As the ladder index of any graph from $\C$ is bounded by $t$, the height of $T$ is hence bounded by $2t$.

Once the tree $T$ is computed from $G$, we finish the construction of $H$ by 
adding the vertex set of $G$ to $T$ (but without any edges),
and for every $v \in V(G)$ connecting $v$ to all nodes $p$ of $T$ with $v \in V(G(p))$.
The bound on the depth of \(T\), and on the branches per node that each vertex is contained in, ensure that each \(v \in V(G)\) has bounded degree.

\paragraph{Proving that $\D$ has bounded expansion.}
The proof that $\D$ is of bounded expansion can be split into three parts. 
First, one argues that $\D$ is weakly sparse, that is, that there exists a number $h$ such that no $H \in \D$ contains $K_{h,h}$ as a subgraph. This is an easy consequence of the fact that $\D$ is $d$-degenerate for some $d$ depending only on $\C$, as then we can set $h\coloneqq d$.

Second, one argues that $\D$ is transducible from $\C$. This is technically the most challenging part of the proof, but ultimately follows from the fact we can not only compute a (left or right) branching of the input graph efficiently, but the branching computed by our algorithm is also definable by a first-order formula in a suitable sense. 

The third step uses the previous two steps and combines them with a result of Dvo\v{r}\'ak~\cite{Dvorak18} to argue that $\D$ has bounded expansion. This result states that if $\D$ is a weakly sparse graph class, then to show that $\D$ has bounded expansion it is enough to prove that for every $r \ge 1$, graphs from $\D$ do not contain arbitrarily dense graphs as induced $({\le}r)$-subdivisions. 
We use this as follows.
As $\D$ is transducible from~$\C$, we know that it also has inherently linear neighborhood complexity. Now we proceed by contradiction and assume that for some fixed $r \ge 1$ the graphs from $\D$ contain arbitrarily dense graphs as induced $({\le}r)$-subdivisions. From this and the assumption that $\D$ is weakly sparse, we can argue that there is a transduction $\T$ that from $\D$ transduces a graph class that contains arbitrarily dense graphs as $1$-subdivisions. It is easy to see that such a graph class cannot have linear neighborhood complexity, which leads to a contradiction.

\paragraph{Main lemma.}

It remains to discuss the idea behind the proof of the main lemma, \Cref{lem:main}.
We will rely on the following consequence of Haussler's packing lemma (a combinatorial result about packings in set systems): Let $\C$ be a class of bipartite graphs with linear neighborhood complexity. Then there exists $k$  such that if $G = (A,B,E)$ is a bipartite graph from $\C$ that has no twins in $B$, then there are two vertices $u,v$ in $A$ that are $k$-near-twins. Here, by $u,v$ being {\em{$k$-near-twins}} we mean that their adjacency towards $B$ agrees on all but at most $k$ vertices. We remark that the number $k$ used in this statement is different from the value $k^*$ claimed in the main lemma, but we will see that $k^*$ is closely related to $k$.

We now describe the algorithm to compute $\F$.
Let $n\coloneqq |A|$.
We set $G_0 \coloneqq G$ and proceed in $n$ rounds as follows. To obtain $G_i$ from $G_{i-1}$, we first eliminate all twins on the right side of $G_{i-1}$ (by keeping exactly one vertex from each twin class and deleting the rest) to obtain graph $G_{i-1}'$. Then we find a pair $u,v$ of  $k$-near-twins of $G_{i-1}'$ and remove one of them to obtain $G_i$. We also remember the removed vertex as~$a_i$. By proceeding in this way, we eventually eliminate the whole graph and obtain an ordering $a_1,\ldots, a_n$ of~$A$. 
We then define the partition $\F$ claimed in the main lemma as follows: For each $i \in [n]$, let $S_i$ be the set of vertices $v$ from $B$ such that $i$ is the largest index for which $a_i$ is a neighbor of $v$. We then define $\F$ to be the set of all nonempty sets $S_i$, and choose $a_i$ as the leader of $S_i$. 

To show that $\F$ has the desired properties, we proceed as follows. It is easy to see that each part of $\F$ is contained in the neighborhood of its leader, so we only need to show that each vertex from $A$ has neighbors in at most $k^*$ parts of $\F$. We define a bipartite graph $G(A,\F)$ with sides $A$ and $\F$ in which there is an edge $vP$ for $v \in A$ and $P \in \F$ if $v$ has a neighbor in $P$. To finish the proof of the main lemma, we will show that every vertex from $A$ has at most $k^*$ neighbors in $G(A,\F)$. To this end, we establish the following two properties of $G(A,\F)$:
\begin{enumerate}[(i)]
    \item The ladder index of $G(A,\F)$ is bounded by a constant depending only on $\C$. Note that this is nontrivial: while $G$ cannot contain a large half-graph as an induced subgraph because $\C$ is monadically stable, it is not immediately clear that $G(A,\F)$ also should not contain a large half-graph.
    \item \label{p2} For every $a_i$ with $i<n$, there exists $a_j$ with $j > i$ that is a $k$-near-twin of $a_i$ in $G(A,\F)$. 
\end{enumerate}

Then we consider a graph $K$ with vertex set $V(G)$ and edges between all pairs of vertices that are $k$-near-twins in $G(A,\F)$. By property (ii) above there is a path from each $a_i$ to $a_n$ in $K$. In particular, all vertices from $A$ are in the same connected component of $K$. We then invoke the result from~\cite{RoseLemma} that states if $G$ is a graph of ladder index at most $t$ and $K$ is a $k$-near-twin graph of $G$, then any two vertices that are in the same connected component of $K$ are $k'$-near-twins of $G$, for some $k'$ depending on $t$ and~$k$. 
By the construction of $\F$ we see that $a_n$ has degree at most $1$ in $G(A,\F)$, and therefore all vertices in $A$ have degree at most $k^*\coloneqq  k' + 1$ in $G(A,\F)$. This finishes the proof outline of the main lemma.

We remark that the presentation of the algorithm that computes the partition $\F$ differs from the presentation given in Section~\ref{sec:mainlemma}. The presentation given here is more concise, while the presentation in Section~\ref{sec:mainlemma} is more suitable for the arguments used in the proofs. However, it is not difficult to see that the two formulations of the algorithm are equivalent.

\section{Preliminaries}\label{sec:prelims}

For a nonnegative integer $k$, we denote $[k]\coloneqq \{1,\ldots, k\}$.

\subsection{Graphs}
All our graphs are finite, simple and undirected.
We use standard graph terminology and notation, some of which we will now briefly review. For a graph $G$ and a vertex $v \in V(G)$, we denote by $N_G(v)$ the set of neighbors of $v$ in $G$. If $S$ is a subset of $V(G)$, we denote by $G[S]$ the subgraph of $G$ induced by $S$.
We define \(|G| \coloneqq |V(G)|\) and \(\|G\| \coloneqq |E(G)|\).
By a \emph{rooted tree} we mean a connected acyclic graph in which there is a designated vertex, called its \emph{root}. The \emph{height} of a tree $T$ is the largest number of edges on any leaf-to-root path in $T$. The \emph{depth} of a vertex $v$ in a tree $T$ is the number of edges on the unique path from $v$ to the root of $T$.

Let $H$ be a graph. An \emph{$({\le} r)$-subdivision} of $H$ is any graph that can be obtained from $H$ by replacing each edge $uv$ of $H$ by a path connecting $u$ and $v$ with at most \(r\) internal vertices (possibly $0$, which means that the edge stays intact).

We call the original vertices of $H$ the \emph{principal} vertices of the $({\le} r)$-subdivision and the internal vertices of the paths the \emph{subdivision} vertices.

Our main technical lemmas will work with bipartite graphs.
For a bipartite graph $G = (A,B,E)$ we define $L(G) = A$ and $R(G)= B$. For $X \subseteq A$ and $Y \subseteq B$, we define $G[X,Y]$ 
to be the graph $(X,Y,E \cap (X\times Y))$. 
By the \emph{bipartite complement} of $G= (A,B,E)$ we mean the graph $\widetilde{G} \coloneqq (A,B,E')$ where $E'\coloneqq \{uv \in A\times B~|~uv\not\in E\}$.

\subsection{Logic}

\paragraph{Logic and graphs.} 
We assume familiarity with first-order logic $\FO$.

We will work with graphs, modelled as structures over a signature consisting of one binary predicate $\mathsf{adj}$, indicating adjacency between vertices. The vertices of our graphs will often be marked with (possibly overlapping) colors, each modelled by an additional unary predicate. We then speak of \emph{colored} graphs and if a colored $G'$ is obtained from $G$ by adding some unary predicates, then $G'$ is a {\em{unary coloring}} of $G$.
Graphs that are explicitly bipartite are equipped with additional unary predicates \(L\) and \(R\) to distinguish their two sides.

\paragraph{Transductions and interpretations.}

Let $k \in \N$ and let $G$ be a graph. A \emph{$k$-copy} of $G$ is the structure $G^+$ over signature $\{\mathsf{adj}, \mathsf{cp}\}$  with universe $V(G)\times [k]$ in which the binary relations $\mathsf{adj}_{G^+}$ and $\mathsf{cp}_{G^+}$ are realized as follows: We have that $(u,i)(v,j) \in \mathsf{adj}_{G^+}$  if and only if $i=j$ and $uv \in E(G)$ and we have $(u,i)(v,j) \in \mathsf{cp}_{G^+}$ if and only if $u=v$. Informally, $G^+$ consists of $k$ disjoint copies of $G$ in which we can determine whether two vertices correspond to the same vertex of the original graph by predicate $\mathsf{cp}$.

A first-order \emph{transduction} from graphs to graphs is determined by a number $k$, a finite set $\Lambda$ of unary predicates and a first-order formula $\psi(x,y)$ over the signature $\{\mathsf{adj}, \mathsf{cp}\} \cup \Lambda$. If $\T = (k, \Lambda, \psi)$ is a transduction and $G$ and $H$ are graphs, we say that $\T$ \emph{transduces} $H$ from $G$, denoted by $H \in \T(G)$, if $H$ can be obtained from $G$ via the following steps:
\begin{enumerate}[nosep]
    \item First, a $k$-copy $G^+$ of $G$ is created.
    \item Then we mark vertices of $G^+$ with unary predicates from $\Lambda$ arbitrarily, thus obtaining a unary coloring $G^*$ of $G^+$.
    \item We define graph $\psi(G^*)$ as the graph on vertex set $V(G^*)$ in which there is an edge between $u,v$ if and only if $G^* \models \psi(u,v)$.
    \item Finally, as the resulting graph $H$ we take an arbitrary induced subgraph of $\psi(G^*)$.
\end{enumerate}

If $\C$ is a class of graphs, we set $\T(\C)\coloneqq\bigcup_{G \in \C}\T(G)$. We then say that a graph class $\D$ is {\em{transducible}} from $\C$ if $\D \subseteq \T(\C)$ for some transduction $\T$.

Transductions can be composed -- if $\ST$ and $\T$ are two transductions, we denote their composition by $\ST \circ \T$, which is again a transduction. We refer to~\cite{gajarsky2020sbe,michal_survey} for a more detailed overview of transductions and their properties. 

A first-order \emph{interpretation} from graphs to graphs consists of two formulas $\psi(x,y)$ and $\nu(x)$. If $I=(\psi,\nu)$ is an interpretation and $G$ is a graph, we define graph $I(G)$ as follows. The vertex set of $I(G)$ is the set $\{ v \in V(G)\ |\ G \models \nu(v)\}$ and the edge set of $I(G)$ is defined as $\{ uv \in V(G)\times V(G)\ |\ G \models \psi(u,v)\}$.

\subsection{Graph classes}

A {\em{graph class}} is just a set of graphs, typically infinite. We need the following standard definitions.

\begin{definition}
    We say that a graph class $\C$ is \emph{weakly sparse} if there exists $h\in \N$ such that no graph $G \in \C$ contains $K_{h,h}$ as a subgraph. 
\end{definition}

\begin{definition}
    We say that a graph $G$ is $d$-degenerate if every subgraph of $G$ contains a vertex of degree at most $d$. We say that a class $\C$ of graphs is degenerate if there exists $d$ such that each $G \in \C$ is $d$-degenerate.
\end{definition}
It is well-known that for every $d$-degenerate graph there exists a proper coloring of its vertices with $d+1$ colors. Also, it is easily seen that every degenerate graph class is weakly sparse.

\begin{definition}
A \emph{half-graph of order $t$}, denoted \(H_t\), is a graph with vertex set $\{u_1,\ldots,u_t, v_1,\ldots, v_t\}$ and with edge set $\{u_iv_j\ \colon\ 1\le i \le j\le t\}$.

\end{definition}
We note that a half-graph of order $t$ is often also called a \emph{ladder} of order $t$ in the literature.

\begin{remark}\label{rem:ladders_sides}
	We will often work with half-graphs in the context of bipartite graphs. It is easy to see that if a bipartite graph $G$ with sides $A$ and $B$ contains $H_t$ with vertices $v_1,\ldots, v_n,u_1,\ldots,u_n$ as an induced subgraph, then we have that $v_1,\ldots, v_n \in A$ and $u_1,\ldots,u_n \in B$ or we have that $v_1,\ldots, v_n \in B$ and $u_1,\ldots,u_n \in A$. In some arguments it will be convenient for us to pick which of these two options is true. This does not cause a problem, since if $v_1,\ldots, v_n \in A$ and $u_1,\ldots,u_n \in B$, then we may rename the vertices suitably so that $v_1,\ldots, v_n \in B$ and $u_1,\ldots,u_n \in A$ holds, i.e. we swap the names of $v_1$ and $u_n$, $v_2$ and $u_{n-1}$~etc.
\end{remark}

\begin{definition}
    We say that a graph class $\C$ is \emph{monadically stable} if the class of all half-graphs is not transducible from $\C$.
\end{definition}

\begin{definition}
    We say that a class $\C$ of graphs has \emph{linear neighborhood complexity} if there exists $c$ such that for every $G \in \C$ and every nonempty $S \subseteq V(G)$ we have that 
    $$ |\{N(v) \cap S~:~v\in V(G) \}| \le c\cdot|S|.$$
    We say that \(\C\) has \emph{inherently linear neighborhood complexity} if 
    for every transduction $\T$, the class $\T(\C)$ has linear neighborhood complexity.
\end{definition}

We say that two vertices $u,v$ of a graph $G$ are \emph{$k$-near-twins} if $|N_G(v) \triangle N_G(u)| \le k$, i.e. if they have the same neighborhoods with at most $k$ exceptions.
The following claim is a simple consequence of Haussler's Packing Lemma~\cite{haussler1995sphere}.
By translating the Packing Lemma from the language of set systems (used by Haussler) to the language of bipartite graphs
(as done e.g. in~\cite[Lemma 3.1]{davies2025oddcoloringgraphslinear}), one obtains the following.

\begin{lemma}\label{lem:bip-graph-near-twins}
Let $\C$ be a hereditary class of bipartite graphs with linear neighborhood complexity. 
Then there exists a number $k$ such that the following holds: 
For every graph $G = (A,B,E)\in \C$ that contains no pair of twins in $B$ and satisfies $|A|\geq 2$, the set $A$ contains a pair of $k$-near-twins. 
\end{lemma}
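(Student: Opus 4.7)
The plan is to apply Haussler's packing lemma, in the version for set systems with linear shatter function, to the neighborhood system of $A$ in $B$. First I would reduce to the case that $A$ also has no twins: a twin pair in $A$ is a $0$-near-twin pair, so we are done in that case. Under this assumption the set system $\mathcal{F} \coloneqq \{N(a) : a \in A\} \subseteq 2^B$ has exactly $|A|$ distinct members.

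The next step is a double use of linear neighborhood complexity. With $S = A$ in the definition, one gets $|\{N(v) \cap A : v \in V(G)\}| \le c|A|$, where $c$ is the constant from the definition of linear NC; for $v \in B$ the trace $N(v) \cap A$ equals $N(v)$ and these are pairwise distinct since $B$ has no twins, hence $|B| \le c|A|$. With $S$ ranging over subsets of $B$, the same definition applied to the induced bipartite subgraph on $A \cup S$ (here we use that $\C$ is hereditary) yields $\pi_{\mathcal{F}}(s) \le cs$, i.e.\ $\mathcal{F}$ has linear shatter function. Haussler's packing lemma in its form for such VC-density-$1$ systems says that if $\mathcal{F}$ is a $k$-packing (any two members differ by more than $k$ elements), then $|\mathcal{F}| \le C \cdot |B|/k$ for a constant $C = C(c)$. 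Suppose for contradiction that every pair $u \ne v \in A$ has $|N(u) \triangle N(v)| > k$; then $|A| = |\mathcal{F}| \le C|B|/k \le Cc|A|/k$, forcing $k \le Cc$. Taking $k \coloneqq \lceil Cc \rceil + 1$ gives a contradiction and completes the argument.

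The principal difficulty I anticipate is invoking Haussler in the right form. The classical Haussler~1995 bound is $|\mathcal{F}| \le C_d(|B|/k)^d$ where $d$ is the VC-dimension of $\mathcal{F}$; combined with $|B| \le c|A|$ this only yields $k \le c\, C_d^{1/d} |A|^{(d-1)/d}$, which grows with $|A|$ as soon as $d \ge 2$ and would not give a uniform constant. The linear-in-$|B|/k$ packing bound we need is the refinement of Haussler available for set systems of linear shatter function; in the bipartite-graph formulation this is precisely what Davies et al.'s Lemma~3.1 asserts, and I would simply invoke it. An alternative would be to iterate the polynomial Haussler on carefully chosen induced subgraphs, using the hereditary assumption, but citing the direct linear version is considerably cleaner.
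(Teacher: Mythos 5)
Your proposal is correct and follows essentially the same route as the paper, which simply invokes Haussler's Packing Lemma in its linear-shatter-function form via the bipartite translation of Davies et al.\ (Lemma~3.1 there); your two uses of linear neighborhood complexity — once to get $|B|\le c|A|$ from the no-twins-in-$B$ assumption, once to get the linear shatter function — and the resulting contradiction are exactly the intended deduction. Your remark that the crude VC-dimension version of Haussler would not suffice and that the degree-$1$ shatter-function refinement is needed is also the right observation.
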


Finally, the notion of
bounded expansion was introduced by Ne\v{s}et\v{r}il and Ossona de Mendez~\cite{nevsetvril2008grad}. Instead of using the original definition, we will work with the following characterization due to Dvořák~\cite{Dvorak18}.

\begin{lemma}[\cite{Dvorak18}]\label{lem:BE_subdivisions}
    A class of graphs $\D$ has bounded expansion if and only if $\D$ is weakly sparse and for every $r \in \N$ there exists $d \in \N$ such that the following holds: If $H$ is a graph such that some $G \in \D$ contains an $({\le} r)$-subdivision of $H$ as an induced subgraph, then $\frac{\|H\|}{|H|} \le d$.
\end{lemma}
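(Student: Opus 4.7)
Both directions reduce to the standard equivalence (due to Ne\v{s}et\v{r}il and Ossona de Mendez) between bounded expansion and uniformly bounded \emph{topological grad} at every depth $r$. For the forward direction, assume $\D$ has bounded expansion. Weak sparsity is immediate since the topological grad at depth $0$ of $K_{t,t}$ is at least $t/2$, so a uniform bound on the topological grad at depth $0$ over $\D$ forbids large $K_{h,h}$ subgraphs. For the density condition, any induced $({\le}r)$-subdivision of $H$ inside some $G \in \D$ is in particular a (not necessarily induced) $({\le}r)$-subdivision, and hence witnesses $H$ as a topological minor at depth $\lceil r/2 \rceil$ of $G$. This yields $\|H\|/|H| \le d$ for a constant $d$ depending only on $r$ and $\D$.

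For the backward direction, assume $\D$ is $K_{h,h}$-free and that induced $({\le}r)$-subdivisions over $\D$ have density at most $d(r)$ for every $r$. It suffices to show that the topological grad at every depth $r$ is bounded on $\D$. Suppose then that $G \in \D$ contains $H$ as a topological minor at depth $r$, witnessed by a (not necessarily induced) $({\le}2r)$-subdivision $\tilde H \subseteq G$ with branch set $B$ and subdivision paths $\{P_e\}_{e \in E(H)}$. The plan is to extract from $\tilde H$ an \emph{induced} $({\le}2r)$-subdivision of a subgraph $H' \subseteq H$ with $\|H'\|/|H'|$ only a constant factor smaller than $\|H\|/|H|$; applying the hypothesis to $H'$ then bounds $\|H\|/|H|$ by a constant depending only on $r$ and $h$, as required.

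The heart of the argument is eliminating the chord edges of $G[V(\tilde H)]$ that prevent $\tilde H$ from being induced. These come in three types: (a) between two branch vertices not corresponding to an edge of $H$; (b) between a branch vertex and an interior subdivision vertex on a path not incident to it; and (c) between interior subdivision vertices on distinct paths. I would use a probabilistic argument: independently sample each edge $e \in E(H)$, together with its entire subdivision path $P_e$, with probability $p$ to be tuned. Chords of types (b) and (c) vanish whenever any of the relevant paths is dropped, and so survive the sampling with probability only $O(p^2)$; chords of type (a) are controlled via weak sparsity, using the K\H{o}v\'ari--S\'os--Tur\'an bound to cap their total number by $O(|B|^{2-1/h})$. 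A cleanup step then discards every sampled edge whose path still carries a surviving chord, leaving an induced $({\le}2r)$-subdivision of some $H' \subseteq H$.

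The main obstacle is calibrating $p$ so that the expected number of edges of $H$ surviving both sampling and cleanup remains a constant fraction of $p \cdot \|H\|$, ensuring that $H'$ has density comparable to $H$. The subquadratic bound on type-(a) chords is essential here, and the interaction between the $p^2$ survival rate of long-range chords and the K\H{o}v\'ari--S\'os--Tur\'an-type bound on branch-branch chords has to be worked out precisely so that the final density bound depends only on $r$ and $h$, and not on the size of $G$ or $H$.
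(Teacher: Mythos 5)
The paper does not prove this statement at all --- it is quoted verbatim from Dvořák~\cite{Dvorak18} --- so the only question is whether your blind argument actually establishes it. Your forward direction is fine and standard. The backward direction, however, has a genuine gap, and it sits exactly at the point you defer (``the calibration \dots has to be worked out precisely'').

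The deferred calibration cannot be made to work as described. Let $N=\|H\|$, $n=|H|$, and let $M$ be the number of pairs $\{e,f\}\subseteq E(H)$ whose subdivision paths are joined by a type-(c) chord. Sampling paths with probability $p$ and then discarding every sampled edge whose path carries a surviving chord leaves in expectation at least $pN-2p^2M$ edges; optimizing $p$ yields a subgraph $H'$ with roughly $N^2/M$ edges on at most $n$ vertices. For the hypothesis on induced subdivisions to bite, you need $N^2/(Mn)$ to exceed the constant $d$, i.e.\ $M\lesssim N^2/(dn)$. But the only available bound on $M$ is via the depth-$r$ minor of $G$ obtained by contracting the interiors of the paths: $M\le \nabla_r(G)\cdot N$. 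Plugging this in, you need $N/n > d\cdot\nabla_r(G)$, while $N/n$ is at best the topological grad $\widetilde\nabla_r(G)\le\nabla_r(G)$ --- so the required inequality can never hold, and the argument is circular. Weak sparsity does not rescue this: a $K_{h,h}$-free graph on the $O(rN)$ subdivision vertices can still carry $\Omega\bigl((rN)^{2-1/h}\bigr)$ chords, which forces $p$ so small that the density of $H'$ degrades with $|H|$ rather than being a constant. Dvořák's actual proof needs a genuinely additional idea here (an iteration/induction in which an excess of inter-path chords is itself converted into a denser shallow (topological) minor at controlled depth, so that the process must terminate before reaching a $K_{t,t}$); no ingredient of that kind appears in your sketch.

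Two smaller but real errors in the chord bookkeeping. First, type-(a) chords (between two branch vertices) do not ``vanish'' under path sampling at all --- both endpoints remain present --- so the Kővári--Sós--Turán bound on their number does not by itself eliminate them; the correct and much simpler fix is to absorb them into $H$ by adding the corresponding edges as length-one paths, which only increases the density. Second, a type-(b) chord between a branch vertex $u$ and an interior vertex of a single path $P_e$ with $u\notin e$ survives whenever $e$ alone is sampled (the branch vertex survives regardless, or with probability close to $1$ if it has many sampled incident edges), so its survival probability is $\Theta(p)$, not $O(p^2)$; with up to $\Theta(rN\cdot n)$ potential such chords this term alone can swamp the $pN$ edges you hope to keep.
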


\newcommand{\kHaussler}{k^{\circ}}

\section{Main lemma}
\label{sec:mainlemma}

In this section we prove our main technical lemma, stated below.
Its output is visualized in \Cref{fig:main_lemma}.

\begin{figure}[t]
  \begin{minipage}[t]{0.52\textwidth}
    \vspace*{0pt}
    \begin{center}
    \includegraphics[scale=1.1]{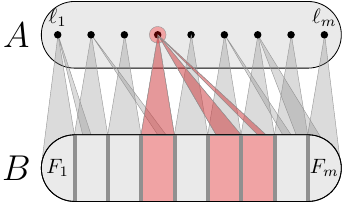}
    \end{center}
  \end{minipage}\hfill
  \begin{minipage}[t]{0.44\textwidth}
    \vspace*{25pt}
    \captionof{figure}{
    Visualization of the output of \Cref{lem:main}.
    We can choose \(k^* = 3\) and verify that, for example, the red vertex is adjacent to at most 3 parts from \(\F\), also marked red.
    }
    \label{fig:main_lemma}
  \end{minipage}
  \vspace{-0.1cm}
\end{figure}

\begin{lemma}
\label{lem:main}
    Let $\C$ be a monadically stable class of bipartite graphs with linear neighborhood complexity. Then there exists a number $k^*$ and an algorithm with running time $\mathcal{O}(|G|^4)$ that for any given graph $G = (A,B,E)$ from $\C$ without isolated vertices computes a partition $\F= \{ P_1,\ldots, P_{|\F|}\}$ of $B$ and vertices $\ell_1,\ldots, \ell_{|\F|}$ belonging to $A$ such that:
    \begin{itemize}[itemsep=1pt]
        \item $P_i \subseteq N_G(\ell_i)$, for each $i\in \{1,\ldots,|\F|\}$; and
        \item every $v \in A$ has a neighbor in at most $k^*$ parts of $\F$, i.e.,
        \(|\{P \in \F\ |\ N_G(v) \cap P \not= \emptyset\}| \le k^*.\)
    \end{itemize}
    Moreover, the partition $\F$ of $B$ is definable in the following sense: 
    There exists a first-order formula $\mathsf{leader}(x,y)$ depending only on $\C$,
    and a graph \(G^+\) obtained from \(G\) by marking vertices with unary predicates such that 
    for any $u \in A,v \in B$
    we have
    $ G^+ \models \mathsf{leader}(u,v)$
    if and only if $u= \ell_i$ and $v \in P_i$ for some $i \in \{1,\ldots, |\F|\}$.
\end{lemma}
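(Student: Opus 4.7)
I will implement the greedy peeling from Section~\ref{sec:overview}. Setting $G_0 := G$, for $i = 1, \ldots, n := |A|$, I first collapse all right-twins in $G_{i-1}$ to obtain $G'_{i-1}$, which remains in $\C$ (by heredity) and has no right-twins. By Lemma~\ref{lem:bip-graph-near-twins} the left side of $G'_{i-1}$ contains a pair of $k$-near-twins; I pick one of them as $a_i$ and set $G_i := G'_{i-1} - a_i$. After $n$ rounds this enumerates $A$ as $a_1, \ldots, a_n$, and I define $P_i := \{v \in B : \max\{j : a_j \in N_G(v)\} = i\}$ with $\ell_i := a_i$, collecting the nonempty $P_i$'s into $\F$. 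The containment $P_i \subseteq N_G(\ell_i)$ is immediate from this definition.

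To bound the number of parts of $\F$ hit by any $a \in A$, I introduce the auxiliary bipartite graph $H$ on $A \cup \F$ with an edge $aP$ iff $N_G(a) \cap P \neq \emptyset$, and aim to prove two properties:
\begin{enumerate}[(i)]
\item the ladder index of $H$ is bounded by a constant $t'$ depending only on $\C$; and
\item for every $i < n$ there exists $j > i$ such that $a_i$ and $a_j$ are $k$-near-twins in $H$, a property that lifts the $k$-near-twinship detected at round $i$ in the twin-collapsed graph $G'_{i-1}$ up to the partition $\F$.
\end{enumerate}
Let $K$ be the graph on $V(H)$ with edges between all $k$-near-twin pairs. Property (ii) makes every $a_i$ $K$-connected to $a_n$, so all of $A$ lies in a single $K$-component; combined with (i), \cite[Lemma~23]{RoseLemma} then yields that any two $K$-connected vertices are $k'$-near-twins in $H$ for some constant $k' = k'(k, t')$. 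Since $a_n$ has left-degree at most $1$ in $H$ (it is adjacent only to $P_n$, if that part is nonempty), every $a \in A$ has left-degree at most $k^* := k' + 1$ in $H$, which is the desired second bullet.

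The running time $\mathcal{O}(|G|^4)$ follows from $n$ rounds of $\mathcal{O}(n^3)$ work each (collapsing twins and scanning all pairs in $A$ for near-twinship). For definability, I mark the leaders with a dedicated unary predicate and properly color the leaders with boundedly many additional predicates so that any two leaders sharing a common $B$-neighbor receive different colors; tagging each $v \in B$ with the color of its own leader then lets $\mathsf{leader}(x,y)$ be expressed as adjacency plus color agreement. I expect the main obstacle to be property (i): since $H$ is a contraction-style derivative of $G$ rather than an induced subgraph, a large half-graph in $H$ does not a priori yield one in $G$. I plan to handle this by exhibiting $H$ as the image of $G$ under a fixed transduction (leveraging the definability just discussed), so that preservation of monadic stability under transductions delivers the ladder-index bound on $H$ essentially for free.
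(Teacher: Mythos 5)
Your algorithm, the definition of $\F$ via last neighbors, the near-twin propagation (ii), the application of \cite[Lemma~23]{RoseLemma}, the degree-$1$ observation for $a_n$, and the final definability construction all coincide with the paper's proof. The gap is in your plan for property (i), the ladder-index bound on $H=G(A,\F)$, and it is a circularity rather than a detail. Your proposed fixed transduction producing $H$ from $G$ relies on properly coloring the leader conflict graph (two leaders conflict if they share a $B$-neighbor, i.e., if one has a neighbor in the other's part) with \emph{boundedly} many colors. The only apparent reason this conflict graph has bounded chromatic number is that it is $2k^*$-degenerate, which follows from the degree bound $k^*$ on $H$ --- precisely the conclusion that properties (i) and (ii) are meant to establish. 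Without that bound you have no control on the number of unary predicates, hence no fixed transduction, and preservation of monadic stability gives you nothing. (Indeed the paper proves definability only \emph{after} the degree bound, in Section~\ref{sec:definability}, explicitly invoking \Cref{lem:bd_degree} to get the $2k^*$-degenerate coloring.)

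The paper closes this gap with a direct combinatorial argument (\Cref{lem:quotient_no_halfgraph}). Supposing $H$ contains an induced half-graph of order $t$, one orders the participating parts by freezing time; the algorithm guarantees $N_G(\ell_j)\cap P_i=\emptyset$ whenever $P_i$ froze before $P_j$. Ramsey applied to the reverse relation yields either a ``clique'' configuration, handled by a bespoke Ramsey-type lemma (\Cref{lem:auxiliary}) that extracts from $G$ itself either a large induced half-graph or an induced $1$-subdivision of a large clique (both contradict monadic stability of $\C$, the latter via \Cref{lem:nosubdivclique}), or an ``independent set'' configuration, where a transduction argument \emph{does} work --- but only because, restricted to the selected parts and leaders, each part is exactly its leader's neighborhood within the marked vertices, so two unary predicates suffice and no coloring (hence no degree bound) is needed. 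You would need to supply an argument of this kind; the transduction shortcut for (i) as stated does not go through.
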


For the rest of the section we fix a monadically stable graph class $\C$ of bipartite graphs with linear neighborhood complexity and work only with graphs from this class. It will be convenient to assume that $\C$ is hereditary; this assumption comes at no cost of generality, since both properties of being monadically stable and having linear neighborhood complexity are preserved if we close our graph class under induced subgraphs.

\subsection{The algorithm}\label{sec:algo-main-lemma}

Before proceeding with the description of the algorithm we establish some terminology and notation.
For every set \(A' \subseteq A\) and collection \(\B\) of disjoint subsets of \(B\), we define a bipartite graph
$G(A', \B)$ with sides $A'$ and $\B$ in which there is an edge between $v \in A'$ and $P \in \B$ if and only if $v$ has a neighbor in $P$.

Let \(A' \subseteq A\).
Assume now that $\B$ is a partition of \(B\) into twin classes in the graph $G[A',B]$ and observe the following useful properties of $G(A',\B)$.
First, the graph $G(A',\B)$ fully encodes the connections between \(A'\) and \(B\) in \(G\).
More precisely,
if there is an edge between $v \in A$ and $P \in \B$, then \(v\) is adjacent to all the vertices of \(P\) in \(G\); and 
if there is no edge between $v \in A$ and $P \in \B$, then \(v\) is adjacent to no vertices of \(P\) in \(G\).
Second, $G(A',\B)$ is (isomorphic to) an induced subgraph of $G$, because it can be obtained
from $G[A',B]$ by keeping exactly one vertex
from each part of $\B$. Lastly, note that there are no $P,P' \in \B$ that are
twins in $G(A',\B)$.
Consequently, by \Cref{lem:bip-graph-near-twins}, for some \(\kHaussler\) depending only on \(\C\),
there exist $u,v \in A'$ that are $\kHaussler$-near-twins in $G(A',\B)$.
Throughout this chapter, we consistently denote the near-twin constant of \Cref{lem:bip-graph-near-twins} by \(\kHaussler\).

\begin{figure}[t]
  \begin{minipage}[t]{0.52\textwidth}
    \vspace*{0pt}
    \begin{center}
    \includegraphics[scale=1.1]{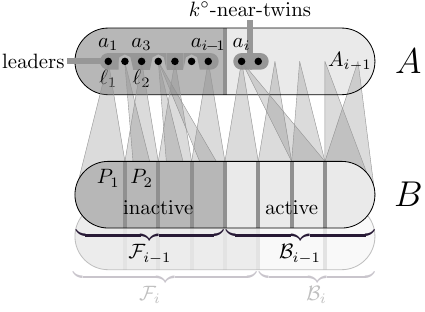}
    \end{center}
  \end{minipage}\hfill
  \begin{minipage}[t]{0.44\textwidth}
    \vspace*{5pt}
    \captionof{figure}{Visualization of step \(i\) of the algorithm.
      As depicted, each set \(P_j\) fully adjacent to \(a_j\), and fully non-adjacent to \(a_{j+1},a_{j+2}\), etc.
      Vertex \(a_i\) and its highlighted partner are \(1\)-near-twins, disagreeing only on the first set (from the left) of \(B_{i-1}\).
      As softly depicted at the bottom, \(\mathcal{B}_i\) is obtained from \(\mathcal{B}_{i-1}\) by removing the first set (as \(a_i\) was its only neighbor), and merging the second and third set (as \(a_i\) was its only distinguishing vertex).}
    \label{fig:algorithm}
  \end{minipage}
\end{figure}

We now proceed with the description of the algorithm, visualized in \Cref{fig:algorithm}.
Let $G = (A,B,E)$ be an input graph from $\C$. 
We set $n\coloneqq|A|$. In the algorithm we will compute a sequence of sets $A = A_0 \supset A_1 \supset \ldots  \supset A_{n} = \emptyset$, meaning that in each step we will remove precisely one vertex from $A_{i-1}$ to obtain $A_{i}$; this vertex will be called $a_{i}$. 

The output of the algorithm will be a partition $\F$ of $B$ and vertices $\ell_1,\ldots, \ell_{|\F|}$ as stipulated by the statement of \Cref{lem:main}, together with an ordering $a_1,\ldots, a_n$ of $A$; this ordering will be useful in the proof that $\F$ and $\ell_1,\ldots, \ell_{|\F|}$ have the required properties.

We initialize $A_0\coloneqq A$ and $\F_0\coloneqq \emptyset$. For $i\coloneqq 1$ to $n$ we do the following:
\begin{enumerate}
    \item 
        We say that a vertex $v$ in $B$ is \emph{active} over a set \(A' \subseteq A\) if $v$ has a neighbor in $A'$, otherwise we say that $v$ is \emph{inactive}.
        Let $\B_{i-1}$ be the partition of active vertices of $G[A_{i-1},B]$ into twin classes with respect to $A_{i-1}$. Find two vertices $u,v \in A_{i-1}$ that are $\kHaussler$-near-twins in $G(A_{i-1},\B_{i-1})$. Set $a_{i}\coloneqq u$ and $A_{i}\coloneqq A_{i-1} \setminus \{u\}$.
    \item If $P \in \B_{i-1}$ is a part such that $a_i$ is the only neighbor of $P$ in $G(A_{i-1},\B_{i-1})$, then set $\F_{i}\coloneqq \F_{i-1} \cup \{P\}$. In this case we say that $P$ is \emph{frozen} at time $i$ and say that $a_i$ is the \emph{leader} of $P$. Otherwise, we set $\F_{i}\coloneqq \F_{i-1}$.
\end{enumerate}
Let $\F\coloneqq \F_n$.
We order $\F = \{P_1,\ldots,P_{|\F|}\}$ ascending by the time of freezing, and
denote the corresponding leaders as $\ell_1,\ldots, \ell_{|\F|}$.

The running time of the algorithm can be bounded by $\Oh(|G|^4)$ -- the algorithm has $n \le |G|$ iterations, and finding a pair of $\kHaussler$-near-twins in each iteration can be done in cubic time. All other operations within any iteration can be easily performed in cubic time. 

We first observe that $\F=\F_n$ is indeed a partition of $B$.

\begin{lemma}\label{lem:partition}
    For every $i \in \{0,\ldots, n\}$, $\F_i$ is a partition of the inactive vertices over $A_i$. In particular, $\F_n$ is a partition of $B$. 
\end{lemma}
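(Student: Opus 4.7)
The plan is to proceed by induction on $i$, showing the stronger invariant that at the end of the $i$-th iteration, $\F_i$ consists of pairwise disjoint subsets of $B$ whose union is exactly the set of vertices of $B$ that are inactive over $A_i$. The conclusion then follows by taking $i = n$: since $A_n = \emptyset$, every vertex of $B$ is vacuously inactive over $A_n$, so $\F_n$ is a partition of $B$.

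For the base case $i = 0$, we have $A_0 = A$ and $\F_0 = \emptyset$. By assumption $G$ has no isolated vertices, so every vertex of $B$ has a neighbor in $A_0 = A$, i.e. every vertex of $B$ is active over $A_0$. Hence the set of inactive vertices is empty, and $\F_0 = \emptyset$ vacuously partitions it.

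For the inductive step, suppose the invariant holds for $i-1$. I would analyze how the set of inactive vertices changes when we pass from $A_{i-1}$ to $A_i = A_{i-1} \setminus \{a_i\}$. A vertex $v \in B$ that was inactive over $A_{i-1}$ remains inactive over $A_i$, and these are covered by $\F_{i-1} \subseteq \F_i$ by the inductive hypothesis. The only vertices that switch from active to inactive are those whose unique neighbor in $A_{i-1}$ was $a_i$. The key observation here is that $\B_{i-1}$ partitions the active vertices into twin classes with respect to $A_{i-1}$, so for each $P \in \B_{i-1}$, either all vertices in $P$ become inactive (when $a_i$ is their unique neighbor in $A_{i-1}$, equivalently $P$ has $a_i$ as its only neighbor in $G(A_{i-1},\B_{i-1})$) or none of them do. This is precisely the criterion used by the algorithm to decide which parts to add to $\F_{i-1}$ when forming $\F_i$. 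Hence the newly inactive vertices are exactly $\bigcup(\F_i \setminus \F_{i-1})$.

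Finally, I would verify disjointness: the sets added to $\F_i$ come from $\B_{i-1}$, which consists of active vertices over $A_{i-1}$, so they are disjoint from $\bigcup \F_{i-1}$ (which by induction consists of inactive vertices over $A_{i-1}$); and distinct elements of $\B_{i-1}$ are disjoint by definition. This proves the invariant at step $i$. I do not expect a substantive obstacle in this proof — it is essentially a bookkeeping argument, and the only point requiring a moment of care is that the twin-class structure of $\B_{i-1}$ forces the ``all or nothing'' behavior of deactivation within each part, which is exactly what makes the algorithm's freezing rule correct.
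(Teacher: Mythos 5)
Your proposal is correct and follows essentially the same route as the paper's proof: induction on $i$, with the base case resting on the absence of isolated vertices and the inductive step tracking exactly which vertices become inactive when $a_i$ is removed. Your write-up is somewhat more detailed than the paper's (which additionally observes that at most one part can be frozen per iteration, since all newly inactive vertices form a single twin class), but the substance is the same.
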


\begin{proof}
    By induction on $i$. For $i=0$ there are no inactive vertices (since we assumed that $G$ does not have isolated vertices) and so the empty partition $\F_0$ has the required property. 
    For $i>0$, if any vertices become inactive in the $i$-th iteration (that is, if there exists $P$ that only has $a_i$ as a neighbor in $G(A_{i-1},\B_{i-1})$), then the part containing all these vertices is added to $\F_i$. Otherwise the set of inactive vertices remains the same when going from $i-1$ to $i$, and then the algorithm uses the same partition.
\end{proof}

\begin{remark}
    We note that there is an alternative, more static definition of  $\F$ from which it is immediately clear that it is a partition of $B$. Let $a_1,\ldots, a_n$ be the ordering of vertices from $A$ as given by the algorithm. Then for each $i \in [n]$, let $S_i$ consist of all vertices $v \in B$ for which $va_i \in E(G)$ but  $va_j \not\in E(G)$ for all $j > i$ (in other words, $i$ is the largest index for which $a_i$ is a neighbor of $v$). We then set $\F\coloneqq \{S_i\ :\ S_i\not=\emptyset\}$ and for each $S_i \in \F$ we say that $a_i$ is its leader.
\end{remark}

Since $\F$ is a partition, and by construction we have for every $P_i \in \F$ that $P_i \subseteq N_G(\ell_i)$, the first assertion of the lemma is proven. The rest of this section is therefore devoted to the proof of the second assertion of the lemma, captured in the following statement.

\begin{lemma}
\label{lem:bd_degree}
    There exists $k^*$ depending only on $\C$ such that in the graph $G(A,\F)$, every vertex of $A$ has degree at most $k^*$.  
\end{lemma}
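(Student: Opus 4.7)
The plan is to establish two structural properties of the bipartite graph $G(A, \F)$---(i) a bounded ladder index, and (ii) a chain of near-twin relations inside $A$---combine them via the transitivity lemma \cite[Lemma~23]{RoseLemma}, and finish by observing that $a_n$ has small degree in $G(A, \F)$.

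First I would prove (ii): for each $i < n$, there exists $j > i$ such that $a_i$ and $a_j$ are $\kHaussler$-near-twins in $G(A, \F)$. At iteration $i$ the algorithm picks $a_i = u$ together with a partner $v = a_j \in A_i$ that are $\kHaussler$-near-twins in $G(A_{i-1}, \B_{i-1})$. Since classes in $\B$ evolve only by merging (never splitting) as the algorithm proceeds, there is a well-defined map $\pi\colon \B_{i-1} \to \F$ sending each class $Q \in \B_{i-1}$ to the unique part of $\F$ that eventually contains $Q$. Moreover, any part $P_k$ frozen at an earlier iteration $k < i$ is non-adjacent in $G$ to both $a_i$ and $a_j$: at iteration $k$ the only neighbor of $P_k$ in $A_{k-1}$ was $\ell_k = a_k$, and $a_i, a_j \in A_{k-1} \setminus \{a_k\}$. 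Consequently, $N_{G(A,\F)}(a_i)$ and $N_{G(A,\F)}(a_j)$ are exactly the images under $\pi$ of their counterparts in $G(A_{i-1}, \B_{i-1})$, and since pushforward by a function cannot enlarge a symmetric difference, property (ii) follows with the same constant $\kHaussler$.

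Next I would prove (i), which I expect to be the main technical obstacle. Suppose for contradiction that $G(A, \F)$ contains an induced half-graph of large order $t$ on vertices $a^{(1)}, \ldots, a^{(t)} \in A$ and $P^{(1)}, \ldots, P^{(t)} \in \F$, with $a^{(s)} \sim P^{(r)}$ iff $s \le r$. For each $r$ pick $b^{(r)} \in P^{(r)}$ adjacent to $a^{(r)}$ in $G$. The induced bipartite subgraph of $G$ on $\{a^{(s)}\}_s$ and $\{b^{(r)}\}_r$ has an edge on the diagonal by choice and no edge below the diagonal (since $a^{(s)}$ has no neighbor in $P^{(r)}$ whenever $s > r$), while above-diagonal adjacencies are a priori unknown. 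A bipartite Ramsey argument passes to a subsequence of logarithmic size on which either all above-diagonal pairs are edges---yielding a half-graph of unbounded order in $G$ that contradicts monadic stability---or all above-diagonal pairs are non-edges, so that $(a^{(s)}, b^{(s)})$ form a perfect matching in $G$. The matching case is the crux of the difficulty: here, for every $s < r$, $P^{(r)}$ contains both a neighbor and a non-neighbor of $a^{(s)}$, which is only possible if $a^{(s)}$ was removed strictly before $P^{(r)}$ was frozen. Exploiting this timing, I would reselect representatives of each $P^{(r)}$ that are adjacent to $a^{(s)}$ for as many $s$'s as possible and apply a second Ramsey step, passing if necessary to the bipartite complement $\widetilde{G}$ (which is monadically stable by our closure assumption on $\C$) to extract a large half-graph in $G$ or in $\widetilde{G}$, again contradicting monadic stability.

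To conclude, let $K$ be the graph on $V(G(A, \F))$ whose edges are pairs of $\kHaussler$-near-twins in $G(A, \F)$. By (ii), iterating yields a path in $K$ from any $a_i$ (with $i < n$) through strictly increasing indices to $a_n$, so every vertex of $A$ lies in the same connected component of $K$ as $a_n$. By (i) and \cite[Lemma~23]{RoseLemma}, there is a constant $k' = k'(\C)$ such that any two vertices in a common component of $K$ are $k'$-near-twins in $G(A, \F)$; in particular every $a_i$ is a $k'$-near-twin of $a_n$. Finally, $a_n$ has degree at most $1$ in $G(A, \F)$: for any part $P_j$ frozen at iteration $j < n$ the only neighbor of $P_j$ in $A_{j-1} \ni a_n$ was $\ell_j \ne a_n$, so $a_n \not\sim P_j$; only the (at most one) part frozen at iteration $n$ can be adjacent to $a_n$. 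Therefore $|N_{G(A,\F)}(a)| \le 1 + k'$ for every $a \in A$, and we may take $k^* \coloneqq k' + 1$, a constant depending only on $\C$.
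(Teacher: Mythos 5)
Your overall architecture is exactly the paper's: prove (ii) a near-twin successor property along the ordering $a_1,\ldots,a_n$, prove (i) that $G(A,\F)$ has bounded ladder index, combine them with \cite[Lemma~23]{RoseLemma}, and finish using the fact that $a_n$ has degree at most $1$ in $G(A,\F)$. Your treatment of (ii) (the pushforward/coarsening argument, together with the observation that already-frozen parts are non-adjacent to both $a_i$ and its partner) and your concluding paragraph are correct and essentially identical to the paper's Lemmas on coarsenings and the final assembly.

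The genuine gap is in (i), precisely in the case you yourself flag as the crux. After your first Ramsey step, the ``all above-diagonal non-edges'' branch leaves you with an induced matching $(a^{(s)},b^{(s)})$ together with the information that each $a^{(s)}$ has \emph{some} neighbor in each $P^{(r)}$ with $r>s$. Your plan to ``reselect representatives and apply a second Ramsey step, passing if necessary to the bipartite complement, to extract a large half-graph in $G$ or in $\widetilde G$'' cannot work as stated: the unavoidable configuration in this branch is not a half-graph but (essentially) an induced $1$-subdivision of a large clique or biclique --- e.g.\ when each $P^{(r)}$ consists of vertices $c^{(s,r)}$ each adjacent to exactly one $a^{(s)}$ plus the leader of $P^{(r)}$. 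Such graphs contain no induced $H_3$ (they are $C_4$-free), and neither do their bipartite complements in any useful way, so no choice of representatives and no Ramsey refinement will produce a large induced half-graph in $G$ or $\widetilde G$. One genuinely needs a second type of obstruction and a transduction argument to rule it out: the paper's \Cref{lem:auxiliary} is a careful double induction (run on the \emph{leaders} $\ell_i$, whose domination of their parts and non-adjacency to earlier-frozen parts is what makes the induction go through) whose conclusion is ``large induced half-graph \emph{or} induced $1$-subdivision of a large clique,'' and \Cref{lem:nosubdivclique} converts the latter into a violation of monadic stability via a transduction. Relatedly, even your ``all above-diagonal edges are non-edges never happens'' hope aside, the complementary Ramsey branch in which $a^{(s)}$'s adjacency to whole parts must be turned back into a half-graph requires a transduction (marking parts and leaders and defining ``has a neighbor in the part''), since the half-graph lives in $G(A,\F)$ rather than in $G$; an induced-subgraph extraction alone does not suffice. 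Without these two ingredients, property (i) --- and hence the lemma --- is not established.
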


\subsection{Proof of \Cref{lem:bd_degree}}\label{sec:lem:bd_degree}

\Cref{lem:bd_degree} is a simple consequence of the three lemmas given below. We postpone the technical proofs of \Cref{lem:near_twins,lem:quotient_no_halfgraph} to \Cref{subsec:near_twins,subsec:no_halfgraph}, respectively, and first show how they imply \Cref{lem:bd_degree}.

Recall that $\kHaussler$ was the constant of \Cref{lem:bip-graph-near-twins} used by the algorithm.
\begin{lemma}\label{lem:near_twins}
	Let $G \in \C$ and let $a_1, \ldots, a_n$ be the ordering of vertices from $A$ computed by the algorithm. Then for every $i < n$ there exists $j > i$ such that $a_j$ is a $\kHaussler$-near-twin of $a_i$ in $G(A,\F)$.
\end{lemma}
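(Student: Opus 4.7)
The plan is to fix any $i < n$ and use the pair of $\kHaussler$-near-twins produced by the algorithm at step $i$ to exhibit the required $a_j$. At step $i$, \Cref{lem:bip-graph-near-twins} applied to $G(A_{i-1}, \B_{i-1})$ yields two vertices $u, v \in A_{i-1}$ that are $\kHaussler$-near-twins in that graph; by construction of the algorithm $u = a_i$. Let $j$ be the index with $v = a_j$; since $v \in A_{i-1} \setminus \{a_i\}$ we have $j > i$. It remains to transfer the near-twin relation from $G(A_{i-1}, \B_{i-1})$ to the graph $G(A, \F)$ appearing in \Cref{lem:bd_degree}.

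The bridge between these two graphs is the static description of $\F$ from the remark after \Cref{lem:partition}: for each $k \in [n]$, set $S_k := \{w \in B : a_k w \in E(G) \text{ and } a_\ell w \notin E(G) \text{ for all } \ell > k\}$, so that $\F = \{S_k : S_k \neq \emptyset\}$. First I would observe that every part $P \in \B_{i-1}$ is contained in a unique $S_k$ with $k \geq i$: the active vertices over $A_{i-1}$ are precisely those whose largest-index neighbor in $A$ has index $\geq i$, and all vertices of a twin class $P \in \B_{i-1}$ share the same neighborhood in $A_{i-1}$ and therefore the same largest-index neighbor. This yields a well-defined projection $\pi : \B_{i-1} \to \{S_k \in \F : k \geq i\}$.

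Next I would compare the neighborhoods of $a_i$ and $a_j$ in the two graphs. For $k < i$, no vertex of $S_k$ has a neighbor of index $\geq i$, so neither $a_i$ nor $a_j$ is adjacent to $S_k$ in $G(A, \F)$ and such parts contribute nothing to the symmetric difference. For $k \geq i$, using $S_k = \bigcup_{P \in \pi^{-1}(S_k)} P$ together with the twin property of parts in $\B_{i-1}$, a vertex $a \in \{a_i, a_j\}$ is adjacent to $S_k$ in $G(A, \F)$ if and only if it is adjacent to some $P \in \pi^{-1}(S_k)$ in $G(A_{i-1}, \B_{i-1})$. Hence whenever $a_i$ and $a_j$ disagree on $S_k$, they must also disagree on at least one $P \in \pi^{-1}(S_k)$. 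Since the preimages $\pi^{-1}(S_k)$ are pairwise disjoint for distinct $S_k$, distinct disagreements in $G(A, \F)$ inject into distinct disagreements in $G(A_{i-1}, \B_{i-1})$, giving $|N_{G(A, \F)}(a_i) \triangle N_{G(A, \F)}(a_j)| \leq |N_{G(A_{i-1}, \B_{i-1})}(a_i) \triangle N_{G(A_{i-1}, \B_{i-1})}(a_j)| \leq \kHaussler$, as required.

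I do not expect a serious obstacle; the argument is essentially a bookkeeping exercise relating the dynamic partition $\B_{i-1}$ chosen at step $i$ to the final partition $\F$ via the static description of $\F$. The only point requiring some care is verifying that a disagreement of $a_i$ and $a_j$ on $S_k \in \F$ is faithfully witnessed by a disagreement on some $P \in \pi^{-1}(S_k)$, which is a direct consequence of the twin property of the parts of $\B_{i-1}$.
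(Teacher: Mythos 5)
Your proof is correct and is essentially the paper's argument in compressed form: the paper identifies the same pair $a_i=u$, $a_j=v$ and then transfers the near-twin property from $G(A_{i-1},\B_{i-1})$ to $G(A,\F)$ by iterating through the intermediate partitions $\PP_{i-1},\PP_i,\ldots,\PP_n=\F$ via \Cref{lem:coarsening,lem:coarsening_twins}, whereas you perform the coarsening comparison in a single step against the static description of $\F$ (your injection of disagreements is exactly the content of \Cref{lem:coarsening_twins}, and your handling of the parts $S_k$ with $k<i$ matches the paper's observation that $a_i,a_j$ are non-adjacent to all parts of $\F_{i-1}$). The only point you take on faith is the equivalence of the algorithmic and static definitions of $\F$, which the paper itself asserts in a remark without proof, so this is not a gap.
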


\begin{lemma}\label{lem:quotient_no_halfgraph}
	There exists $t\in \N$ such that for every $G \in \C$ with $G=(A,B,E)$, the graph $G(A,\F)$ does not contain a half-graph of order $t$ as an induced subgraph.
\end{lemma}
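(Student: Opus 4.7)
The plan is to argue by contradiction: assume $G(A,\F)$ contains $H_t$ as an induced subgraph for $t$ much larger than the ladder index of $\C$, and derive a contradiction by exhibiting a large induced half-graph in $G$ itself, contradicting the monadic stability of $\C$.

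Fix an induced $H_t$ in $G(A,\F)$ with vertices $v_1,\dots,v_t\in A$ on one side and parts $P_1,\dots,P_t\in\F$ on the other, arranged so that $v_iP_j$ is an edge of $G(A,\F)$ iff $i\le j$. For each $j$, let $\ell_j$ denote the leader of $P_j$, and pick a diagonal witness $w_j\in P_j\cap N_G(v_j)$, which exists since $v_jP_j$ is an edge. Two edge patterns in $G$ on $\{v_i\}\cup\{w_j\}$ are immediately forced: the diagonal edges $v_jw_j\in E(G)$, and the non-edges $v_iw_j\notin E(G)$ for $i>j$ (since $v_i$ has no neighbor in $P_j$). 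The unconstrained ``upper-triangular'' edges $v_iw_j$ with $i<j$ are precisely what decides whether this configuration lifts the half-graph from $G(A,\F)$ to $G$.

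The first step is a Ramsey-type argument: the $2$-coloring of pairs $\{i,j\}$ with $i<j$ by whether $v_iw_j\in E(G)$ produces a monochromatic subset $T\subseteq[t]$ of size $\Omega(\log t)$. In the ``edge'' case, the subgraph induced in $G$ on $\{v_i\}_{i\in T}$ and $\{w_j\}_{j\in T}$ is exactly $H_{|T|}$, which contradicts monadic stability once $|T|$ exceeds the ladder index of $\C$. The main obstacle, and the hard part of the proof, is the complementary ``non-edge'' case, in which $\{v_i,w_j\}_{i,j\in T}$ induce only a perfect matching in $G$.

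To resolve the matching case, the plan is to bring in the leaders and alternative witnesses. For each $j\in T$ and $i<j$ there still exists $w_{ij}\in P_j\cap N_G(v_i)$ (necessarily distinct from $w_j$), and moreover $\ell_jw_{ij}\in E(G)$ since $w_{ij}\in P_j\subseteq N_G(\ell_j)$. Running a further Ramsey-type argument on the edges $\ell_iw_j$ for $i\neq j$ yields a refined $T'\subseteq T$ that falls into one of four subcases; in two of them (when the pattern $\ell_iw_j$ depends asymmetrically on the order of $i$ and $j$) one immediately reads off a large induced half-graph in $G$ between the $\ell_i$'s and the $w_j$'s, while in the ``universal'' and ``matching'' subcases one iterates the witness-refinement step, now using the alternative witnesses $w_{ij}$, and exploits linear neighborhood complexity to bound the number of distinct traces $N_G(v)\cap P_j$ that appear, forcing uniformity among the $v_i$'s relative to $P_j$. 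A bounded number of such rounds (depending only on $\C$) either produces a half-graph in $G$ directly or concentrates enough structural information to conclude. The principal technical hurdle will be the last case-distinction: ensuring that neither the ``complete bipartite'' nor the ``double matching'' configuration can persist across sufficiently many Ramsey rounds without already containing a large induced half-graph in $G$.
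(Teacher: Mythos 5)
Your proposal is a plan rather than a proof, and the part you defer --- the ``matching'' case --- is precisely where the real difficulty lies; moreover, the overall strategy of always producing a large \emph{induced} half-graph in $G$ cannot succeed. The configurations that survive your Ramsey steps need not contain any large induced half-graph of $G$: for instance, the vertices $v_i$ together with distinct witnesses $w_{ij}\in P_j\cap N_G(v_i)$ can induce in $G$ a $1$-subdivision of a clique, which is $C_4$-free and hence contains no induced $H_3$ at all, yet is perfectly consistent with $G(A,\F)$ containing a huge half-graph. The paper's proof confronts exactly these two residual configurations and resolves neither by finding an induced half-graph: the ``clique'' outcome of its Ramsey step feeds into \cref{lem:auxiliary}, whose alternative conclusion is an induced $1$-subdivision of $K_b$, which contradicts monadic stability only via the \emph{transduction} of \cref{lem:nosubdivclique}; and the ``independent set'' outcome is handled by a transduction that definably collapses each part $P_{i_j}$ onto its leader's neighborhood, recovering a half-graph in $\psi(G^*)$ rather than in $G$. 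Since monadic stability forbids transducing half-graphs, not merely containing them as induced subgraphs, these transduction arguments are not an optional convenience --- they are forced by the combinatorics, and your plan has no substitute for them.

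A second, related gap: you never use the freezing order produced by the algorithm. The paper's argument crucially exploits that if the parts are listed by freezing time, then $N_G(\ell_j)\cap P_i=\emptyset$ whenever $P_i$ froze before $P_j$ (because $\ell_j$ was still active when $P_i$ became isolated). This triangular structure on the \emph{leaders} (not on arbitrary vertices $v_i$ of the half-graph) is what makes the Ramsey dichotomy on the pairs $(\ell_i,P_j)$ well-posed and what the hypotheses of \cref{lem:auxiliary} encode. Your iterated witness-refinement with ``four subcases'' and an appeal to linear neighborhood complexity is not worked out, and I see no way to close it without importing both the freezing-order structure and the transduction-based obstructions; as written, the proposal does not constitute a proof.
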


The third lemma is due to Gajarsk\'y and McCarty~\cite{RoseLemma}. To state it, we need some terminology. For every graph $G$ and a non-negative integer $k$, we define the \emph{$k$-near-twin graph} of $G$ to be the graph on vertex set $V(G)$ in which there is an edge between vertices $u,v$ if and only if $u,v$ are $k$-near-twins in $G$.
\begin{lemma}[{\cite[Lemma 23]{RoseLemma}}]
\label{lem:Rose_lemma}
There is a function $h\colon \N^2 \to \N$
so that for any $k,t \in \N$, if $G$ is a graph with no half-graph of order $t$ as a semi-induced
\footnote{The precise definition of semi-induced subgraph is not important for us, but can be found for example in~\cite{GajarskyPT22}. We just note that for every $t \ge 1$ and every bipartite graph $G$ the following is true: $G$ contains $H_t$ as a semi-induced subgraph if and only if $G$ contains $H_t$ as an induced subgraph.}

subgraph, and $u$ and $v$ are vertices in the same component of the $k$-near-twin graph of $G$, then $u$ and $v$ are $h(k,t)$-near-twins in $G$.
\end{lemma}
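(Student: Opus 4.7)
The plan is to prove the contrapositive: if $|N(u)\triangle N(v)|$ is too large, then $G$ contains a semi-induced half-graph of order~$t$, contradicting the hypothesis.

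First, I would take a shortest path $u = w_0, w_1, \dots, w_L = v$ in the $k$-near-twin graph of $G$. Since consecutive path vertices are $k$-near-twins, the iterated triangle inequality for symmetric difference gives $|N(u)\triangle N(v)| \le kL$. Hence the task reduces to bounding $L$ by a function of $k$ and $t$: it suffices to show that a sufficiently long shortest path in the $k$-near-twin graph encodes a semi-induced half-graph of order~$t$.

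Next, I would extract a candidate half-graph from such a path. For each $x \in N(v)\setminus N(u)$, set $\phi(x) \coloneqq \min\{i : x \in N(w_i)\}$, which is well-defined and lies in $\{1,\dots,L\}$. Because each step of the path can add at most $k$ new neighbours overall, $\phi$ is at most $k$-to-one, so it attains at least $|N(v)\setminus N(u)|/k$ distinct values. Picking representatives $x_1,\dots,x_m$ with $\phi(x_1) < \cdots < \phi(x_m)$ and setting $y_j \coloneqq w_{\phi(x_j)}$, the adjacency matrix $M_{ij} \coloneqq \mathbf{1}[x_i \in N(y_j)]$ satisfies $M_{ii}=1$ (by the definition of $\phi$), $M_{ij}=0$ for $i>j$ (since $\phi(x_j)<\phi(x_i)$ and $x_i$ has no neighbour in the path strictly before $\phi(x_i)$), and the entries with $i<j$ are indeterminate (since $x_i$ may have been flipped back to non-adjacency after position $\phi(x_i)$).

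The third step is a Ramsey cleanup on the indeterminate above-diagonal entries. Colour each pair $(i,j)$ with $i<j$ according to whether $M_{ij}=1$ or $M_{ij}=0$. For $m\ge R(t,t)$ there is a monochromatic subset $I\subseteq\{1,\dots,m\}$ of size $t$. In the ``all-1'' case, $M|_I$ is exactly the adjacency matrix of a half-graph of order~$t$; after discarding a constant number of representatives to ensure the sets $\{x_i\}$ and $\{y_j\}$ are vertex-disjoint (each path vertex generates at most one conflict), this produces the forbidden semi-induced half-graph. The analogous argument with the path reversed handles $N(u)\setminus N(v)$.

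The main obstacle is the ``all-0'' case, where $M|_I$ is only diagonal — not a half-graph. In this case each surviving $x_{i_a}$ with $a<t$ was flipped back to non-adjacency between positions $\phi(x_{i_a})$ and $\phi(x_{i_t})$, so its ``last non-adjacent'' index $\lambda(x_{i_a}) \coloneqq \max\{i : x_{i_a}\notin N(w_i)\}$ strictly exceeds $\phi(x_{i_a})$. The plan is to iterate the same extraction using $\lambda$-values (on an appropriate sub-path), which either outputs a half-graph or certifies that every surviving vertex has been flipped one additional time. Since the total flip budget satisfies $\sum_{x} (\text{number of flips of }x) \le kL$, after $O(t)$ such iterations the flip demand exceeds the budget, forcing a half-graph to appear — yielding the required bound on $L$ and hence on $|N(u)\triangle N(v)|$. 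The technical heart is arranging the iteration so that the nested Ramsey blow-ups and vertex-disjointness cleanups compose into a concrete function $h(k,t)$; a cleaner (though less elementary) alternative would invoke a stability-theoretic bound on the number of distinct neighbourhood traces realisable on a fixed sequence in a graph of bounded ladder index, which would bypass Ramsey altogether.
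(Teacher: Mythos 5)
A preliminary remark: the paper does not prove this lemma at all — it is imported verbatim from Gajarský and McCarty (Lemma 23 of the cited work) and used as a black box, with the appendix only noting that an inspection of that external proof gives $h(k,t)\le f(t)\cdot(k+1)$ — so your proposal has to stand entirely on its own.

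It does not, because the opening reduction is false. You claim that, along a shortest path $u=w_0,\ldots,w_L=v$ in the $k$-near-twin graph, it suffices to bound $L$ by a function of $k$ and $t$, i.e.\ that a sufficiently long shortest path already encodes a semi-induced half-graph of order $t$. Counterexample: let $G=C_n$ be a long odd cycle and take $k=2$, $t=3$. Since $G$ has maximum degree $2$, it contains no semi-induced half-graph of order $3$. Two vertices of $C_n$ are $2$-near-twins exactly when they are at cyclic distance $2$ (their symmetric difference is then $2$, while it equals $4$ at distance $1$ or at distance at least $3$), so the $2$-near-twin graph is again an $n$-cycle and contains geodesics of length $\lfloor n/2\rfloor$, unbounded in terms of $k$ and $t$; the lemma's conclusion is of course fine there, since every pair of vertices is a $4$-near-twin. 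Hence no function $h(k,t)$ bounds $L$, the inequality $|N(u)\triangle N(v)|\le kL$ is vacuous, and — crucially — the ``flip budget'' $kL$ against which your final iteration is charged is not a quantity controlled by $k$ and $t$. Any correct proof must bound $|N(u)\triangle N(v)|$ directly while allowing $L$ to be arbitrarily large.

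Your middle extraction does aim at the right quantity: showing that $|N(v)\setminus N(u)|/k$ distinct first-arrival indices force a forbidden configuration would bound the symmetric difference by $k$ times a threshold, which is the right shape of conclusion. But the argument is not closed there either. The Ramsey ``all-$0$'' outcome yields only a semi-induced matching between the chosen $x_i$ and path vertices, which contains no half-graph and hence produces no contradiction; this is precisely the hard case, and your proposed escape (iterate via last-non-adjacency indices and exhaust the budget $kL$) inherits the defect above, besides being quantitatively off: $r$ rounds charge only on the order of $rt$ flips, so exhausting $kL$ would need roughly $kL/t$ rounds, each shrinking the surviving set by a Ramsey factor — no bound depending only on $k,t$ survives this, and $L$ is unbounded anyway. (The symmetric construction with $\lambda(x)=\max\{i:\ x\notin N(w_i)\}$ has the dual defect: its bad homogeneous case is a semi-induced co-matching, which likewise contains no half-graph; so neither index family alone, nor a budget count, resolves the ambiguous entries.) Finally, note that the actual result is linear in $k$, namely $h(k,t)\le f(t)\cdot(k+1)$, which a nested-Ramsey-over-a-budget scheme of the kind you sketch could not reproduce; this is further evidence that the intended argument is structurally different from your outline.
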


\begin{proof}[Proof of \Cref{lem:bd_degree}]
	Let $H$ be the $\kHaussler$-near-twin graph of $G(A,\F)$. By \Cref{lem:near_twins}, every vertex $a_i$ with $i<n$ is adjacent to some $a_j$ with $j > i$ in $H$. Consequently, from every $a_i$ there is a path to $a_n$ in $H$, and so all the vertices from $A$ are in the same connected component $C$ of $H$. Since by \Cref{lem:quotient_no_halfgraph} the graph $G(A,\F)$ does not contain a half-graph of order $t$ as an induced subgraph, by \Cref{lem:Rose_lemma} all the vertices in $C$ are pairwise $h(\kHaussler,t)$-near-twins in $G(A,\F)$. Since $a_n$ has degree $1$ in $G(A,\F)$ (as one easily checks from the construction of $G(A,\F)$), each vertex in $C$ has degree at most $h(\kHaussler,t)+1$ in $G(A,\F)$. Since each vertex in $A$ is in $C$, the statement of the lemma follows.
\end{proof}

\subsubsection{Proof of \Cref{lem:near_twins}}
\label{subsec:near_twins} 

Recall that the algorithm computes sets $A_0 \supset A_1 \supset \ldots \supset A_n$ and for each $i$ we considered the partition $\B_i$ of the active vertices over $A_i$.
By \Cref{lem:partition} each $\F_i$ is a partition of the inactive vertices over $A_i$.
We can therefore define for each $i$ a partition $\PP_i$ of $B$ by setting $\PP_i\coloneqq \B_i \cup \F_i$. Then we have the following.
\begin{lemma}\label{lem:coarsening}
    For every $i \in \{1,\ldots, n\}$ we have that $\PP_{i}$ is a coarsening of $\PP_{i-1}$ . 
\end{lemma}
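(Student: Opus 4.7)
The plan is to unpack the definitions and show directly that every part of $\PP_{i-1}$ sits inside a single part of $\PP_i$. Fix an index $i \in \{1,\ldots,n\}$ and a part $X \in \PP_{i-1} = \B_{i-1}\cup \F_{i-1}$. The argument splits according to whether $X$ belongs to $\F_{i-1}$ or to $\B_{i-1}$.

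If $X \in \F_{i-1}$, then by construction $\F_{i-1} \subseteq \F_i$, so $X \in \F_i \subseteq \PP_i$ and we are done with $X$ itself as the enclosing part. So assume $X \in \B_{i-1}$. Then $X$ is a twin class of active vertices in $G[A_{i-1},B]$, meaning all vertices of $X$ share a common neighborhood $N_X \subseteq A_{i-1}$ with $N_X \neq \emptyset$. Here I would distinguish two subcases based on how $N_X$ relates to the removed vertex $a_i$.

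In the first subcase, $N_X = \{a_i\}$. Then $X$ is exactly the part of $\B_{i-1}$ whose unique neighbor in $G(A_{i-1},\B_{i-1})$ is $a_i$, so the algorithm freezes it at time $i$ and adds it to $\F_i$. Thus $X \in \F_i \subseteq \PP_i$. In the second subcase, $N_X \setminus \{a_i\} \neq \emptyset$. Then every vertex of $X$ retains at least one neighbor in $A_i = A_{i-1}\setminus\{a_i\}$, so $X$ remains active over $A_i$, and moreover all vertices of $X$ have the identical neighborhood $N_X \setminus \{a_i\}$ in $A_i$. Consequently $X$ is contained in a single twin class $Y \in \B_i$, as desired. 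A minor preliminary remark to include is that at most one part of $\B_{i-1}$ can have $\{a_i\}$ as its neighborhood in $G(A_{i-1},\B_{i-1})$, since distinct parts of $\B_{i-1}$ are distinct twin classes and hence have distinct neighborhoods in $A_{i-1}$; this justifies the well-definedness of the freezing step used in the first subcase.

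There is no real obstacle here — the statement is essentially a bookkeeping consequence of two monotone phenomena: (i) restricting the ground set from $A_{i-1}$ to $A_i$ can only merge, never split, twin classes; and (ii) once a part is frozen, it is never touched again. The only thing to be careful about is to verify that the case analysis exhausts $\PP_{i-1}$, which follows from \Cref{lem:partition} since $\PP_{i-1}$ is the disjoint union of $\B_{i-1}$ and $\F_{i-1}$.
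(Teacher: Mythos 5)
Your proof is correct and follows essentially the same route as the paper's: the same three-way case split ($X$ already frozen, $X$ frozen at time $i$ because its neighborhood is exactly $\{a_i\}$, or $X$ absorbed into the twin class of $N_X\setminus\{a_i\}$ over $A_i$), with the same conclusions in each case. The extra remark about uniqueness of the frozen part is a harmless addition not needed for the statement.
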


\begin{proof}
    Let $P \in \PP_{i-1}$ be arbitrary. We will argue that either $P \in \F_{i}$ or $P$ is a subset of some $P' \in \B_i$. If $P \in \F_{i-1}$, then clearly  $P \in \F_{i}$. If $P \in \B_{i-1}$ and $a_i$ is the only neighbor of $P$ in $G(A_{i-1},\B_{i-1})$, then again $P \in \F_{i}$. Finally, if $P \in \B_{i-1}$ and the neighborhood of $P$ in $G(A_{i-1},\B_{i-1})$ is a non-empty set $S$ different from $\{a_i\}$, then let $P'\coloneqq \{v \in B\ |\ N_G(v)\cap A_i = S \setminus \{a_i\}\}$. Then $P \subseteq P'$ and $P' \in B_i$, as required.
\end{proof}

We will also use the following lemma.
\begin{lemma}\label{lem:coarsening_twins}
    Let $G=(A,B,E)$ be a bipartite graph and let $\PP$, $\PP'$ be partitions of $B$ such that $\PP'$ is a coarsening of~$\PP$. Assume that $u,v \in A$ are $k$-near-twins in $G(A,\PP)$. Then $u,v$ are $k$-near-twins in $G(A,\PP')$. 
\end{lemma}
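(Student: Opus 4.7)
The plan is to exhibit an injection from $N_{G(A,\PP')}(u) \triangle N_{G(A,\PP')}(v)$ into $N_{G(A,\PP)}(u) \triangle N_{G(A,\PP)}(v)$. Since $\PP'$ coarsens $\PP$, every $P \in \PP$ is contained in a unique $\pi(P) \in \PP'$, and the fibers of $\pi : \PP \to \PP'$ are pairwise disjoint. The key observation is that adjacency in $G(A,\PP')$ can always be traced back to adjacency in $G(A,\PP)$: a vertex $w \in A$ is adjacent to $P' \in \PP'$ in $G(A,\PP')$ if and only if $w$ has a neighbor in some $P \in \pi^{-1}(P')$, which is the same as saying $w$ is adjacent to some $P \in \pi^{-1}(P')$ in $G(A,\PP)$.

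Concretely, I would pick any $P' \in N_{G(A,\PP')}(u) \triangle N_{G(A,\PP')}(v)$ and assume without loss of generality that $u$ has a neighbor in $P'$ while $v$ does not. By the observation above, there exists $P \in \pi^{-1}(P')$ such that $u$ has a neighbor in $P$; fix one such $P$ and call it $f(P')$. Since $v$ has no neighbor in the whole of $P' = \bigcup \pi^{-1}(P')$, $v$ has no neighbor in $P \subseteq P'$ either, so $f(P') \in N_{G(A,\PP)}(u) \setminus N_{G(A,\PP)}(v) \subseteq N_{G(A,\PP)}(u) \triangle N_{G(A,\PP)}(v)$. Because $\pi(f(P')) = P'$ and the fibers of $\pi$ are disjoint, the map $f$ is injective. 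Hence
\[
    |N_{G(A,\PP')}(u) \triangle N_{G(A,\PP')}(v)| \;\le\; |N_{G(A,\PP)}(u) \triangle N_{G(A,\PP)}(v)| \;\le\; k,
\]
which yields the claim. There is no real obstacle here: the lemma is essentially a sanity check that coarsening a partition can only merge witnesses of the symmetric difference, never create new ones.
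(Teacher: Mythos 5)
Your proof is correct and is essentially the paper's argument: the paper also observes that every part of $\PP'$ on which $u$ and $v$ disagree must contain a part of $\PP$ on which they disagree, and counts; you merely make the resulting injection explicit. Nothing further is needed.
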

\begin{proof}
    Let $P$ be any part of $\PP'$ such that the adjacency of $u$ and $v$ to $P$ differs in $G(A,\PP')$. Then there must exist a part $Q \in \PP$ such that $Q\subseteq P$ such that the adjacency of $u$ and $v$ to $Q$ differs in $G(A,\PP)$. Since there are at most $k$ such parts in $\PP$, there can be at most $k$ distinct parts $P \in \PP'$ as above.
\end{proof}

\begin{proof}[Proof of \Cref{lem:near_twins}]
Fix any $i \in \{1,\ldots, n-1\}$. We know that $a_i = u$, where $u,v$ are $\kHaussler$-near-twins of $G(A_{i-1},\B_{i-1})$ found in the $i$-th iteration of the algorithm. In this iteration $a_i$ was removed from the set $A_{i-1}$ to obtain $A_i$, but $v$ remained in the set $A_i$. Consequently, we know that  $v$ was removed from $G$ in a later iteration, that is, as $a_j$ for some $j > i$. 

Since $a_i, a_j$ are $\kHaussler$-near-twins of $G(A_{i-1},\B_{i-1})$,
and \(a_i,a_j\) are both non-adjacent to all the parts from $\F_{i-1}$ (as these parts consist of isolated vertices in $G[A_{i-1},B]$),
the two vertices are also $\kHaussler$-near-twins in $G(A_{i-1},\PP_{i-1})$.
Then $a_i, a_j$ are also $\kHaussler$-near-twins in $G(A,\PP_{i-1})$, since replacing $A_{i-1}$ with $A$ does not affect the adjacency of $a_i, a_j$ to the right side of the graph. By combining and repeatedly applying Lemmas~\ref{lem:coarsening} and~\ref{lem:coarsening_twins} we then get that $a_i$ and $a_j$ are $\kHaussler$-near-twins also in $G(A, \PP_{i}), G(A, \PP_{i+1}),\ldots, G(A, \PP_{n})$. Since $\PP_n = \B_n \cup \F_n$ and $\B_n = \emptyset$, we have that $\PP_n = \F_n$. Then, since $\F_n = \F$, the result follows. 
\end{proof}

\subsubsection{Proof of \Cref{lem:quotient_no_halfgraph}}
\label{subsec:no_halfgraph}

Before proving \Cref{lem:quotient_no_halfgraph},
we note that induced $1$-subdivisions of complete graphs are obstructions both for monadic stability and for linear neighborhood complexity.

\begin{lemma}\label{lem:nosubdivclique}
	Let $\Cc$ be a graph class such for every $t\in \N$, the $1$-subdivision of $K_t$ is an induced subgraph of a graph from $\Cc$. Then $\C$ is not monadically stable.

\end{lemma}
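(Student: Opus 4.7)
The plan is to exhibit a single fixed transduction $\T$ showing that every half-graph $H_n$ lies in $\T(\C)$, which by definition means $\C$ is not monadically stable. The transduction I will use has one copy, two unary predicates $P$ (``principal'') and $E$ (``edge-marker''), and adjacency formula
\[
        \psi(x,y) \;\equiv\; P(x)\wedge P(y)\wedge x\neq y\wedge \exists z\,\bigl(E(z)\wedge \mathsf{adj}(x,z)\wedge \mathsf{adj}(y,z)\bigr),
\]
followed by taking the induced subgraph on the $P$-marked vertices. The intuition is that the principal vertices of a $1$-subdivision of a clique will become the output vertex set, and each subdivision vertex acts as an independently switchable ``edge'' of the output: marking it with $E$ turns on the edge between its two endpoints.

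To realise a specific $H_n$, I would take $t=2n$, pick a graph $G\in\C$ in which the $1$-subdivision of $K_{2n}$ sits as an induced subgraph $S$ with principal vertices $p_1,\ldots,p_{2n}$ and subdivision vertices $q_{a,b}$ (for $1\le a<b\le 2n$), and set up the colouring as follows: mark each $p_i$ with $P$; mark exactly those $q_{i,n+j}$ with $E$ for which $1\le i\le j\le n$; leave everything else (the remaining subdivision vertices of $S$, and all vertices of $G$ outside $V(S)$) unmarked.

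The key step to verify is that applying $\psi$ and then restricting to the $P$-vertices yields $H_n$ under the identification $u_i\leftrightarrow p_i$ and $v_j\leftrightarrow p_{n+j}$ for $i,j\in[n]$. For two distinct principal vertices $p_a,p_b$, the existential in $\psi$ must be witnessed by an $E$-marked vertex $z=q_{i,n+j}$ (with $i\le j$) that is adjacent to both. The delicate point — essentially the only thing that could go wrong — is that $S$ is an \emph{induced} subgraph of $G$, so within the set $\{p_1,\ldots,p_{2n}\}$ the only neighbours of $q_{i,n+j}$ are $p_i$ and $p_{n+j}$; any further incidences of $q_{i,n+j}$ in $G$ land outside $V(S)$ on vertices that are unmarked and hence invisible to $\psi$ in the relevant sense. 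Consequently $\psi(p_a,p_b)$ holds iff $\{a,b\}=\{i,n+j\}$ for some $1\le i\le j\le n$, which under the identification above is exactly the edge relation $u_iv_j$ with $i\le j$ defining $H_n$.

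Since $\T$ depends only on $\C$ while $n$ is arbitrary, we obtain $\{H_n:n\in\N\}\subseteq \T(\C)$, so the class of half-graphs is transducible from $\C$ and hence $\C$ is not monadically stable. I do not expect any further obstacles; the argument reduces to the marking trick above, and the inducedness of $S$ inside $G$ is the only subtle point to check carefully.
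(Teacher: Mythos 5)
Your proposal is correct and follows essentially the same route as the paper: mark the principal vertices with one predicate, mark exactly the subdivision vertices corresponding to the pairs $u_iv_j$ with $i\le j$ with another, and define adjacency via a commonly-adjacent marked witness, restricting to the principal vertices. Your explicit check that inducedness of the subdivision inside $G$ prevents spurious edges is the same (implicit) point the paper relies on.
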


\begin{proof}
    To see that $\C$ is not monadically stable, note that there is a fixed copyless transduction that from a $1$-subdivision $G$ of $K_{2t}$ transduces the half-graph $H_t$ as follows: Consider the $2t$ principal vertices of~$G$, denote them $u_1,\ldots, u_t, v_1,\ldots, v_t$ and mark them with a unary predicate $Q$.  Then mark, for every $i \le j$, the subdivision vertex on the path between $u_i$ and $v_j$ by a unary predicate $P$. Call the resulting colored graph $G^*$. The formula $\psi$ of the transduction then makes an edge between any two vertices $x$ and $y$ if they are both marked with $Q$ and they have a common neighbor that is marked with $P$. This creates edges precisely between vertices $u_iv_j$ with $i\le j$. The resulting graph is then the subgraph of $\psi(G^*)$ induced by the vertices that were marked with predicate $Q$.

\end{proof}

In the proof of \Cref{lem:quotient_no_halfgraph} we will need the following auxiliary Ramsey-like lemma.

\begin{lemma}\label{lem:auxiliary}
    There exists a function $\beta\colon \N \times \N \to \N$ with the following property. Let $G(A,B,E)$ be a bipartite graph with $A = \{\ell_1,\ldots, \ell_m\}$.  Let $\PP= \{P_1, \ldots, P_m\}$ be a partition of $B$ such that
    \begin{itemize}[nosep]
        \item for every $i\in [m]$ we have that $P_i \subseteq N_G(\ell_i)$, and
        \item for every $i\in [m]$ we have that $N_G(\ell_i) \cap P_j = \emptyset$ for all $j < i$ and $N_G(\ell_i) \cap P_j \not= \emptyset$ for all $j > i$.
    \end{itemize}
    Suppose further that $m>\beta(a,b)$ for some \(a,b \in \N\). Then $G$ contains $H_a$ (the half-graph of order \(a\)) as an induced subgraph or a $1$-subdivision of $K_b$ as an induced subgraph in which all principal vertices are in $A$. 
\end{lemma}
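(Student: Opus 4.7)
The plan is to run a 3-uniform Ramsey argument on $[m]$. For each pair $i<j$, fix an arbitrary witness $v_{i,j}\in N_G(\ell_i)\cap P_j$, which exists by the second structural property of $\PP$. For each triple $a<b<c$ in $[m]$ record two bits: $C_1(a,b,c)\coloneqq 1$ iff $\ell_a\sim v_{b,c}$, and $C_2(a,b,c)\coloneqq 1$ iff $\ell_b\sim v_{a,c}$. Together $(C_1,C_2)$ induces a 4-colouring of $\binom{[m]}{3}$. We let $\beta(a,b)$ be any 3-uniform 4-colour Ramsey number forcing a monochromatic subset $S\subseteq[m]$ of size $N\coloneqq \max(2a,b)$; such a $\beta$ exists by the hypergraph Ramsey theorem. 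We then split into four cases according to the colour of $S$ and build the desired induced subgraph directly from $S$ and the witnesses.

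In case $(0,0)$, I would construct the induced $1$-subdivision of $K_b$ as follows. Pick any $b$ indices $i_1<\dots<i_b$ from $S$, take $\ell_{i_1},\dots,\ell_{i_b}$ as the principal vertices (which lie in $A$), and use $v_{i_s,i_t}$ (for $s<t$) as the subdivision vertex of the pair $(\ell_{i_s},\ell_{i_t})$. Adjacency between $v_{i_s,i_t}$ and a principal $\ell_{i_r}$ splits into four subcases: $r\in\{s,t\}$ is an edge by construction (using $P_{i_t}\subseteq N_G(\ell_{i_t})$ for $r=t$); $r>t$ is a non-edge because $v_{i_s,i_t}\in P_{i_t}$ and $\ell_{i_r}$ has no neighbour there; $s<r<t$ is a non-edge by $C_2(i_s,i_r,i_t)=0$; and $r<s$ is a non-edge by $C_1(i_r,i_s,i_t)=0$. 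Distinctness of the subdivision vertices is automatic: different values of $t$ place them in disjoint parts $P_{i_t}$, and for fixed $t$ different values of $s$ give different adjacencies to the principals. Bipartiteness rules out edges inside either colour class, so the induced subgraph is precisely the desired subdivision.

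The three remaining cases each produce an induced half-graph of order $a$. In case $(1,0)$, fix the maximum element $c$ of $S$ and write $S\setminus\{c\}=\{a_1<a_2<\dots\}$; the same fourfold subcase analysis yields $v_{a_i,c}\sim\ell_{a_j}$ iff $j\le i$, which is an induced half-graph of order $|S|-1\ge a$. Case $(0,1)$ is symmetric, with the smallest element of $S$ in the distinguished role. In case $(1,1)$ one computes that $v_{s_i,s_j}\sim\ell_{s_k}$ depends only on whether $k\le j$; pairing up $S$ into consecutive pairs and using the witnesses $v_{s_{2i-1},s_{2i}}$ together with the vertices $\ell_{s_{2i}}$ exhibits a half-graph of order $\lfloor N/2\rfloor\ge a$. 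Distinctness of the produced vertices is straightforward since different witnesses lie in different parts $P_j$ (or, in case $(1,0)$, are separated by their adjacencies to the principals).

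The main obstacle is the verification in case $(0,0)$: the Ramsey colouring was designed precisely so that $C_1$ handles the subcase $r<s$ and $C_2$ handles the subcase $s<r<t$, while the two structural properties of $\PP$ handle $r>t$ and $r\in\{s,t\}$ for free. Once the adjacency of each witness $v_{i_s,i_t}$ to the principals is pinned down, distinctness and the absence of extra edges follow from bipartiteness and the disjointness of the parts $P_j$. The three half-graph cases then reduce to the same kind of fourfold analysis applied to a single fixed outer coordinate.
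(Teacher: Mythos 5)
Your proposal is correct, and it takes a genuinely different route from the paper. The paper proves the lemma by induction on $a+b$ with $\beta(a,b)=(\beta(a-1,b)+\beta(a,b-1))^2+1$: it splits on whether some vertex of $P_m$ has at least $\sqrt{m-1}$ neighbours among $\ell_1,\ldots,\ell_{m-1}$, in the first case restricting to that neighbourhood and extending an inductively obtained half-graph by one level, and in the second case extracting a large induced matching from a minimum dominating set inside $P_m$ and extending an inductively obtained subdivided clique by one principal vertex. You instead fix one witness $v_{i,j}\in N_G(\ell_i)\cap P_j$ per pair, observe that the structure of $\PP$ already pins down the adjacency of $v_{i,j}$ to $\ell_i$, $\ell_j$, and to all $\ell_r$ with $r>j$, and use a $4$-colouring of triples to control the two remaining "unknown" adjacencies ($r<i$ and $i<r<j$) via the $3$-uniform hypergraph Ramsey theorem; the colour $(0,0)$ then yields the induced $1$-subdivision of $K_b$ directly, and each of the other three colours yields an induced half-graph (your verification of all four adjacency subcases, and of the distinctness of witnesses via their forced adjacency patterns, is sound). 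What each approach buys: the paper's argument is elementary and self-contained, needing no external Ramsey machinery, while yours is more systematic --- a single uniform case analysis rather than a two-way recursion --- at the cost of invoking hypergraph Ramsey. Both give bounds of roughly doubly-exponential order, which is immaterial for the application.
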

\begin{proof}
	For $a = 1$ or $b = 1$ we set $\beta(1,n) = \beta(n,1) = 1$. For $a>1$ and $b>1$ we set \[\beta(a,b) \coloneqq  (\beta(a-1,b) + \beta(a,b-1))^2 + 1.\] 
	We prove the statement by induction on $a+b$. 
	In the base case when $a+b = 2$ we have $a=b=1$, and the graph $G$ contains (by the assumption of the lemma with \(m=1\)) an edge from $\ell_1$ to a vertex in $P_1$, and thus contains $H_1$ as induced subgraph.
	
	For $a+b > 2$ we distinguish two possibilities:
	\begin{enumerate}
		\item There is a vertex $v \in P_m$ adjacent to at least $\sqrt{m-1}$ vertices among $\{\ell_1,\ldots,\ell_{m-1}\}$. 
		In this case, let $\{\ell_{i_1},\ldots,\ell_{i_t}\}$, be these neighbors of \(v\). 
		Note that $t \ge \sqrt{m-1} \ge \beta(a-1,b)$. 
		We set $A'\coloneqq  \{\ell_{i_1},\ldots,\ell_{i_t}\}$ and $B'\coloneqq  P_{i_1} \cup \ldots \cup P_{i_t}$. 
		The graph $G' = G[A',B']$ together with the partition $\PP'\coloneqq  \{ P_{i_1}, \ldots, P_{i_t} \}$ satisfies the assumption of the lemma for $\beta(a-1,b)$, and so $G'$ contains by induction either $H_{a-1}$ or a $1$-subdivision of $K_{b}$ as induced subgraph.
		In the latter case we are done, and in the former case the copy of $H_{a-1}$ in $G'$ together with $\ell_m$ and $v$ form an induced $H_{a}$ in $G$ (recall that by Remark~\ref{rem:ladders_sides} the copy of $H_{a-1}$ is embedded `nicely' in $G'$, and so $\ell_m$ and $v$ indeed extend $H_{a-1}$ to $H_a$).
		
		\item Every $v \in P_m$  is adjacent to fewer than $\sqrt{m-1}$ vertices among $\{\ell_1,\ldots,\ell_{m-1}\}$.
		In this case, let $M \subseteq P_m$ be a set of minimum size that dominates  $\{\ell_1,\ldots,\ell_{m-1}\}$, i.e., such that $\{\ell_1,\ldots,\ell_{m-1}\} \subseteq \bigcup_{v \in M} N_G(v)$. 
		Since every $v \in P_m$  has  fewer than $\sqrt{m-1}$ neighbors in $\{\ell_1,\ldots,\ell_{m-1}\}$, we have that $|M| \ge \sqrt{m-1} \ge \beta(a,b-1)$.

		Moreover, for every $v \in M$ there exists index $i(v) \in [m-1]$ such that $\ell_{i(v)}$ is adjacent to $v$ and not adjacent to any vertex in $M \setminus \{v\}$. 
		Thus, the set $M$ together with the set $N\coloneqq \{\ell_{i(v)} : v \in M\}$ form an induced matching of order at least $\beta(a,b-1)$ in $G$. 
		We now consider the subgraph of $G'$ of $G$ induced by $A'\coloneqq N$ and $B'\coloneqq  \bigcup_{v \in M} P_{i(v)}$. 
		The graph $G'$ together with the partition $\PP' = \{P_{i(v)} : v \in M\}$ of \(P'\) satisfies the assumptions of the lemma for $\beta(a,b-1)$, and so $G'$ contains by induction either $H_a$ or a $1$-subdivision of $K_{b-1}$ as an induced subgraph. 
		If $G'$ contains an induced $H_a$ we are done. 
		Otherwise $G'$ contains a $1$-subdivision of $K_{b-1}$ as an induced subgraph in which all the principal vertices are in $A'$. 
		Then, after adding $\ell_m$ together with $M$ we find a $1$-subdivision of $K_{b}$ as an induced subgraph of $G$ in which all the principal vertices are in~$A$.\qedhere
	\end{enumerate}
\end{proof}

\begin{proof}[Proof of \Cref{lem:quotient_no_halfgraph}]
Assume towards contradiction that for every $t$ there is some $G \in \C$ such that $H_t$, the half-graph of order \(t\), is an induced subgraph of $G(A,\F)$. Let $\{x_1,\ldots,x_t\}$ and $\{y_1,\ldots,y_t\}$ be the two sides of this induced $H_t$. For each $i \in [t]$, one of $x_i, y_i$ has to be in $\F$, because $x_iy_i \in E(H_t)$ and $G(A,\F)$ is bipartite. Consequently, there are $t$ parts in $\F$ that are vertices of the induced copy of $H_t$; let us order these parts ascendingly by the time of their freezing and denote them by $P_1,\ldots, P_t$, and let $\ell_1, \ldots, \ell_t$ be their leaders. Note that since $P_1,\ldots, P_t$ are ordered by the time of their freezing,  for any $i < j$ we have that $P_i \cap N_G(\ell_j) = \emptyset$,
because when the part $P_i$ was frozen,
say at time \(\tau\), it did not have any edges towards the vertices in $A_{\tau}$, and we know that $\ell_j \in A_{\tau}$ since the part $P_j$ was frozen after part $P_i$. 

We create an auxiliary graph $K$ with vertex set $[t]$ that contains edges between $i$ and $j$ with $i < j$ if and only if $N_G(\ell_i) \cap P_j\not=\emptyset$.
By Ramsey's theorem, graph $K$ contains a clique or independent set of size at least $\frac{1}{2}\log_2 t$.
We distinguish two cases:
\begin{enumerate}
    \item $K$ contains a clique of size $m \ge \frac{1}{2}\log_2 t$. Let $i_1,\ldots, i_m$ be the vertices of this clique. Then vertices $\ell_{i_1},\ldots, \ell_{i_m}$ and sets $P_{i_1},\ldots, P_{i_m}$ satisfy the assumptions of \Cref{lem:auxiliary}. 
        Therefore for any \(p\) with \(m > \beta(p,p)\), the graph $G$ contains either $H_p$ or a $1$-subdivision of $K_p$ as an induced subgraph.
        Since we assumed that $t$ can be made arbitrarily large, and since $m \ge \frac{1}{2}\log t$, we know that $m$ can be made arbitrarily large. Consequently, for any choice of $p$ we can obtain $m >  \beta(p,p)$, 
    and so we know that graphs from $\C$ contain arbitrarily large half-graphs or arbitrarily large 1-subdivisions of cliques as induced subgraphs. Thus either by definition or by \Cref{lem:nosubdivclique}, \(\C\) is not monadically stable, a contradiction.
    \item $K$ contains an independent set of size $m \ge \frac{1}{2}\log_2 t$. Let $i_1,\ldots, i_m$ be the vertices of this independent set. By the assumptions made at the beginning of this proof, with $\F'\coloneqq \{P_{i_1},\ldots, P_{i_m}\}$, the graph $G(A,\F')$ contains $H_m$ as an induced subgraph. Let $v_{1},\ldots, v_{m}$ be the vertices of this $H_m$ that are contained in $A$. We will now argue that there is a first-order formula $\psi(x,y)$ independent of $G$ (depending only on $\C$) and a marking of vertices of $G$ such that $H_m$ is an induced subgraph of $\psi(G)$. We start with the marking. We mark the leaders $\ell_{i_1}, \ldots, \ell_{i_m}$ with a unary predicate $U$ and we mark all vertices in $P_{i_1} \cup \ldots \cup P_{i_m}$ with a unary predicate $Q$.
	Within the subgraph of \(G\) induced on $P_{i_1} \cup \ldots \cup P_{i_m} \cup \{\ell_{i_1}, \ldots, \ell_{i_m}\}$, every part $P_{i_j}$ is precisely the neighborhood of $\ell_{i_j}$. 
    Therefore, the formula 
    $$\mathsf{samepart}(x,y)\coloneqq  Q(x) \land Q(y) \land \exists z \bigl(U(z) \land \adj(x,z) \land  \adj(y,z)\bigr).$$
    expresses that two vertices $u,v$ are in the same part $P_{i_j}$ for some $j$. 

	We now define $\psi(x,y)$ as the symmetric version of the formula $\exists z (\mathsf{samepart}(y,z) \land \adj(x,z))$. 
	This formula creates edges (in $\psi(G)$) between any $v \in A$ and all vertices of $P_{i_j}$ if there exists $u \in P_{i_j}$ such that $vu$ is an edge of $G$. Thus, in particular, vertex $v_1$ is adjacent to all the vertices in $P_{i_1} \cup \ldots \cup P_{i_m}$ in $\psi(G)$, vertex $v_2$ is adjacent to all the vertices in $P_{i_2} \cup \ldots \cup P_{i_m}$, and in general, vertex $v_j$ is adjacent to all the vertices from $P_{i_j} \cup \ldots \cup P_{i_m}$. This means that $\psi(G)$ contains a half-graph of order $m$ as an induced subgraph (for every $j \in [m]$ we can just pick one arbitrary vertex $u_{i_j}$ from $P_{i_j}$). Since we assumed that $t$ can be arbitrarily large and $m \ge \frac{1}{2}\log_2 t$, we know that graphs $\psi(G)$ contain arbitrarily large half-graphs as induced subgraphs. The marking of $G$ together with formula $\psi$ thus determine a (non-copying) transduction $\T$ of $\C$ such that $\T(\C)$ contains arbitrarily large half-graphs, a contradiction to $\C$ being monadically stable.\qedhere
 \end{enumerate}
\end{proof}

\subsection{Definability of $\F$}\label{sec:definability}

It remains to prove that the partition $\F$ is definable in the sense claimed by \Cref{lem:main}. To achieve this we provide a formula $\leader(x,y)$ and show how to define a suitable marking of vertices of every $G \in \C$.

We start with the marking of vertices of $G$.
Let $\F$ be the partition of $B$ computed by the algorithm and set $m\coloneqq |\F|$. Let $P_1,\ldots, P_{m}$ be the parts of $\F$ and let $\ell_1,\ldots, \ell_{m}$ be the corresponding leaders. Define an auxiliary graph $K$ with vertex set $[m]$ in which $ij$ is an edge if and only if $\ell_i$ has a neighbor in $P_j$ or $\ell_j$ has a neighbor in $P_i$. Since the vertices 
$\ell_1,\ldots, \ell_{m}$ are all from $A$, and since (by \Cref{lem:bd_degree}) every vertex in $A$ has a neighbor in at most $k^*$ parts from $\F$, the graph $K$ has at most $k^*m$ edges. Therefore, $K$ has a vertex of degree at most $2k^*$. Moreover, from the the definition of $K$ it follows that if $S$ is a subset of~$[m]$, then $K[S]$ has at most $k^*|S|$ edges and again has a vertex of degree at most $2k^*$. It follows that $K$ is $2k^*$-degenerate, and therefore there exists a proper coloring $\lambda$ of $[m]$ that uses at most $2k^*+1$ colors.
Let $C$ denote the set of colors used by this coloring.
We introduce $2k^*+1$ unary predicates \(M_c\), one for each color $c\in C$. For each $i \in [m]$ mark $\ell_i$ and all the vertices in $P_i$ with a unary predicate $M_c$, where $c= \lambda(i)$. This construction guarantees that whenever some $\ell_i$ is adjacent to some $v \in B$ satisfying the same predicate~$M_c$, then we have that $v \in P_i$. Clearly, the opposite is also true -- if $v\in P_i$, then it is adjacent to $\ell_i$ and it is marked with the same unary predicate.

We can then define
$$ \leader(x,y) \coloneqq  \bigvee_{c \in C} (L(x) \land M_c(x) \land M_c(y) \land \adj(x,y) ).$$
This formula first checks whether $x$ is in the left side of $G$, and then checks the condition outlined above.
This completes the proof of \Cref{lem:main}.

\subsection{Neighborhood covers}
\label{sec:mainlemma-implies-covers}

In this section we prove \Cref{lem:covers}. We recall the relevant definitions and the statement of the lemma for convenience.
A {\em{neighborhood cover}} of a graph $G$ is a collection \(\cal X\) of subsets of \(V(G)\) called \emph{clusters} such that for each \(v \in V(G)\) there is  \( X \in \cal X\) such that  \(N_G[v] \subseteq X \). We say that a neighborhood cover \(\cal X\) is \emph{compact} if there exists a partition \(\PP\) of \(V(G)\) such that (i) for each \(P \in \PP\) there is \(u \in V(G)\) with \(P \subseteq N_G[u]\) and (ii) for each \( X \in \cal X\) we have that \( X = N_G[P]\) for some \(P \in \PP\). Furthermore, we say that a neighborhood cover \(\cal X\) has \emph{overlap} \(k\) if every vertex \(v\) is contained in at most \(k\) clusters of \(\cal X\).

\lemmacovers*

\begin{proof}
For a graph $G$ we define $\double(G)$ to be the bipartite graph $(A,B,E)$ with sides $A = V(G)\times\{0\}$, $B = V(G)\times\{1\}$ and edge set
$ E = \{ (u,0)(v,1) \colon uv\in E(G) \} \cup \{(u,0)(u,1) \colon u \in V(G)\}$. 
Thus, $\double(G)$ is the standard ``bipartization'' of a graph in which we also add an edge between the two copies of every vertex. Moreover, let $\rho$ be the function that  projects the vertices from $\double(G)$ to the vertices from the original graph $G$. That is, we set $\rho((v,0)) =\rho((v,1))\coloneqq v$ for all $v \in V(G)$.
For a vertex set $S \subseteq V(\double(G))$ we define $\rho(S) \coloneqq \{\rho(v)\ \colon\ v \in S\}$.

It is easily seen that $\{\double(G)\ \colon\ G \in \C\}$ is transducible from $\C$, and hence is also monadically stable and has inherently linear neighborhood complexity. 
Consequently, we can apply the algorithm from \Cref{lem:main} to this class.
Let $k^* \in \N$ be the associated constant claimed in the lemma.

Let $G \in \C$. Note that $\double(G)$ does not have any isolated vertices by construction. We apply the algorithm from \Cref{lem:main} to $\double(G)$. This gives us a partition $\F$ of $B$ and vertices $\ell_1,\ldots, \ell_{|\F|}$ with the claimed properties.

 We then set ${\cal X}\coloneqq \{N_G[\rho(P)]\ \colon\ P \in \F\}$  and claim that this is a compact neighborhood cover of the original graph $G$ with overlap $k\coloneqq k^*$.
To see this, we set $\PP\coloneqq \{\rho(P)\ \colon\ P \in \F\}$. 
Clearly this is a partition of $V(G)$, and we have that each $X \in \cal X$ is of the form $N_G[Q]$, where $Q \in \PP$.

Since in $\double(G)$ we have that $P_i \subseteq N(\ell_i)$ for each $i \in \{1,\ldots, |\F|\}$, we know that each vertex in $\rho(P_i)$ is a neighbor of $\rho(\ell_i)$ in $G$ or is $\rho(\ell_i)$ itself. Consequently, we have that $\rho(P_i) \subseteq N_G[\rho(\ell_i)]$ for each $i$, which means that
the requirement (i) of $\cal X$ being compact is satisfied.
Requirement (ii) is satisfied by definition.
This proves that $\cal X$ is compact. Finally, it remains to show that each $v \in V(G)$ is in at most $k^*=k$ clusters of $\cal X$. Since each part $X$  of $\cal X$ is of the form $N_G[Q]$ for some $Q \in \PP$, we need to show that $v$ is in at most $k^*$ sets of this form. For each $Q$ with $v \in N_G[Q]$, we have that $v \in Q$ or $v \in N_G(Q)$, and we will count the number of clusters containing $v$ based on these two options.
First, since $\PP$ is a partition of $V(G)$, we know that there is exactly one part $Q$ of $\PP$ such that $v \in Q$; let us call this part $Q_v$. 
Second, from Lemma~\ref{lem:main} we know that $v'\coloneqq (v,0)$ has a neighbor in at most $k^*$ parts $P$ from $\F$. Therefore, we have that $v \in N_G(Q)$ for at most $k^*$ parts $Q=\rho(P)$ from $\PP$. By combining the options for $v \in Q$ and $ v \in N_G(Q)$ as above, we get that $v$ is contained in at most $k^*+1$ clusters of $\cal X$. However, one easily checks that the part $P \in \F$ with $Q_v=\rho(P)$  is one of the $k^*$ parts in which $(v,0)$ has a neighbor (we have that $(v,1) \in P$, and this is a neighbor of $(v,0)$), and therefore the cluster $N_G[Q_v]$ is already accounted for.
Consequently, $v$ is in at most $k^*$ clusters of $\cal X$.
\end{proof}

\section{Efficient sparsification}
\label{sec:sparsify}

In this section we prove \Cref{thm:main}, restated below for convenience.

\thmmain*

The remainder of this section is devoted to the proof of \Cref{thm:main}.
First, in \Cref{sec:bipartite} we show that it is sufficient to prove \Cref{thm:main} in the setting of bipartite graphs.
In \Cref{sec:branchings}, we introduce a useful tool that allows us to recursively decompose the input graph. Then in \Cref{sec:algo_sparsify}, we describe the algorithm $\Aa$ that for each $G \in \C$ computes a graph $H$ as claimed in \Cref{thm:main}. We will call this graph $S(G)$, and will define the class $\D\coloneqq  \{S(G)\ :\ G \in \C\}$.
The next sections are devoted to analyzing the properties of graphs $S(G)$ (\Cref{sec:height}), bounding the runtime of $\Aa$ (\Cref{sec:runtime}), and defining a suitable interpretation~$I$ (\Cref{sec:interp}). After this, it remains to prove that the class $\D$ defined in \Cref{sec:algo_sparsify} is indeed of bounded expansion. This is technically the most challenging part of the proof of \Cref{thm:main} and is split into showing that $\D$ is transducible from $\C$ (\Cref{sec:transduce}), and then using this to finally prove the result in \Cref{sec:be}.

\subsection{Bipartite Graphs}\label{sec:bipartite}

It will be more convenient for us to work with bipartite graphs.
To see that focusing on bipartite graphs does not restrict the scope of our results, we now argue 
that \Cref{thm:main} with the additional assumption that \(\C\) is a class of bipartite graphs implies the general version of \Cref{thm:main}.

Let $\C$ thus be a class of (not necessarily bipartite) graphs satisfying the assumptions of \Cref{thm:main}. For every $G \in \C$ we define a graph $B(G)$ as follows. The vertex set of $B(G)$ is $V(G) \times \{L,R,P_1,P_2\}$. The edge set of $B(G)$ consists of the following edges: For every pair of vertices $u,v$ with $uv \in E(G)$ there are edges $(u,L)(v,R)$ and $(v,L)(u,R)$ in \(B(G)\). Moreover, for each $u \in V(G)$ there are edges in $B(G)$ forming the path $(u,L)(u,P_1)(u,P_2)(u,R)$. It is easy to check that $B(G)$ is a bipartite graph -- one side is formed by vertices from $V(G)\times \{L,P_2\}$ and the other side by vertices from $V(G)\times \{R,P_1\}$. We then define $\C'\coloneqq \{B(G)~:~G \in \C\}$. 
We argue that there is an interpretation $J$ such that for any $G$ there exists a marking of $B(G)$ by unary predicates such that $G = J(B(G))$.
The interpretation first identifies the four copies of each vertex using \(L,R,P_1\)- and \(P_2\)-predicates,
then connects two vertices \((u,L)\) and \((v,L)\) if \((u,L)\) is adjacent to \((v,R)\), and finally only outputs the \(L\)-vertices.
Also note that $\C'$ is transducible from $\C$ and that for each $G \in \C$ we can easily compute $B(G)$ in quadratic time. Since $\C'$ is transducible from~$\C$, it also satisfies the assumptions of \Cref{thm:main}. Therefore, we can apply \Cref{thm:main} to $\C'$ and obtain an algorithm $\Aa'$, interpretation $I'$ and class of graphs $\D'$ of bounded expansion with the properties claimed there. 
Let us argue that \Cref{thm:main} also holds for the graph class \(\C\),
by specifying the required
algorithm $\Aa$, interpretation $I$ and class of graphs $\D$ as follows.
We set $\D\coloneqq \D'$. Algorithm $\Aa$ takes as input graph $G \in \C$, computes $B(G)$, applies algorithm $\Aa'$ to $B(G)$, and returns its output, $H$. The interpretation $I$ works by composing $J$ with $I'$, that is, it first applies $I'$ to obtain $B(G)$ from $H$, and then applies $J$ to $B(G)$ to obtain $G$. The markings of $H$ and $B(G)$ used by $I'$ and $J$ can be combined in such a way that we indeed have $G = (J \circ I)(H)$.

This shows that it is sufficient to prove \Cref{thm:main} for the special case of bipartite graphs.
For the rest of the section we fix a monadically stable class $\C$ of bipartite graphs as in the assumptions of \Cref{thm:main}. It will be convenient for us to assume that $\C$ is closed under induced subgraphs and bipartite complements. This comes at no cost of generality, since the properties of being monadically stable and of having inherently linear neighborhood complexity remain preserved if we close $\C$ under taking induced subgraphs and bipartite complements.

\subsection{Branchings}\label{sec:branchings}

We now give the precise definition of \emph{branchings} outlined in \Cref{sec:overview},
and derive the existence of branchings with small overlap from
\Cref{lem:main}.
Such branchings are a crucial ingredient for constructing $S(G)$.

\begin{definition}\label{def:branching}
    Let $G \in \C$. A \emph{left branching} of $G$ is a collection $G_1,\ldots, G_m$ of induced subgraphs of $G$ satisfying the following conditions:
    \begin{enumerate}[nosep]
        \item There are distinct vertices $\ell_1,\ldots, \ell_m$ in $L(G)$ such that for each $i \in [m]$ we have that $\ell_i \in L(G_i)$ and $\ell_i$ dominates $R(G_i)$ in $G_i$ (that is, $R(G_i) \subseteq N_{G_i}(\ell_i)$).
        \item For every $uv \in E(G)$ there exists an $i \in [m]$ such that $u,v \in V(G_i)$ 

        \item The sets $R(G_1),\ldots, R(G_m)$ form a partition of $R(G)$.
    \end{enumerate}
    The graphs $G_1,\ldots,G_m$ are called the \emph{branches} and the vertices $\ell_1,\ldots, \ell_m$ are their \emph{leaders}.
    If every vertex of $G$ belongs to at most $k$ graphs from $G_1,\ldots, G_m$, then we say that the branching has \emph{overlap $k$}.
    A \emph{right branching} of $G$ is defined analogously with the roles of $R(G)$ and $L(G)$ reversed.
\end{definition}

\begin{figure}[t]
  \begin{minipage}[t]{0.30\textwidth}
    \vspace*{0pt}
    \begin{center}
    \includegraphics[scale=1.1]{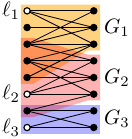}
    \end{center}
  \end{minipage}\hfill
  \begin{minipage}[t]{0.70\textwidth}
    \vspace*{5pt}
    \captionof{figure}{
        A left branching of a bipartite graph into three branches \(G_1,G_2,G_3\) 
        with respective leaders \(\ell_1,\ell_2,\ell_3\).
        As required, each edge is in at least one branch, and each right vertex is in exactly one branch.
        Each left vertex is in at most two branches, and thus the branching has overlap~two.
    }
    \label{fig:branching}
  \end{minipage}
\end{figure}

\Cref{fig:branching} illustrates \Cref{def:branching}.
The following lemma is a simple consequence of \Cref{lem:main}.
\begin{lemma}\label{lem:branching_compute}
    There exists $k \in \N$ and an algorithm with runtime $\Oh(|G|^4)$ that computes for any $G \in \C$ a left (or right)
    branching of overlap at most $k$.

    Further, there is a first-order formula $\branch(x,y)$
    such that for any $G \in \C$ and branching $G_1,\ldots, G_m$ computed by the algorithm, there is a unary coloring \(G^+\) of \(G\)
    such that for any $u,v \in V(G)$ we have that $G^+ \models \branch(u,v)$ if and only if $u$ is the leader of $G_i$ for some $i \in [m]$
    and $v \in V(G_i)$.
\end{lemma}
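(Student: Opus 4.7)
The statement should follow almost immediately from \Cref{lem:main}. I will describe the construction for the left branching; the right branching is handled by applying the same procedure with the two sides of the bipartition swapped. Assuming $G \in \C$ has no isolated vertices (otherwise no left branching exists at all, and the isolated vertices may be ignored), the plan is to run the algorithm of \Cref{lem:main} to obtain, in time $\Oh(|G|^4)$, a partition $\F = \{P_1,\ldots,P_m\}$ of $R(G)$ together with leaders $\ell_1,\ldots,\ell_m \in L(G)$ satisfying $P_i \subseteq N_G(\ell_i)$ and such that every vertex of $L(G)$ has neighbors in at most $k^*$ parts of~$\F$.

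The branches will be defined directly from this data: for each $i \in [m]$, I set $L(G_i) \coloneqq \{u \in L(G) \,:\, N_G(u) \cap P_i \neq \emptyset\}$ and $R(G_i) \coloneqq P_i$, and let $G_i \coloneqq G[L(G_i) \cup R(G_i)]$. The three conditions of \Cref{def:branching} should then be straightforward to verify: condition~(1) follows from $P_i \subseteq N_G(\ell_i)$, which forces $\ell_i \in L(G_i)$ and gives $R(G_i) = P_i \subseteq N_{G_i}(\ell_i)$; condition~(2) holds because any edge $uv$ with $v \in R(G)$ lies in the unique $P_i$ containing $v$, and then $u \in L(G_i)$ since $v$ itself witnesses that $u$ has a neighbor in $P_i$; condition~(3) is immediate since $\F$ is a partition of $R(G)$. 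The distinctness of the leaders $\ell_1,\ldots,\ell_m$ that is implicit in \Cref{def:branching} follows from inspecting the algorithm of \Cref{lem:main}: at any iteration $i$, only those parts of $\B_{i-1}$ whose neighborhood in $A_{i-1}$ equals $\{a_i\}$ become frozen with leader $a_i$, and since $\B_{i-1}$ is a partition into twin classes there is at most one such part. The overlap bound $k \coloneqq k^*$ is then immediate: every $v \in R(G)$ lies in exactly one $R(G_i)$, while every $u \in L(G)$ lies in $L(G_i)$ for at most $k^*$ indices $i$ by the second bullet of \Cref{lem:main}.

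For the definability part, I will reuse the unary coloring $G^+$ and the formula $\leader(x,y)$ already provided by \Cref{lem:main}, and set
\[
\branch(x,y) \;\coloneqq\; \leader(x,y) \;\lor\; \exists z\,\bigl(\leader(x,z) \land \adj(y,z)\bigr).
\]
The first disjunct captures the case $y \in R(G_i) = P_i$, while the second captures $y \in L(G_i)$: by construction $y$ lies in $L(G_i)$ precisely when $y$ is adjacent to some $z \in P_i$, which is exactly what the witnessing $z$ expresses. I do not anticipate any real obstacle; the whole argument is essentially unpacking the definitions, with the only mildly delicate point being the distinctness of the leaders, which is a simple observation about the algorithm underlying \Cref{lem:main}.
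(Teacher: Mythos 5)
Your proposal is correct and follows essentially the same route as the paper: apply \Cref{lem:main}, set $G_i \coloneqq G[N(P_i),P_i]$ with leader $\ell_i$, verify the three conditions of \Cref{def:branching} and the overlap bound from the degree bound of \Cref{lem:main}, and derive $\branch$ from $\leader$. Your write-up is in fact slightly more careful than the paper's, since you explicitly check the distinctness of the leaders and spell out the formula $\branch(x,y)\coloneqq\leader(x,y)\lor\exists z\,(\leader(x,z)\land\adj(y,z))$, both of which the paper leaves implicit.
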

\begin{proof}
    We focus on left branchings, as the case of right branchings is analogous. Let $k^*$ be the constant provided by \Cref{lem:main}; we shall prove the statement for $k\coloneqq k^*$. We apply the algorithm from \Cref{lem:main} to $G$ and obtain sets $P_1,\ldots, P_m \subseteq R(G)$ and vertices $\ell_1,\ldots, \ell_m$. For each $i \in [m]$ we set $G_i\coloneqq  G[N(P_i), P_i]$. The first and third properties from the definition of branchings are satisfied by \Cref{lem:main}.
    The second property holds because $P_1,\ldots,P_m$ form a partition of $R(G)$, and the neighborhoods of these sets lie within $G_1,\ldots,G_m$. Hence each edge of $G$ is contained in at least one of these graphs.
    Finally, every vertex from $R(G)$ is in exactly one $G_i$. By the degree bound of \Cref{lem:main}, every vertex in $L(G)$ is in at most $k$ graphs from \(G_1,\dots,G_m\). Hence, the overlap of the branching is at most~$k$.
\end{proof}

The following simple observation will be useful in the next subsection. 
\begin{lemma}\label{lem:isolated}
Let $G \in \C$. 
If $G_1,\ldots, G_m$ is a left (resp. right) branching of $G$, then for every $i \in [m]$ the bipartite complement $\widetilde{G}_i$ of $G_i$ contains an isolated vertex in $L(\widetilde{G}_i)$ (resp. in $R(\widetilde{G}_i)$).

\end{lemma}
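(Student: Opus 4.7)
The plan is to read off the conclusion directly from the definition of a left branching combined with the definition of the bipartite complement. Recall that by \Cref{def:branching}, each branch $G_i$ of a left branching comes equipped with a \emph{leader} $\ell_i \in L(G_i)$ with the property that $\ell_i$ dominates $R(G_i)$ in $G_i$, i.e.\ $R(G_i) \subseteq N_{G_i}(\ell_i)$. This is the only fact about the branching we will need.

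Now fix any $i \in [m]$ and consider the vertex $\ell_i \in L(\widetilde G_i)$. Since $\widetilde G_i$ is bipartite with sides $L(G_i)$ and $R(G_i)$, the only edges that $\ell_i$ could possibly participate in are edges going to $R(\widetilde G_i) = R(G_i)$. By definition of the bipartite complement, for every $u \in R(G_i)$ we have $\ell_i u \in E(\widetilde G_i)$ if and only if $\ell_i u \notin E(G_i)$. Since $\ell_i$ dominates $R(G_i)$ in $G_i$, the right-hand side fails for every such $u$, so $\ell_i$ has no neighbor in $\widetilde G_i$; that is, $\ell_i$ is an isolated vertex of $\widetilde G_i$ lying in $L(\widetilde G_i)$, as required.

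For the right branching case the argument is symmetric: the leader $\ell_i$ now lives in $R(G_i)$ and dominates $L(G_i)$ in $G_i$, so the same two-line verification shows that $\ell_i$ is isolated in $\widetilde G_i$ and belongs to $R(\widetilde G_i)$. There is no real obstacle in this proof; the only thing to be careful about is to note that bipartite complementation is taken relative to the bipartition inherited from $G_i$, so that ``isolated in $\widetilde G_i$'' is equivalent to ``having no neighbors on the opposite side''. Once this is spelled out, the lemma follows immediately from the definitions.
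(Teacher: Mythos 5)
Your proof is correct and follows exactly the paper's argument: the leader $\ell_i$ dominates the opposite side of $G_i$, so it becomes isolated in the bipartite complement. The paper states this in a single sentence; you have merely spelled out the same observation in more detail.
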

\begin{proof}
    The leader of $G_i$ becomes an isolated vertex in the complement, as it is adjacent to all the vertices from the other side of \(G_i\).

\end{proof}

\subsection{The algorithm and the definition of $\D$}
\label{sec:algo_sparsify}

For any input graph $G \in \C$, the construction of $S(G)$ proceeds in two phases. 
We first compute a tree $T(G)$ that encodes the properties of the edge relation of $G$. Then we add additional edges between $V(G)$ and $T(G)$ to obtain $S(G)$. 

We now describe the algorithm that computes the tree $T(G)$. 
To each node $p$ of $T(G)$ we associate a graph $G(p)$ that is either an induced subgraph of $G$ or the bipartite complement of an induced subgraph of~$G$.
We start by creating the root vertex $r$ and set $G(r) \coloneqq  G$. We then recursively create the rest of $T(G)$ by applying the following extension rules to each previously added node \(p\).
\begin{enumerate}
    \item If $G(p)$ is not edgeless and is at odd depth, we remove all its isolated vertices (if it has any) and compute its left branching $G_1,\ldots, G_m$. For each branch $G_i$ we create a child $p_i$ and set $G(p_i)$ to be the bipartite complement of $G_i$.
    \item If $G(p)$ is not edgeless and is at even depth, we remove all its isolated vertices (if it has any) and compute its right branching $G_1,\ldots, G_m$. For each branch $G_i$ we create a child $p_i$ and set $G(p_i)$ to be the bipartite complement of $G_i$.
    \item If $G(p)$ is edgeless, then $p$ has no children. 
\end{enumerate}
We note that by \Cref{lem:isolated}, every internal node of $T(G)$, possibly with the exception of the root, contains an isolated vertex. Therefore, in each recursive step the graph we work with gets smaller, and so the recursion ends after finitely many steps. 

We construct the graph $S(G)$ as follows. The vertex set of $S(G)$ is $V(T(G)) \cup V(G)$. 

The vertices of \(T(G)\) are called \emph{nodes}, and we will use the letters $p,q,r,s$ to denote them.
The edge set of $S(G)$ consists of the edges of $T(G)$ together with all edges $vp$ such that $p \in V(T(G))$ and $v \in V(G(p)) \subseteq V(G)$. Finally, we mark the root of $T(G)$ by a unary predicate $U$, vertices from $L(G)$ by a unary predicate $L$ and vertices from $R(G)$ by a unary predicate $R$. 
We define the \emph{height} of $S(G)$ to be the height of $T(G)$. 
Finally, we define $\D\coloneqq  \{S(G)\ \colon\ G \in \C\}$.

\subsection{$T(G)$ has bounded height}
\label{sec:height}

\begin{lemma}\label{lem:half-graph_grow}
    Let $G\in \C$ and let $T(G)$ be the tree computed by the algorithm. Let $p$ and $p''$ be nodes of $T(G)$ such that $p$ is the grandparent of $p''$. Suppose $G(p'')$ contains an induced half-graph of order $t$. Then $G(p)$ contains an induced half-graph of order $t+1$. 
\end{lemma}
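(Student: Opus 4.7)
The plan is to extend the induced half-graph $H_t\subseteq G(p'')$ to an induced $H_{t+1}\subseteq G(p)$ by adjoining the two leaders produced by the branchings at $p$ and at the parent $p'$ of $p''$. The argument splits by the parity of the depth of $p$; the two cases are symmetric modulo swapping $L$ and $R$, so I only describe the even case. Then at $p$ we take a right branching: $G(p')=\widetilde{G_i}$ where $G_i$ is a branch of $G(p)$ with leader $\ell\in R(G_i)$ dominating $L(G_i)$. At $p'$ (odd depth) we take a left branching of the non-isolated part of $G(p')$: $G(p'')=\widetilde{H'_j}$ with leader $\ell'\in L(H'_j)$ dominating $R(H'_j)$.

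Set $V\coloneqq V(H'_j)$. I first record three structural facts.
(i)~\emph{Bipartite complementation cancels:} $G(p'')=\widetilde{\widetilde{G_i}[V]}=G_i[V]=G(p)[V]$, so the half-graph of $G(p'')$ is already realized as an induced subgraph of $G(p)$.
(ii)~Since $\ell$ dominates $L(G_i)$ in $G_i$, $\ell$ becomes isolated in $\widetilde{G_i}=G(p')$ and is therefore removed before the branching at $p'$; thus $\ell\notin V$, yet $\ell\in V(G(p))$ and $\ell$ is adjacent in $G(p)$ to every vertex of $L(V)\subseteq L(G_i)$.
(iii)~Dually, $\ell'\in L(V)$ is adjacent to all of $R(V)$ in $G(p')=\widetilde{G_i}$, hence non-adjacent to $R(V)$ in $G_i$ and so in $G(p)$; moreover $\ell'$ is isolated in $G(p'')$, while $\ell\ell'\in E(G(p))$ by~(ii) applied to $\ell'\in L(V)$.

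Now take any induced $H_t$ in $G(p'')$ realized by vertices $u_1,\ldots,u_t,v_1,\ldots,v_t$, arranged by \Cref{rem:ladders_sides} so that $u_i\in L$, $v_j\in R$, and $u_iv_j\in E$ iff $i\le j$. The vertex $\ell'$ is not among these (for $t\ge 1$, since $\ell'$ is isolated in $G(p'')$ but all vertices of $H_t$ have positive degree; the case $t=0$ is trivial as $G(p)$ has at least one edge whenever $p$ has a child). Set $u_{t+1}\coloneqq\ell'$ and $v_{t+1}\coloneqq\ell$. Facts~(i)--(iii) then immediately give $u_iv_j\in E(G(p))\iff i\le j$ for all $i,j\in[t+1]$: edges and non-edges for $i,j\in[t]$ come from~(i), edges $u_iv_{t+1}$ from~(ii), non-edges $u_{t+1}v_j$ with $j\le t$ from~(iii), and the edge $u_{t+1}v_{t+1}=\ell'\ell$ from~(iii). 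Hence $G(p)$ contains an induced $H_{t+1}$.

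The main subtlety is aligning the sides on which the half-graph sits with the sides occupied by the two leaders, which is handled cleanly by \Cref{rem:ladders_sides}. The real content is fact~(i)---two nested bipartite complements cancel, so the half-graph of $G(p'')$ reappears inside $G(p)$---together with the observation that the two leaders (one dominating in $G(p)$, the other non-adjacent after passing through a complement) slot exactly into the two extreme positions required to extend a half-graph by one order.
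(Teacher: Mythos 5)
Your proof is correct and follows essentially the same route as the paper's: both adjoin the leader of the branch at $p$ and the leader of the branch at $p'$ as the two new extreme vertices of the half-graph, using that the former dominates one side of the branch while the latter becomes non-adjacent after the intervening bipartite complement. Your explicit "double complementation cancels" justification for why the $H_t$ of $G(p'')$ reappears induced in $G(p)$ is a nice touch that the paper leaves implicit.
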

\begin{proof}
    Let $v_1,\ldots, v_t \in L(G(p''))$ and $u_1,\ldots, u_t \in R(G(p''))$ be the vertices of a half-graph of order $t$ in $G(p'')$ (recall that we may assume that the half-graph is contained in $G(p'')$ in this fashion by Remark~\ref{rem:ladders_sides}). We first note that since $G(p'')$ is an induced subgraph of $G(p)$, the vertices also induce a half-graph of order $t$ in $G(p)$. We now show how to increase the order by one by 
    finding vertices $v_{t+1} \in L(G(p))$ and $u_{t+1} \in R(G(p))$ such that $v_{t+1}$ is adjacent to $u_1,\ldots, u_{t+1}$ and $u_{t+1}$ is non-adjacent to $v_1,\ldots, v_t$.

	We will assume that $p$ is at even depth; the argument for odd depth is analogous.
    Let $G_1,\ldots, G_{m_1}$ be the (left) branching of $G(p)$ (after removing isolated vertices). Let $p'$ be the child of $p$ that is the parent of $p''$. Let $i \in [m_1]$ be such that $G(p')$ is the bipartite complement of $G_i$, and let $\ell_i$ be the leader of $G_i$. Note that  $\ell_i$ is adjacent to all the vertices of $R(G_i)$ in graph $G_i$, and is isolated in $G(p')$. Let $G_1',\ldots, G_{m_2}'$ be the branching of $G(p')$ (after removing isolated vertices) and let $G_j'$ be the branch such that $G(p'')$ is the bipartite complement of $G_j'$. Let $\ell_j'$ be the leader of $G_j'$. Then $\ell_j'$ is isolated in $G(p'')$. We claim that $v_{t+1}\coloneqq \ell_i$ and $u_{t+1}\coloneqq \ell_j'$ have the desired properties. Clearly, since $\ell_j'$ is isolated in $G(p'')$ and since $G(p'')$ is an induced subgraph of $G(p)$, vertex $u_{t+1}$ is non-adjacent to  vertices $v_1,\ldots, v_t$. Also, since every vertex from $u_1,\ldots, u_t$ is adjacent to at least one vertex in $v_1,\ldots, v_t$, we know that $u_{t+1}$ has to be distinct from all the vertices $u_1,\ldots, u_t$. For $v_{t+1} = \ell_i$, we know that it is adjacent to all the vertices in $R(G(p'))$. Since $u_1,\ldots,u_{t+1} \in R(G(p'))$, we therefore know that $v_{t+1}$ is adjacent to all these vertices. Also, since $\ell_i$ is isolated in $G(p')$, it is removed from it before branching, and therefore it is not a vertex of $G(p'')$. Consequently, $v_{t+1}$ is distinct from all vertices $v_1,\ldots, v_t$.
\end{proof}

\begin{lemma}\label{lem:tree-height}
    There exists $h \in \N$ depending only on $\C$ such that $T(G)$ has height at most $h$, for every $G \in \C$.
\end{lemma}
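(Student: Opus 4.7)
\medskip
\noindent\textbf{Proof plan for \Cref{lem:tree-height}.}
The plan is to combine the growth statement of \Cref{lem:half-graph_grow} with the fact that monadic stability caps the ladder index uniformly on $\C$. The first step is to observe that, by a straightforward induction on depth using the construction rules in \Cref{sec:algo_sparsify}, for every node $p$ of $T(G)$ the associated graph $G(p)$ is either an induced subgraph of $G$ (when $\mathrm{depth}(p)$ is even) or the bipartite complement of an induced subgraph of $G$ (when $\mathrm{depth}(p)$ is odd). Since we arranged at the start of \Cref{sec:bipartite} that $\C$ is closed under taking induced subgraphs and bipartite complements, this gives $G(p)\in\C$ for every node $p$.

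Next, since $\C$ is monadically stable, there exists a constant $t^*$, depending only on $\C$, such that no graph in $\C$ contains $H_{t^*+1}$ as an induced subgraph; in other words, $\ell(G(p))\le t^*$ for every node $p$, where $\ell(H)$ denotes the largest order of a half-graph appearing as an induced subgraph of $H$. The key measure of progress is then $\ell(G(p))$: I plan to iterate \Cref{lem:half-graph_grow} along a path from any non-leaf node $p^*$ towards the root. Concretely, if $p^*$ lies at depth $d$ and $p_0,p_2,p_4,\ldots,p_{2k}$ are its ancestors obtained by going up in steps of two (so $p_0=p^*$ and $p_{2k}$ is either the root or a child of the root, and $k=\lfloor d/2\rfloor$), then an inductive application of \Cref{lem:half-graph_grow} yields
\[
\ell(G(p_{2k}))\ge \ell(G(p^*)) + k.
\]

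Now I use the defining property of leaves. The recursion stops exactly when $G(p)$ is edgeless, so every non-leaf node $p^*$ has $G(p^*)$ containing at least one edge, hence an induced $H_1$, giving $\ell(G(p^*))\ge 1$. Combining with the inequality above and $\ell(G(p_{2k}))\le t^*$ (since $G(p_{2k})\in\C$), we obtain $1+\lfloor d/2\rfloor\le t^*$, so $d\le 2(t^*-1)$ for every non-leaf node. Since any leaf is a child of a non-leaf, the depth of any leaf is at most $2(t^*-1)+1=2t^*-1$, and we may conclude by setting $h\coloneqq 2t^*-1$ (taking $h=0$ if $G$ itself is edgeless).

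I do not anticipate a serious obstacle here: the one thing to check with a little care is the base case of the upward induction, where the $2$-step walk ending at the root or at a child of the root has the right parity — but both parities are covered because $\C$ is closed under bipartite complements, so the graphs at depth $0$ and depth $1$ both lie in $\C$ and both have ladder index at most $t^*$.
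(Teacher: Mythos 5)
Your proposal is correct and takes essentially the same route as the paper: both bound the ladder index of graphs in $\C$ by a constant $t$ via monadic stability, note that a non-leaf node carries a graph with an edge (hence an induced $H_1$), and iterate \Cref{lem:half-graph_grow} up the tree to force an induced half-graph of unbounded order in $G$ if the height were too large. The paper phrases this as a contradiction at a non-leaf node of depth exactly $2t$ (yielding $h=2t$), whereas you run the induction from an arbitrary non-leaf node and handle the odd-depth termination via closure under bipartite complements, obtaining the marginally sharper $h=2t-1$; this is a cosmetic difference only.
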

\begin{proof}
    Since $\C$ is monadically stable, there exists $t\in \N$ such that no graph $G\in \C$ contains an induced half-graph of order more than $t$.
    Assume for contradiction that for some $G \in \C$, the tree $T(G)$ has height more than $2t$. Then there exists a non-leaf node $p$ of $T(G)$ of depth $2t$ in $T(G)$. Since $p$ is not a leaf, the graph $G(p)$ contains an edge, and therefore contains a half-graph of order $1$. By applying \Cref{lem:half-graph_grow} $t$ times, we obtain a half-graph of order $t+1$ in $G(r) = G$, a contradiction.
\end{proof}

\subsection{Running time analysis}
\label{sec:runtime}

\begin{lemma}\label{lem:runtime}
For every $G \in \C$ the computation of $T(G)$ and of $S(G)$ runs in time  $\mathcal{O}(|G|^4)$.
\end{lemma}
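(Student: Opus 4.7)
The plan is to bound the total running time by analyzing the algorithm level by level in the tree $T(G)$, exploiting the two key constants we have at our disposal: the bound $h$ on the height of $T(G)$ from \Cref{lem:tree-height} and the bound $k$ on the overlap of the branchings from \Cref{lem:branching_compute}. Both depend only on $\C$.

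The key invariant I would establish is the following: for every depth $d \in \{0,1,\dots,h\}$, if $\mathcal{N}_d$ denotes the set of nodes of $T(G)$ at depth $d$, then
\[
\sum_{p \in \mathcal{N}_d} |G(p)| \;\le\; k^d \cdot |G|.
\]
This follows by induction on $d$: at depth $0$ the sum is just $|G(r)| = |G|$, and the inductive step uses the overlap bound, since each vertex of $G(p)$ lies in at most $k$ of the branches $G_1,\ldots,G_{m_p}$ produced when processing $p$, and the children of $p$ have graphs whose vertex sets are exactly $V(G_1),\ldots,V(G_{m_p})$ (bipartite complementation does not change the vertex set).

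With this invariant in hand, I bound the work done at depth $d$. At each node $p$ we invoke the branching algorithm of \Cref{lem:branching_compute}, which costs $\mathcal{O}(|G(p)|^4)$, plus some cheap bookkeeping (removing isolated vertices, creating children, taking bipartite complements, which are all $\mathcal{O}(|G(p)|^2)$). Using the crude estimate $|G(p)| \le |G|$, the total work at depth $d$ is
\[
\sum_{p \in \mathcal{N}_d} \mathcal{O}\bigl(|G(p)|^4\bigr) \;\le\; \mathcal{O}\bigl(|G|^3\bigr)\cdot \sum_{p \in \mathcal{N}_d} |G(p)| \;\le\; \mathcal{O}\bigl(k^d |G|^4\bigr).
\]
Summing over $d \in \{0,\ldots,h\}$ and using that $h$ and $k$ are constants depending only on $\C$ gives a total of $\mathcal{O}(|G|^4)$ for the construction of $T(G)$.

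It remains to handle the construction of $S(G)$ from $T(G)$. The tree $T(G)$ has at most $\mathcal{O}(|G|)$ nodes (the number of children at depth $d{+}1$ is at most $\sum_{p \in \mathcal{N}_d} |G(p)| \le k^d |G|$, and we sum over finitely many depths), and the number of edges added between $V(G)$ and $V(T(G))$ equals $\sum_p |V(G(p))| = \mathcal{O}(|G|)$ by the same invariant. Adding these edges can clearly be done in linear time in the output size, which is $\mathcal{O}(|G|)$. The overall running time is therefore dominated by the branching computations and is $\mathcal{O}(|G|^4)$, as claimed. The only nontrivial ingredient is the level-by-level accounting, which in turn rests on the fact that the overlap bound of \Cref{lem:branching_compute} and the height bound of \Cref{lem:tree-height} are both absolute constants.
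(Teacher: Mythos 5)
Your proof is correct and rests on exactly the same two ingredients as the paper's: the $\mathcal{O}(|G(p)|^4)$ cost of each branching from \Cref{lem:branching_compute}, the overlap bound $k$ giving $\sum_i |G_i| \le k\,|G(p)|$, and the constant height bound from \Cref{lem:tree-height}. The only difference is bookkeeping — you sum level by level using the invariant $\sum_{p \in \mathcal{N}_d} |G(p)| \le k^d|G|$, whereas the paper recurses on the height and bounds $\sum_i |G_i|^4 \le \bigl(\sum_i |G_i|\bigr)^4$ — so this is essentially the paper's argument (your version even yields a slightly better constant).
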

\begin{proof}
Let $h$ denote the height of $T(G)$ computed by the algorithm. We will show that the runtime of the algorithm is $\mathcal{O}_h(|G|^4)$,
where the subscript \(h\) in the \(\mathcal{O}\)-notation hides constant factors depending on \(h\).
By \Cref{lem:tree-height}, the height \(h\) is bounded by a constant and the statement of the lemma follows.

If $h=0$, then the algorithm only has to check whether $G$ is edgeless, and this clearly can be done in time $\mathcal{O}(\|G\|)\le\mathcal{O}(|G|^4)$.
If $h > 0$, then the algorithm removes all isolated vertices, computes the branching of $G$, and then recurses into (the bipartite complements of) each branch.
By \Cref{lem:branching_compute}, each vertex of $G$ is in at most $k$ of the branches \(G_1,\dots,G_m\), and thus \(\sum_{i=1}^m |G_i| \le k|G|\).
The computation of the branching can be done in time $\mathcal{O}(|G|^4)$, by \Cref{lem:branching_compute}.
Thus, by induction we can bound the runtime~by
\[
   \mathcal{O}(|G|^4) +
    \sum_{i=1}^m
   \mathcal{O}_{h-1}(|G_i|^4)
   \le 
   \mathcal{O}(|G|^4) +
   \mathcal{O}_{h-1}\biggl(\sum_{i=1}^m|G_i|\biggr)^4
   \le 
   \mathcal{O}(|G|^4) +
   \mathcal{O}_{h-1}(k|G|)^4
   \le 
   \mathcal{O}_{h}(|G|^4). \qedhere
\]
\end{proof}

\subsection{Interpretation $I$}
\label{sec:interp}

\begin{lemma}\label{lem:recover_G_from_S}
    There exists an interpretation $I$ such that $G = I(S(G))$ for every $G \in \C$.
\end{lemma}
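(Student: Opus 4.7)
The interpretation $I = (\psi, \nu)$ will use $\nu(x) \coloneqq L(x) \vee R(x)$, so that the vertex set of $I(S(G))$ is exactly $V(G)$. The formula $\psi(x, y)$ will recover the edges of $G$ by finding a witness node $p$ in the tree $T(G)$ adjacent in $S(G)$ to both $x$ and $y$, and reading off the parity of its depth.

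The conceptual core is the following adjacency claim: for $u \in L(G)$, $v \in R(G)$ and any node $p$ of $T(G)$ with $u, v \in V(G(p))$ but no child $p'$ of $p$ satisfying $u, v \in V(G(p'))$, we have $uv \in E(G)$ if and only if the depth of $p$ in $T(G)$ is odd. I would verify this by a case split on the parity of the depth of $p$. By construction, $G(p)$ is an induced subgraph of $G$ at even depth (obtained by an even number of successive bipartite complementations) and the bipartite complement of an induced subgraph of $G$ at odd depth. If $p$ sits at even depth and $uv \in E(G)$, then $uv$ is an edge of $G(p)$; neither endpoint is isolated in $G(p)$, so both survive the isolated-vertex removal that precedes branching, and property~(ii) of the right branching computed at $p$ forces some branch $G_i$ to contain both $u$ and $v$. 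Hence $V(G(p_i)) = V(G_i) \supseteq \{u, v\}$, contradicting the choice of $p$. The odd-depth case is symmetric: $uv \notin E(G)$ would make $uv$ an edge of the bipartite complement $G(p)$, and the left branching at $p$ would again produce a child containing both. Such a node $p$ always exists because $r$ itself satisfies $u, v \in V(G(r)) = V(G)$ and children only shrink vertex sets, so the nodes containing $\{u,v\}$ form a subtree rooted at $r$, any leaf of which is a valid witness; consistency of the conclusion then forces all such witnesses to have depths of the same parity.

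To express the claim in first-order logic, I would invoke Lemma~\ref{lem:tree-height}, which bounds the height of $T(G)$ by a constant $h$ depending only on $\Cc$. Tree nodes are precisely the vertices satisfying $\neg L(x) \wedge \neg R(x)$ and the root is identified by $U$. For each $d \in \{0, \ldots, h\}$ there is a formula $\delta_d(x)$ asserting ``$x$ is a tree node at depth $d$'', written as the existence of a non-backtracking walk of length $d$ from the $U$-vertex to $x$ along tree-adjacencies; uniqueness of such walks in a tree ensures this formula is correct. The formula $\psi(x, y)$ then states that $x$ and $y$ lie on opposite sides of the bipartition and that there exists a tree node $p$, adjacent in $S(G)$ to both $x$ and $y$, no child of which is adjacent to both, and whose depth $\delta_d(p)$ is odd (disjoining over odd $d \le h$); taking $\psi(x,y) \vee \psi(y,x)$ makes the relation symmetric and yields $G = I(S(G))$.

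The main obstacle is the adjacency claim itself, where one must carefully track the alternation of bipartite complements along $T(G)$ and confirm that the removal of isolated vertices before branching does not block the contradiction when $uv$ is actually an edge of $G(p)$ --- this uses that edges always have non-isolated endpoints. Once the claim is proved, assembling $\psi$ is routine thanks to the bounded tree height and the fact that $S(G)$ comes equipped with the unary markers $L$, $R$ and $U$.
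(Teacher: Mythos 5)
Your proof is correct and follows essentially the same route as the paper: both establish the key claim that a deepest node $p$ with $u,v\in V(G(p))$ determines adjacency via the parity of its depth (your "depth odd" is the paper's "even number of vertices on the root-to-$p$ path"), and then encode the check in first-order logic using the constant height bound of \cref{lem:tree-height}. The only cosmetic difference is that the paper argues the claim by induction on the path length while you argue it directly from the parity of the number of bipartite complementations defining $G(p)$; both hinge on the same use of property~2 of \cref{def:branching} and the survival of non-isolated endpoints.
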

\begin{proof}
    First we prove the following claim.
    Let $u \in L(G)$ and $v \in R(G)$ be two vertices and let $p$ be a node of $T(G)$ such that $u,v \in V(G(p))$ 
    and no child $q$ of $p$ satisfies $u,v \in V(G(q))$. 
    Then $uv \in E(G)$ if and only if the number $d$ of vertices on the path from $p$ to the root of $T(G)$ is even. 

    We prove this statement by induction on $d$.
    If $d=1$, then $p$ is the root of $T(G)$. If $p$ has no children, then $G$ is edgeless and we are done. Otherwise $p$ has children, and these children were created by a left branching $G_1,\ldots, G_m$. Assume for contradiction that $uv \in E(G)$. Then by \Cref{def:branching} there exists $i \in [m]$ such that $uv \in E(G_i)$, and so both $u$ and $v$ are in $V(G(p_i))$, a contradiction to the assumption that $p$ has no child with this property.

    For the induction step assume that $d>1$. Let $q$ be the child of the root that is on the path from the root to $p$. Note that $G(q)$ is the bipartite complement of an induced subgraph of $G$ that contains both $u$ and $v$.
    Moreover, the subtree of $T(G)$ rooted at $p$ satisfies the assumption of our claim for $d'=d-1$. Since the adjacency between $u$ and $v$ switches between the root graph and $G(q)$, and the parity of $d'$ is the opposite of the parity of $d$, the result follows. 

    We now describe an interpretation $I$ that reconstructs $G$ from $S(G)$. The formula $\psi(x,y)$ that determines the adjacency between $x$ and $y$ works as follows: It checks that $x$ and $y$ are vertices from \(G\) for which $L(x)$ and $R(y)$ holds, and finds a node $z$ of $T(G)$ that is adjacent to both $x$ and $y$ in \(S(G)\) such that no child of $z$ is adjacent to both $x$ and $y$ in \(S(G)\). It then counts the number of vertices on the path in \(T(G)\) from $z$ to the root. Since \Cref{lem:tree-height} bounds the height of $T(G)$ by a constant,
    and the root as well as the two sides of \(G\) are marked by separate unary predicates in \(S(G)\),
    all these conditions can be easily expressed in first-order logic. 
\end{proof}

\subsection{$\D$ is transducible from $\C$ }
\label{sec:transduce}

\begin{lemma}\label{lem:transducible}
    There is a transduction $\ST$ such that $S(G) \in \ST(G)$ for every \(G \in \C\).
\end{lemma}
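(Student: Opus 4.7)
The plan is to proceed by induction on $d \in \{0, 1, \ldots, h\}$, where $h$ is the constant height bound on $T(G)$ from Lemma~\ref{lem:tree-height}, showing that a partial version of $S(G)$ is transducible from $G$. Concretely, I would define $\hat S_d(G)$ to be the graph on $V(G) \cup V(T_d(G))$ (where $T_d(G)$ denotes the subtree of $T(G)$ induced by nodes of depth ${\le}d$) containing the tree-edges of $T_d(G)$, the node-vertex edges $vp$ with $p$ at depth ${\le}d$ and $v \in V(G(p))$, and, auxiliary to the induction, the edges of $G$ themselves (kept via a distinguishing edge predicate). I would prove inductively that $\hat S_d(G) \in \ST_d(G)$ for some transduction $\ST_d$. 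Since $S(G)$ is obtained from $\hat S_h(G)$ simply by discarding the auxiliary $G$-edges --- a step readily absorbed into the edge-defining formula of one final transduction --- this yields the required $\ST$. The base case $d = 0$ is immediate: a two-copy transduction appends a single new vertex $r$ marked by $U$ and adjacent to all of $V(G)$, leaving the original edges of $G$ intact.

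For the inductive step, I would assume $\hat S_d(G) \in \ST_d(G)$ and construct a transduction $\ST'$ sending $\hat S_d(G)$ to $\hat S_{d+1}(G)$. Two first-order facts are the key ingredients. First, for every depth-$d$ node $p$ in $\hat S_d(G)$, the graph $G(p)$ is FO-definable inside $\hat S_d(G)$: its vertex set is $N(p) \cap V(G)$, and its edge relation is either the restriction of $E(G)$ (if $d$ is even) or the bipartite complement thereof (if $d$ is odd). Second, Lemma~\ref{lem:branching_compute} provides a fixed formula $\branch(x,y)$ such that, with a suitable unary coloring, $\branch$ exposes the algorithm's branching on any graph from $\C$. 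Combining these, once an appropriate marking is in place, the branches at level $d+1$ --- together with their leaders --- are uniformly FO-definable across all depth-$d$ nodes.

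The main obstacle will be that a single vertex $v \in V(G)$ can belong to $V(G(p))$ for up to $k^d \le k^h$ distinct depth-$d$ tree nodes $p$ (by iterating the overlap bound), and no single unary coloring of $V(G)$ simultaneously serves all the colorings required by Lemma~\ref{lem:branching_compute} for each $G(p)$. To circumvent this, $\ST'$ would create $k^h$ copies of $\hat S_d(G)$ and use unary predicates to associate each pair $(v, p)$ with $v \in V(G(p))$ to a distinct copy of $v$, which is then marked according to the $V(G(p))$-coloring prescribed by Lemma~\ref{lem:branching_compute}. Within each copy, relativized by the association predicates, the fixed formula $\branch$ uniformly identifies, for every $p$, the branches of $G(p)$ and their leaders. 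Each new depth-$(d+1)$ tree node $q_j$ would then be materialized as a designated copy of its leader $\ell_j \in V(G)$, and the edge-defining formula of $\ST'$ would add the parent-child edge $pq_j$ and the node-vertex edges $vq_j$ for $v \in V(G_j)$. Composing the $h$ transductions and finally discarding the auxiliary $G$-edges yields $\ST$. The delicate part of the proof is the bookkeeping for consistently encoding the copy-to-$(v,p)$ association via unary predicates, but this fits comfortably within a constant number of copies and unary predicates since $k^h$ is bounded.
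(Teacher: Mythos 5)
Your overall architecture --- induction on the depth $d$, keeping the edges of $G$ around until the end, defining $G(p)$ as $N(p)\cap V(G)$ with the ($d$-times complemented) restriction of $E(G)$, invoking the definable branching formula $\branch$ of \cref{lem:branching_compute}, and materializing each new node as a copy of its leader --- matches the paper's proof. However, there is a genuine gap at exactly the point you flag as ``delicate bookkeeping.'' You claim that because each $v\in V(G)$ lies in at most $k^h$ of the graphs $G(p)$ with $p$ at depth $d$, one can use $k^h$ copies and unary predicates to associate each pair $(v,p)$ to a distinct copy of $v$, and then relativize $\branch$ ``by the association predicates.'' For the formula to exploit this association it must be able to decide, given a copy of $v$ and a node $p$, whether that copy is the one associated with $p$; since a unary predicate on a copy of $v$ carries only a bounded amount of information, this requires that the (up to $k^h$) nodes $p$ containing $v$ be distinguishable by boundedly many labels --- in other words, it requires a proper coloring with boundedly many colors of the intersection graph of the sets $V(G(p))$. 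Bounded overlap does \emph{not} imply this: a branching in which every pair of branches shares a distinct private left-vertex has overlap $2$ but a complete intersection graph, so no bounded number of copies or unary marks can disambiguate the association.

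This missing ingredient is precisely the content of \cref{lem:branching_graph_sparse,lem:tree_coloring} in the paper: the intersection graph of the branches computed by the algorithm is $d$-degenerate, hence properly $(d{+}1)$-colorable, and this coloring extends to a coloring of $T(G)$ under which same-colored nodes have disjoint vertex sets $V(G(p))$. Crucially, the proof of \cref{lem:branching_graph_sparse} is not a consequence of the overlap bound; it uses the hypothesis that $\C$ has inherently linear neighborhood complexity (a non-degenerate intersection graph would let one transduce a class with too many distinct neighborhoods). With that coloring in hand, the paper does not even need your per-pair copy association: it keeps a single copy, relativizes $\branch$ separately within each color class via \cref{lem:relativize} (where the graphs $G(p)$ are pairwise disjoint so the colorings $G^+(p)$ do not conflict), and uses one vertex-copy per color only to create the new children. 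So your proposal is recoverable, but only after proving the degeneracy of the branch intersection graph, which is the substantive step your argument omits.
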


In the proof of \Cref{lem:transducible} we will use the following lemma that tells us that the intersection graph of graphs in a branching computed by our algorithm is sparse.

\begin{lemma}\label{lem:branching_graph_sparse}
    There exists $d \in \N$ with the following property. For any $G \in \C$, let $G_1,\ldots, G_m$ be the branching of $G$ computed by the algorithm and let $\ell_1,\ldots, \ell_m$ be the leaders of the branches. Define an intersection graph $K(G)$ with vertex set $\{\ell_1,\ldots, \ell_m\}$ in which there is an edge between $\ell_i, \ell_j$ with $i\not=j$ if and only if $V(G_i) \cap V(G_j) \not=\emptyset$. Then \(K(G)\) is \(d\)-degenerate, and in particular admits a proper $(d+1)$-coloring.
\end{lemma}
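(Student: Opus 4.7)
The plan is to show that every induced subgraph $K(G)[S]$ has at most $O(|S|)$ edges, which forces a vertex of bounded degree and yields $d$-degeneracy for a constant $d$ depending only on $\C$. The key leverage is the inherently linear neighborhood complexity of $\C$, applied to a carefully chosen auxiliary bipartite graph.

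Concretely, I will introduce the \emph{vertex-branch incidence graph} $M(G)$ with vertex classes $L(G)$ and the set of leaders $\{\ell_1,\ldots,\ell_m\}$ (formally disjointed by the markings distinguishing leaders from other vertices), where $v$ is adjacent to $\ell_i$ exactly when $v \in V(G_i)$. By the definability clause of \Cref{lem:branching_compute}, the edge relation of $M(G)$ is captured by the formula $\branch$ together with the unary coloring it requires, so there is a fixed transduction $\ST$ (depending only on $\C$) such that $M(G) \in \ST(G)$ for every $G \in \C$. Hence $\{M(G) : G \in \C\}$ is transducible from $\C$, and by inherent linear neighborhood complexity there is a constant $c$ with $|\{N_{M(G)}(v)\cap S : v \in V(M(G))\}| \le c|S|$ for every $S \subseteq V(M(G))$. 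Moreover, \Cref{lem:branching_compute} guarantees that each $v\in L(G)$ lies in at most $k$ branches, so every $M(G)$-neighborhood from the $L(G)$-side has size at most $k$.

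Given this, bounding edges in $K(G)[S]$ for $S \subseteq \{\ell_1,\ldots,\ell_m\}$ is routine: each edge $\ell_i\ell_j$ is witnessed by some $v \in L(G)$ with $\{\ell_i,\ell_j\} \subseteq N_{M(G)}(v)\cap S$, so grouping witnesses by their pattern $N_{M(G)}(v)\cap S$ gives $|E(K(G)[S])| \le c|S|\cdot\binom{k}{2}$, since there are at most $c|S|$ distinct patterns and each of size at most $k$ contributes $\binom{k}{2}$ pairs. Therefore the average, and hence minimum, degree of every induced subgraph $K(G)[S]$ is at most $2c\binom{k}{2}$, so $K(G)$ is $d$-degenerate with $d \coloneqq 2c\binom{k}{2}$, and the proper $(d+1)$-coloring follows from the standard greedy procedure on any degeneracy ordering. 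The right-branching case is entirely symmetric.

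The main obstacle is the first step: uniformly realizing $M(G)$ as a transduction of $G$. This is not deep but requires assembling the markings that witness $\branch$ into a single transduction, together with a small amount of bookkeeping to formally separate the leader copies from the original $L(G)$. Once this is in place, the conclusion is a clean application of linear neighborhood complexity to the auxiliary bipartite graph whose one-sided degree is already controlled by the overlap bound.
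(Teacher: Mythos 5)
Your proposal is correct and follows essentially the same route as the paper: both arguments transduce the vertex--branch membership graph from $G$ using the definability clause of \Cref{lem:branching_compute}, invoke inherently linear neighborhood complexity on that transduced class, and use the overlap bound $k$ to charge each edge of $K(G)[S]$ to at most $\binom{k}{2}$ pairs per witness. The only difference is presentational — the paper argues by contradiction via a minimal covering set of witnesses with pairwise distinct traces on $S$, whereas you count the distinct neighborhood patterns directly — but the underlying mechanism is identical.
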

\begin{proof}
    We proceed by assuming that the class $\{K(G)\ \colon\ G \in \C\}$ is not $d$-degenerate for any $d$,
    and show that this contradicts our assumption that \(\C\) has inherently linear neighborhood complexity.
    
    Fix an arbitrary \(d \in \N\).
    Towards contradiction, choose $G \in \C$ such that $K(G)$ defined as in the statement of the lemma is not $d$-degenerate.
    Let $k$ be the number from \Cref{lem:branching_compute} that bounds the overlap of the considered branching $G_1,\ldots, G_m$.
    By symmetry, we will assume that the branching is a left branching.
    To simplify the notation we set $A\coloneqq L(G)$ and $B\coloneqq R(G)$.
    The leaders $\ell_1,\ldots, \ell_m$ of the branching are contained in $A$.

    Since $K(G)$ is not $d$-degenerate, we know that its vertex set contains a subset $S$ such that in $K(G)[S]$ every vertex has degree larger than $d$. Consequently, the number of edges in $K(G)[S]$ is larger than $d|S|/2$.  
    For every edge $\ell_i\ell_j$ of $K(G)[S]$ and a vertex $v \in A$, we say that $v$ is \emph{responsible} for $\ell_i\ell_j$ if $v \in V(G_i) \cap V(G_j)$. Note that by the definition of edges of $K(G)$, we have that for every edge of $K(G)[S]$ there is at least one vertex in $A$ that is responsible for it (every vertex $v$ with $v \in V(G_i) \cap V(G_j)$ has to be in $A$, because every vertex in $B$ is in exactly one $G_i$ by \Cref{def:branching}). 
    We note that for each edge there can be more than one vertex responsible for it, and each vertex can be responsible for more than one edge. However, since each $v \in A$ is in at most $k$ branches, each vertex is responsible for at most ${k \choose 2}$ edges. 
    Let $P$ be an inclusion-wise minimal set of vertices such that for every edge $\ell_i\ell_j$ of $K(G)[S]$ there is some vertex $v \in P$ that is responsible for $\ell_i\ell_j$. 
    Since there are at least $d|S|/2$ edges in $K(G)[S]$ and each vertex is responsible for at most ${k \choose 2}$ edges, we see that 
    $|P| \ge d|S|/\bigl(2{k \choose 2}\bigr)$. 

    Let $H$ be the graph with vertex set $A$ and edges constructed as follows: For each $i \in [m]$, 
    we put an edge between $\ell_i$ and every $v \not=\ell_i$ with $v\in  V(G_i)$.
    Since the branching is definable as described in \cref{lem:main}, there is a transduction $\T$ (depending only on $\C$) such that $H \in \T(G)$. 
    Note that no two vertices $u,v  \in P \setminus S$ can have the same neighborhood in $S$ (in graph $H$), as this would contradict the minimality of $P$.

    Hence, in $H$ each of the vertices in  $P\setminus S$ is adjacent to a different neighborhood in $S$.
    Since there are at least 
 $d|S|/\bigl(2{k \choose 2}\bigr) - |S| = (d/(2{k \choose 2})-1)|S|$ vertices in $P \setminus S$ and since the constant \(d\) can be chosen arbitrarily large, and \(H \in \T(\C)\), this contradicts our assumption that
    \(\C\) has inherently linear neighborhood complexity.
\end{proof}

\begin{lemma}\label{lem:tree_coloring}
    There exists a finite set $C$ of colors such that every $G \in \C$ admits a coloring $\lambda: V(T(G)) \to C$ where all the nodes $p,q$ of the same color satisfy $V(G(p)) \cap V(G(q)) = \emptyset$.
\end{lemma}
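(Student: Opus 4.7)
The plan is to color each node of $T(G)$ by a tuple that encodes both its depth and the sibling colors accumulated along the root-to-node path. By \Cref{lem:tree-height}, the height of $T(G)$ is bounded by a constant $h$ depending only on $\C$; by \Cref{lem:branching_graph_sparse}, there is a constant $d$ such that for every internal node $r$ of $T(G)$, the set of children of $r$ admits a proper coloring $\chi_r$ with at most $d+1$ colors satisfying: any two distinct children $p',q'$ of $r$ with $\chi_r(p')=\chi_r(q')$ satisfy $V(G(p'))\cap V(G(q'))=\emptyset$. Here I use that each child $p'$ of $r$ corresponds to a branch $G_i$ with $V(G(p'))=V(G_i)$, since bipartite complementation preserves the vertex set.

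For each node $p$ at depth $j$ with root-to-$p$ path $r=p_0,p_1,\ldots,p_j=p$, I will define
\[
\lambda(p)\;\coloneqq\;\bigl(j,\;\chi_{p_0}(p_1),\;\chi_{p_1}(p_2),\;\ldots,\;\chi_{p_{j-1}}(p_j)\bigr).
\]
The color palette $C$ is then the finite set $\bigcup_{j=0}^{h}\{j\}\times\{1,\ldots,d+1\}^{j}$, of size at most $(h+1)(d+1)^{h}$, which is a constant depending only on $\C$.

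It remains to verify that same-colored nodes have disjoint graphs. Suppose $p\neq q$ satisfy $\lambda(p)=\lambda(q)$; then $p$ and $q$ have the same depth $j$ and the same color sequence, and there is a smallest index $i\leq j$ at which their depth-$i$ ancestors $p_i,q_i$ differ. Then $p_i,q_i$ are distinct children of the common parent $p_{i-1}=q_{i-1}$ with $\chi_{p_{i-1}}(p_i)=\chi_{q_{i-1}}(q_i)$, so by construction of $\chi_{p_{i-1}}$ we have $V(G(p_i))\cap V(G(q_i))=\emptyset$. Since descending along an edge of $T(G)$ restricts the vertex set to that of a branch (possibly with isolated vertices removed), we have $V(G(p))\subseteq V(G(p_i))$ and $V(G(q))\subseteq V(G(q_i))$, yielding $V(G(p))\cap V(G(q))=\emptyset$ as required.

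I anticipate no genuine obstacle: the technical content is entirely packaged inside \Cref{lem:branching_graph_sparse} (already established), and what remains is a straightforward tree-coloring argument exploiting the bounded height of $T(G)$. The only minor point to be careful about is that the comparison of vertex sets is unaffected by the alternation between left and right branchings and by the bipartite complementation attached to each child, both of which preserve the underlying vertex set of a branch.
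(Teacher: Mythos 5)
Your proof is correct and takes essentially the same route as the paper: the paper also colors each node by the accumulated sequence of sibling-colors along the root-to-node path (using the proper coloring of the intersection graph from \Cref{lem:branching_graph_sparse} at each internal node), with the bounded height from \Cref{lem:tree-height} keeping the palette finite. Your explicit depth coordinate is redundant (the tuple length already encodes it) but harmless, and your verification via the first point of divergence matches the paper's inductive check.
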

\begin{proof}
    We set $C_0\coloneqq \{0\}$ and $C_{i+1}\coloneqq C_i \times [d+1]$ where $d$ is the constant from \Cref{lem:branching_graph_sparse}. 
    We set $C\coloneqq  C_0 \cup \ldots \cup C_h$, where $h$ is the constant from \Cref{lem:tree-height}.

    We color the root of $T(G)$ by color $0$ and proceed downwards coloring the remaining vertices. Let $p$ be any node that is already colored whose children are not yet colored, and let $c$ be the color of $p$. Let $G_1,\ldots, G_m$ be the branching of $G(p)$ computed by the algorithm. Let  $\alpha: \{\ell_1,\ldots,\ell_m\} \to [d+1]$ be a proper coloring of $K(G(p))$ from \Cref{lem:branching_graph_sparse}.  For each $i \in [m]$ we color the child $p_i$ of $p$ corresponding to (the bipartite complement of) $G_i$ with the color $\lambda(p_i)\coloneqq (c,\alpha(\ell_i)) \in C_i$.
    By induction on the depth $\delta$, one easily verifies that vertices of distance at most $\delta$ from the root are colored as claimed by the lemma.
\end{proof}

    The following lemma lets us relativize formulas to each color class given by \Cref{lem:tree_coloring}.

    \begin{lemma}\label{lem:relativize}
    For every pair of formulas \(\nu(x,y)\) and \(\phi(y,z)\) there is a formula \(\psi(x,y,z)\) such that the following holds.
    Consider a graph \(S\) and a vertex set \(P \subseteq V(S)\) where each \(p \in P\) is associated with an induced subgraph \(G(p)\) of \(S\) such that
    \begin{itemize}[nosep]
        \item the graphs \(G(p)\) with \(p \in P\) are pairwise vertex-disjoint, and
        \item for all \(p \in P\) and \(v \in V(S)\) we have \(v \in V(G(p))\Leftrightarrow S \models \nu(p,v)\).
    \end{itemize}
    For each $p \in P$, let $G^+(p)$ be obtained from $G(p)$ by adding some colors and 
    let \(S^+\) be the graph obtained by adding the colors of each disjoint subgraph \(G^+(p)\) onto \(S\).
    Then for all \(p \in P\) and \(u,v \in V(S)\),
    \[
       [u,v \in V(G^+(p)) \text{~and~} G^+(p) \models \phi(u,v)] \qquad\textrm{if and only if}\qquad S^+ \models \psi(p,u,v).
    \]
    \end{lemma}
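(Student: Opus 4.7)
The plan is to build $\psi$ by the standard syntactic relativization of $\phi$ to the unary predicate $\nu(x,\cdot)$ that defines $V(G(p))$ inside $S$. Concretely, I would first set
\[
  \psi(x,y,z) \;\coloneqq\; \nu(x,y) \wedge \nu(x,z) \wedge \widehat{\phi}(x,y,z),
\]
where $\widehat{\phi}$ is defined by induction on the structure of $\phi$: atomic subformulas (adjacency, equality, and any unary color predicate) are left unchanged; Boolean connectives commute with the translation; every $\exists w.\,\chi$ is rewritten as $\exists w.\,(\nu(x,w) \wedge \widehat{\chi})$, and every $\forall w.\,\chi$ is rewritten as $\forall w.\,(\nu(x,w) \rightarrow \widehat{\chi})$. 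Intuitively, this forces all quantifiers appearing in $\phi$ to range only over vertices that belong to $V(G(p))$, which by the second hypothesis are exactly the witnesses of $\nu(p,\cdot)$ in $S$.

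Correctness follows by a routine structural induction on $\phi$, which I would state as: for every subformula $\chi(w_1,\dots,w_k)$ of $\phi$, every $p\in P$, and every tuple $(a_1,\dots,a_k)\in V(G^+(p))^k$, we have $G^+(p)\models \chi(a_1,\dots,a_k)$ if and only if $S^+\models \widehat{\chi}(p,a_1,\dots,a_k)$. The base case handles atomic formulas: since $G(p)$ is an induced subgraph of $S$, adjacency and equality of elements of $V(G^+(p))$ agree in $G^+(p)$ and in $S^+$; and since the color predicates of $G^+(p)$ are copied verbatim onto $S^+$, color predicates evaluated on vertices of $V(G^+(p))$ agree as well. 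The inductive step for Boolean connectives is immediate, and the inductive step for quantifiers uses exactly the guard $\nu(x,w)$, together with the hypothesis that $V(G(p))=\{w : S\models \nu(p,w)\}$, to reduce quantification over $V(G^+(p))$ in $G^+(p)$ to guarded quantification over $V(S^+)=V(S)$ in $S^+$. The outer conjuncts $\nu(x,y)\wedge\nu(x,z)$ then encode the requirement that $u,v\in V(G^+(p))$.

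The one subtlety that needs a sentence of justification is the base case for color predicates: a color added only to $G^+(p)$ must behave the same way when read in $S^+$. This is precisely where the pairwise disjointness of the subgraphs $G(p)$ is needed, since it guarantees that adding the colors of each $G^+(p)$ to $S$ produces a well-defined coloring of $S^+$ in which a vertex in $V(G^+(p))$ carries exactly its colors from $G^+(p)$, with no interference from other subgraphs. I do not expect any genuine obstacle here; the entire argument is a direct syntactic translation, and the lemma should follow in a handful of lines once the inductive translation $\widehat{\phi}$ is written out.
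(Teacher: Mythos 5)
Your proposal is correct and matches the paper's proof exactly: the paper also defines $\psi(x,y,z)\coloneqq\nu(x,y)\wedge\nu(x,z)\wedge\phi'(x,y,z)$ where $\phi'$ relativizes all quantifiers of $\phi$ via $\nu(x,\cdot)$. Your additional remarks on the structural induction and on where pairwise disjointness is used for the color predicates are accurate elaborations of what the paper leaves implicit.
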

    \begin{proof}
        We choose \(\psi(x,y,z) \coloneqq  \nu(x,y) \land \nu(x,z) \land \phi'(x,y,z)\),
        where \(\phi'\) is obtained from \(\phi\) by relativizing all quantifiers via \(\nu\).
        More precisely, if $\exists a\ \xi(a)$ is a subformula of $\phi(y,z)$ then this is replaced in $\phi'(x,y,z)$ by 
        $\exists a\ \nu(x,a) \land \xi(a)$, and similarly for universal quantifiers.
    \end{proof}

\begin{proof}[Proof of \Cref{lem:transducible}]
     For $0 \le d \le h$ we define $S_d(G)$ to be the graph obtained from $S(G)$ by
     marking all nodes at depth \(d\) with a unary predicate \(T_d\),
     deleting all nodes of $T(G)$ at depth larger than $d$, 
     marking vertices of \(G\) by another unary predicate, 
     and adding all the edges of $G$. 
     (The last two steps are well-defined since \(V(G) \subseteq V(S(G))\).)
     Let \(h\) be the height bound of \Cref{lem:tree-height}.
     Then in particular, $S_0(G)$ consists of $G$ together with a single vertex adjacent to all the vertices of $G$,
     and \(S_h(G)\) is the graph obtained from \(S(G)\) by adding the edges of \(G\).

     The graph \(S_0(G)\) can clearly be obtained from \(G\) by a fixed transduction \(\ST_0\).
     We claim that for each $1 \le d \le h$ there is a transduction $\ST_{d}$ that transduces $S_{d}(G)$ from $S_{d-1}(G)$. 
     Once we prove this, the lemma follows by composing these transductions, that is, by setting $\ST = \ET\circ \ST_h \circ \ldots \circ \ST_0$, where $\ET$ is the transduction that removes all edges from the copy of $G$ in $S_h(G)$.

    We fix \(0 \le d < h\) and describe a transduction that constructs \(S_{d+1}(G)\) from \(S_{d}(G)\).
    By the definition of \(S_d(G)\), 
    it is trivial to define a formula \(\nu(x,y)\) such that for all \(p,v \in V(S_d(G))\) we have 
    \(S_d(G) \models \nu(p,v)\) if and only if
    \(p \in T_d\) and \(v \in V(G(p))\).
    For each \(p \in T_d\), \Cref{lem:branching_compute} yields a coloring \(G^+(p)\) of \(G(p)\)
    and a branching such that for all $u,v \in V(G(p))$ we have $G^+(p) \models \branch(u,v)$ if and only if
    $u$ is the leader of a branch that contains \(v\).
    
    Let $\lambda : T_d \to C$ be the coloring of \(T(G)\) from \Cref{lem:tree_coloring}, restricted to \(T_d\).
    For each color \(c \in C\), we apply
    \Cref{lem:relativize} with
    \(\phi(y,z) \coloneqq  \branch(y,z)\), \(S = S_d(G)\), \(P \coloneqq  \{p \in T_d\ \colon\ \lambda(p) = c \}\),
    and \(G^+(p)\) being the colorings of the graphs \(G(p)\) defined above for each \(p \in P\).
    By \Cref{lem:tree_coloring}, these graphs are pairwise vertex-disjoint.
    Hence, we obtain
    a formula \(\psi^c(x,y,z)\) and a unary coloring \(S^c\) of \(S_d(G)\) such that for each \(p \in T_d\) of color \(c\) and \(u,v \in V(S_d(G))\),
    we have \(S^c \models \psi^c(p,u,v)\) if and only if \(u,v \in V(G(p))\) and $G^+(p) \models \branch(u,v)$.

    By aggregating these formulas and colorings over all \(c \in C\), we obtain
    a single formula \(\psi(x,y,z)\) and coloring \(S_d^+(G)\) of \(S_d(G)\)
    such that for each \(p \in T_d\) and \(u,v \in V(S_d(G))\), we have
    \(S^+_d(G) \models \psi(p,u,v)\) if and only if 
    \(u,v \in V(G(p))\) and $G^+(p) \models \branch(u,v)$.
    That is, \(\psi(p,u,v)\) holds exactly when in the branching of \(G(p)\),
    \(u\) is the leader of a branch that contains \(v\).

    Consider now the following process and observe that it constructs \(S_{d+1}(G)\):
    \begin{itemize}
        \item Start with the graph \(S_d(G)\) and color it with a bounded number of colors to obtain \(S_d^+(G)\).
        \item For each vertex \(u\) in \(V(G)\), and each color \(c \in C\),
            if there is a (by \Cref{lem:tree_coloring} unique) node \(p \in T_d\) with \(\lambda(p)=c\)
            such that \(u\) is a leader in the branching of \(G(p)\), that is, if \(S_d^+(G) \models \exists z \psi(p,u,z)\),
            then add the copy \((u,c)\) of \(u\) as a child of \(p\) to the tree.

        \item For each vertex \(v\) in \(V(G)\), and each color \(c \in C\),
            if there is a leaf \((u,c)\) with parent \(p\) in the tree such that \(v\) is in the branch of \(p\) with leader \(u\),
            that is if \(S_d^+(G) \models \psi(p,u,v)\),
            then add an edge between \(v\) and \((u,c)\).
    \end{itemize}
    It is easy to see that this process can be executed by a transduction that (1) first colors \(S_d(G)\),
    then (2) creates up to \(|C|\) additional copies of each vertex, then (3) applies a fixed interpretation,
    and finally (4) removes the unused copies.
    This shows that \(S_{d+1}(G)\) can indeed be obtained from \(S_d(G)\) by a fixed transduction.
\end{proof}

\subsection{$\D$ has bounded expansion}
\label{sec:be}

Recall that $\C$ has inherently linear neighborhood complexity.
By \Cref{lem:transducible}, the graph class $\D$ is transducible from $\C$,
and thus $\D$ also has inherently linear neighborhood complexity.
This, together with the fact that $\D$ is weakly sparse (proven below), will show that $\D$ has bounded expansion.

\begin{lemma}
    $\D$ is $\tau$-degenerate for some $\tau$ depending only on $\C$. In particular,  $\D$ is weakly sparse.
\end{lemma}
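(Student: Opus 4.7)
The plan is to exhibit an explicit peeling order witnessing degeneracy. The crucial structural observation is that $S(G)$ contains \emph{no} edges inside $V(G)$: by construction, every edge of $S(G)$ is either an edge of $T(G)$ or connects some $v \in V(G)$ to a node $p$ of $T(G)$ with $v \in V(G(p))$. Consequently, the neighborhood of any $v \in V(G)$ in $S(G)$ is contained in $V(T(G))$.

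The core estimate will be to bound, for a fixed $v \in V(G)$, the number $N(v)$ of nodes $p$ of $T(G)$ with $v \in V(G(p))$. Let $k$ be the overlap constant from \Cref{lem:branching_compute} and let $h$ be the height bound from \Cref{lem:tree-height}. I will prove by induction on the depth $d$ that the number of nodes $p$ at depth exactly $d$ with $v \in V(G(p))$ is at most $k^d$. The base case $d=0$ is clear since the only such node is the root. For the inductive step, if $v \in V(G(p))$ for some $p$ at depth $d$, then in the branching $G_1,\ldots,G_m$ of $G(p)$ (after deleting isolated vertices), $v$ appears in at most $k$ of the $G_i$, and these correspond to at most $k$ children of $p$ whose associated graphs contain $v$. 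Summing over depths gives
\[
    N(v) \;\le\; \sum_{d=0}^{h} k^d \;=\; \tfrac{k^{h+1}-1}{k-1} \;=:\; \tau,
\]
a constant depending only on $\C$. Since every vertex $v \in V(G)$ has all its neighbors in $T(G)$, we conclude that every $v \in V(G)$ has degree at most $\tau$ in $S(G)$.

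To obtain degeneracy, observe that since $V(G)$ is an independent set in $S(G)$, removing vertices of $V(G)$ one by one in arbitrary order never changes the degree of any remaining vertex of $V(G)$; so we can peel them all off while each of them has degree at most $\tau$ at the moment of removal. After all vertices of $V(G)$ have been removed, what remains is exactly $T(G)$, which is a tree, hence $1$-degenerate and can be peeled off from the leaves. The concatenation of these two peeling orders is a degeneracy ordering of $S(G)$ of width $\max(\tau,1)=\tau$. Since this bound applies to every $G \in \C$, the class $\D = \{S(G) : G \in \C\}$ is $\tau$-degenerate.

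The ``in particular'' part is immediate: if $S(G)$ contained $K_{\tau+1,\tau+1}$ as a subgraph, then this subgraph would be $(\tau+1)$-regular, contradicting the fact that every subgraph of a $\tau$-degenerate graph has a vertex of degree at most $\tau$. Hence $\D$ excludes $K_{\tau+1,\tau+1}$ as a subgraph, so $\D$ is weakly sparse. No serious obstacle is anticipated; the only mild point is making sure to exploit that $V(G)$ is independent in $S(G)$ so that the peeling of $V(G)$-vertices does not interfere with the degree bound.
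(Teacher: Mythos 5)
Your proposal is correct and follows essentially the same route as the paper: the identical induction on depth showing that each $v \in V(G)$ meets at most $k^d$ nodes at depth $d$, the same constant $\tau = \sum_{d=0}^{h} k^d$, and the same case split (a subgraph either contains a $V(G)$-vertex of degree at most $\tau$, or lies entirely in the tree $T(G)$). Your explicit peeling order and the paper's direct verification of the subgraph condition are interchangeable phrasings of the same argument, and the $K_{\tau+1,\tau+1}$ step matches as well.
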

\begin{proof}
	Consider any $G\in \C$.
	We will prove that in $S(G)$, every vertex belonging to $V(G)$ has degree at most $\tau\coloneqq \sum_{d=0}^h k^d$, where $h$ is the bound on the height of $T(G)$ obtained from \Cref{lem:tree-height} and $k$ is the branching overlap from \Cref{lem:branching_compute}. Note that this implies that $G$ is $\tau$-degenerate as follows. Let $H$ be any subgraph of $S(G)$. We distinguish two cases. First, if $H$ contains some vertex $v$ from $V(G)$, then $H$ contains a vertex of degree at most $\tau$. Second, if $H$ does not contain a vertex from $V(G)$, then $H$ is a subgraph of $T(G)$,  and so $H$ contains a vertex of degree at most $1$. In both cases $H$ contains a vertex of degree at most $\tau$, which means that $S(G)$ is $\tau$-degenerate. Moreover, note that from the $\tau$-degeneracy of $S(G)$ it immediately follows that $S(G)$ cannot contain $K_{\tau+1,\tau+1}$ as a subgraph.	

As $\tau$ depends only on $\C$ and $G$ was chosen arbitrarily from $\C$, this means that $\D$ is $\tau$-degenerate and weakly sparse, as desired.

    To prove the claim about $\tau$ stated above, we will prove by induction on $d$ the following statement: In $S(G)$, every vertex $v \in V(G)$ has at most $k^d$ neighbors among the nodes of $T(G)$ at depth $d$. For $d=0$, $v$ has precisely $k^0=1$ neighbors in $T(G)$, namely the root of $T(G)$. For $d>0$, let $N$ be the set of neighbors of $v$ in $T(G)$ at depth $d-1$. Then by the induction hypothesis we have that $|N| \le k^{d-1}$. From the construction of $T(G)$ we know that all the nodes of $T(G)$ at depth $d$ that are neighbors of $v$ are among the children of nodes in $N$. For any $p \in N$, let $G_1,\ldots, G_m$ be the branching of $G(p)$ used in the construction of $T(G)$. Since the overlap of this branching is at most $k$, we know that $v$ is in at most $k$ children of $p$, and so in $S(G)$, $v$ is adjacent to at most $k$ children of $p$. This means that there are at most $k\cdot|N| \le k\cdot k^{d-1} = k^d$ nodes at depth $d$ in $T(G)$ that are neighbors of $v$ in $S(G)$, as desired.
\end{proof}

\begin{lemma}\label{lem:be}
    $\D$ has bounded expansion. 
\end{lemma}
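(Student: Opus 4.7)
The plan is to invoke \Cref{lem:BE_subdivisions}. Since the previous lemma establishes that $\D$ is $\tau$-degenerate, hence weakly sparse, it remains to prove that for every $r \ge 1$ there exists $d \in \N$ bounding $\|H\|/|H|$ whenever some $G \in \D$ contains an $({\le}r)$-subdivision of $H$ as an induced subgraph. Throughout, we use that $\D$ is transducible from $\C$ (\Cref{lem:transducible}), so $\D$ inherits inherently linear neighborhood complexity from $\C$; in particular, every fixed transduction applied to $\D$ yields a class of linear neighborhood complexity.

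Proceeding by contradiction, fix $r \ge 1$ and assume there is a sequence $H_1, H_2, \ldots$ with $\|H_n\|/|H_n| \to \infty$ such that each $H_n$ admits an $({\le}r)$-subdivision appearing as an induced subgraph of some $G_n \in \D$. Pigeonholing on the $r + 1$ possible subdivision lengths, for each $n$ at least $\|H_n\|/(r+1)$ edges of $H_n$ are subdivided by the same value $s \in \{0, 1, \ldots, r\}$; passing to a subsequence we fix $s$, let $H'_n$ be the spanning subgraph of $H_n$ consisting of these edges (with isolated vertices discarded), and observe that $\|H'_n\|/|H'_n| \to \infty$ and that the $s$-subdivision of $H'_n$ is still an induced subgraph of $G_n$. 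If $s = 0$, then $H'_n$ itself embeds as an induced subgraph of $G_n$, contradicting $\tau$-degeneracy of $\D$, since induced subgraphs of a $\tau$-degenerate graph have edge density at most $\tau$. Thus it remains to rule out the case $s \ge 1$.

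For $s \ge 1$, I would construct a single transduction $\T$ (depending only on $s$) whose outputs $\T(\D)$ contain the $1$-subdivisions of all $H'_n$ as induced subgraphs. Given $G_n$ and an induced $s$-subdivision of $H'_n$ on principal vertices $P_n$ and subdivision vertices $S_n$, we mark $P_n$ with a unary predicate $P$, $S_n$ with $S$, all remaining vertices of $G_n$ with $O$, and, for each edge of $H'_n$, distinguish one of its $s$ subdivision vertices as a representative, collecting the representatives under predicate $M$. Let $\psi(x, y)$ be the symmetric first-order formula asserting that one of $x, y$ satisfies $P$ and the other satisfies $M$, and that there exists a path in $G$ of length at most $s$ between $x$ and $y$ all of whose internal vertices lie in $S \setminus M$ and none lie in $O$. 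The $O$-exclusion confines any such path to the embedded $s$-subdivision, and since subdivision vertices of distinct edges are pairwise non-adjacent within an induced subdivision, the path is forced to trace a single subdivision segment. Consequently, $\psi(u, m)$ holds if and only if $m$ is the representative of an edge of $H'_n$ incident to $u$, and taking the induced subgraph of the resulting graph on the vertices marked by $P \cup M$ yields exactly the $1$-subdivision of $H'_n$.

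Finally, the $1$-subdivision of $H'_n$ quantitatively violates linear neighborhood complexity: letting $S$ be the set of its $|H'_n|$ principal vertices, the $\|H'_n\|$ subdivision vertices realize $\|H'_n\|$ pairwise distinct two-element neighborhoods into $S$. Since $\|H'_n\|/|H'_n| \to \infty$, this contradicts linear neighborhood complexity of $\T(\D)$, which finishes the argument. The main obstacle I anticipate is verifying that the path condition in $\psi$ faithfully captures the $1$-subdivision adjacency; this hinges on the induced structure of the subdivision together with the $O$-predicate, which jointly prevent the path from shortcutting through vertices outside the embedded subdivision or crossing from one subdivision segment to another.
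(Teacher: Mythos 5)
Your proof is correct and follows essentially the same route as the paper: Dvo\v{r}\'ak's characterization (\Cref{lem:BE_subdivisions}), weak sparseness via $\tau$-degeneracy, a transduction that extracts a dense $1$-subdivision by marking principal vertices, subdivision vertices, and one representative per path, and a final contradiction with the linear neighborhood complexity of $\T(\D)$ inherited through \Cref{lem:transducible}. The only (immaterial) difference is bookkeeping: you pigeonhole on the exact subdivision length $s$ and lose a factor $r+1$ in density, whereas the paper discards only the unsubdivided edges using the degeneracy bound, keeping density $d-\tau$; both reductions work.
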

\begin{proof}
    Assume for contradiction that this is not true. Then, since $\D$ is weakly sparse by the previous lemma, by the characterization of \Cref{lem:BE_subdivisions}, there exists $r \in \N$ such that for every $d \in \N$ there exists $H$ with $\frac{\|H\|}{|H|} > d$ such that some $G\in \D$ contains an $({\le} r)$-subdivision $H'$ of $H$ as an induced subgraph.
    We fix such $r$ for the rest of the proof. 
    We will show that this leads to a contradiction with the assumption that $\D$ has inherently linear neighborhood complexity. 

We first show  that if we consider $G$, $H$ and $H'$ as above, then there are at least $|H|(d-\tau)$ edges of $H$ that are subdivided at least once in $H'$. To see this, first note that $H$ has at least $|H|d$ edges. Next, let $K$ be the subgraph of $H'$ induced by the vertices of $H$. Then $K$ contains precisely the edges of $H$ that are not subdivided in $H'$. Note that $K$ is a subgraph of $G$, and since $G$ is $\tau$-degenerate by the previous lemma, we have that $\frac{\|K\|}{|K|} \le \tau|K| = \tau|H|$. Thus, since there are at most $\tau|H|$ edges of $H$ that are not subdivided in $H'$, the remaining $|H|(d-\tau)$ edges of $H$ are subdivided at least once in $H'$.

We now proceed as follows. Again let $G$, $H$ and $H'$ be as above, and let $H^*$ be the graph obtained from $H$ by removing all edges that are not subdivided at least once in $H'$. Note that $|H| = |H^*|$ and that so by the argument above we have that $\frac{\|H^*\|}{|H^*|} > d- \tau$.
    We now describe a transduction $\T$ that

      transduces a $1$-subdivision of $H^*$ from $G$. This transduction does not use copying and the marking used by $\T$ is described as follows. Let $H'$ denote the induced subgraph of $G$ that is an $({\le} r)$-subdivision of $H$. Mark all principal vertices of $H$ in $H'$ by a unary predicate $P$. Then for every $uv \in E(H^*)$, mark all the internal vertices of the path in $H'$ between $u$ and $v$ that corresponds to the edge $uv$ in $H$ by a unary predicate $Q$, and additionally mark exactly one arbitrary vertex on this path by a unary predicate $R$. The formula used by the transduction simply connects every vertex marked by predicate $R$ to the two vertices marked with predicate $P$ that are reachable from $R$ via a path of length at most $r$ going through vertices marked with $Q$. 
    Then $\T(\D)$ contains $1$-subdivisions \(H''\) of graphs $H^*$ with $\frac{\|H^*\|}{|H^*|} > d- \tau$ for arbitrarily large $d$.
    Let $S$ be the set of principal vertices of $H''$. 
    Since each subdivision vertex of $H''$ corresponds to an edge $e$ of $H^*$ and has its endpoints as neighbors, there are at least $\|H^*\|$ different neighborhoods in $|S|$ in $H''$. Thus for $H'' \in \T(\D)$ we have $| N_{H''}(v)\cap S\ \colon\ v \in V(H'') | > (d-\tau)\cdot |S|$, as desired.
    As \(d\) can be made arbitrarily large, $\T(\D)$ does not have linear neighborhood complexity.
\end{proof}

\subsection{Proof of \Cref{thm:main}}\label{sec:final}

We now conclude the proof of \Cref{thm:main}. By \Cref{sec:bipartite}, it suffices to prove the claim for bipartite inputs.
Given $G\in\C$, the algorithm of \Cref{sec:algo_sparsify} computes $S(G)$ by first building the tree $T(G)$ from successive left/right branchings (\Cref{def:branching,lem:branching_compute}) and then attaching $T(G)$ to~$G$.
The auxiliary results established earlier verify the four requirements of \Cref{thm:main}:
\begin{itemize}[nosep]
    \item \Cref{lem:runtime} bounds the running time by $\Oh(|G|^4)$.

    \item \Cref{lem:recover_G_from_S} provides an interpretation \(I\) such that \(G = I(S(G))\) for every \(G \in \C\).

	\item \Cref{lem:transducible} shows that \(\D \coloneqq  \{S(G) : G \in \C\}\) is transducible from $\C$.

    \item \Cref{lem:be} yields that \(\D\) has bounded expansion.
\end{itemize}

\bibliographystyle{abbrv}
\bibliography{biblio}

\begin{thebibliography}{10}

\bibitem{adler2014interpreting}
H.~Adler and I.~Adler.
\newblock Interpreting nowhere dense graph classes as a classical notion of
  model theory.
\newblock {\em European Journal of Combinatorics}, 36:322--330, 2014.

\bibitem{BonamyG25}
M.~Bonamy and C.~Geniet.
\newblock $\chi$-boundedness and neighbourhood complexity of bounded
  merge-width graphs.
\newblock {\em CoRR}, abs/2504.08266, 2025.

\bibitem{BonnetKRTW22}
{\'{E}}.~Bonnet, E.~J. Kim, A.~Reinald, S.~Thomass{\'{e}}, and R.~Watrigant.
\newblock Twin-width and polynomial kernels.
\newblock {\em Algorithmica}, 84(11):3300--3337, 2022.

\bibitem{horizons}
S.~Braunfeld, J.~Ne\v{s}et\v{r}il, P.~{Ossona de Mendez}, and S.~Siebertz.
\newblock Decomposition horizons and a characterization of stable hereditary
  classes of graphs.
\newblock {\em European Journal of Combinatorics}, 129:104130, 2025.

\bibitem{davies2025oddcoloringgraphslinear}
J.~Davies, M.~Hatzel, K.~Knauer, R.~McCarty, and T.~Ueckerdt.
\newblock Odd coloring graphs with linear neighborhood complexity.
\newblock {\em CoRR}, abs/2506.08926, 2025.

\bibitem{lacon_decompositions}
J.~Dreier.
\newblock Lacon-, shrub- and parity-decompositions: Characterizing
  transductions of bounded expansion classes.
\newblock {\em Logical Methods Computer Science}, 19(2), 2023.

\bibitem{stable_MC}
J.~Dreier, I.~Eleftheriadis, N.~M{\"{a}}hlmann, R.~McCarty, M.~Pilipczuk, and
  S.~Toru\'nczyk.
\newblock First-order model checking on monadically stable graph classes.
\newblock In {\em 65th {IEEE} Annual Symposium on Foundations of Computer
  Science, {FOCS} 2024}, pages 21--30. {IEEE}, 2024.

\bibitem{bushes_quasibushes}
J.~Dreier, J.~Gajarsk{\'{y}}, S.~Kiefer, M.~Pilipczuk, and S.~Toru\'nczyk.
\newblock Treelike decompositions for transductions of sparse graphs.
\newblock In {\em 37th Annual {ACM/IEEE} Symposium on Logic in Computer
  Science, LICS 2022}, pages 31:1--31:14. {ACM}, 2022.

\bibitem{DreierMS23}
J.~Dreier, N.~M{\"{a}}hlmann, and S.~Siebertz.
\newblock First-order model checking on structurally sparse graph classes.
\newblock In {\em 55th Annual {ACM} Symposium on Theory of Computing, {STOC}
  2023}, pages 567--580. {ACM}, 2023.

\bibitem{dreier2022indiscernibles}
J.~Dreier, N.~M{\"{a}}hlmann, S.~Siebertz, and S.~Toru\'nczyk.
\newblock Indiscernibles and flatness in monadically stable and monadically
  {NIP} classes.
\newblock In {\em 50th International Colloquium on Automata, Languages, and
  Programming, {ICALP} 2023}, volume 261 of {\em LIPIcs}, pages 125:1--125:18.
  Schloss Dagstuhl --- Leibniz-Zentrum f{\"{u}}r Informatik, 2023.

\bibitem{mergewidth}
J.~Dreier and S.~Toru\'nczyk.
\newblock Merge-width and first-order model checking.
\newblock In {\em 57th Annual {ACM} Symposium on Theory of Computing, {STOC}
  2025}, pages 1944--1955. {ACM}, 2025.

\bibitem{dvorak2007asymptotical}
Z.~Dvo\v{r}{\'a}k.
\newblock {\em Asymptotical structure of combinatorial objects}.
\newblock PhD thesis, Charles University, Faculty of Mathematics and Physics,
  2007.

\bibitem{Dvorak18}
Z.~Dvo\v{r}{\'{a}}k.
\newblock Induced subdivisions and bounded expansion.
\newblock {\em European Journal of Combinatorics}, 69:143--148, 2018.

\bibitem{Dvorak22}
Z.~Dvo\v{r}{\'{a}}k.
\newblock Approximation metatheorems for classes with bounded expansion.
\newblock In {\em 18th Scandinavian Symposium and Workshops on Algorithm
  Theory, {SWAT} 2022}, volume 227 of {\em LIPIcs}, pages 22:1--22:17. Schloss
  Dagstuhl --- Leibniz-Zentrum f{\"{u}}r Informatik, 2022.

\bibitem{dvorak2013fo_be}
Z.~Dvo\v{r}{\'{a}}k, D.~Kr{\'{a}}l', and R.~Thomas.
\newblock Testing first-order properties for subclasses of sparse graphs.
\newblock {\em Journal of the {ACM}}, 60(5):36:1--36:24, 2013.

\bibitem{gajarsky2020bd_interp}
J.~Gajarsk{\'{y}}, P.~Hlin\v{e}n{\'{y}}, J.~Obdr\v{z}{\'{a}}lek, D.~Lokshtanov,
  and M.~S. Ramanujan.
\newblock A new perspective on {FO} model checking of dense graph classes.
\newblock {\em {ACM} Transactions on Computational Logic}, 21(4):28:1--28:23,
  2020.

\bibitem{gajarsky2020sbe}
J.~Gajarsk{\'{y}}, S.~Kreutzer, J.~Ne\v{s}et\v{r}il, P.~{Ossona de Mendez},
  M.~Pilipczuk, S.~Siebertz, and S.~Toru\'nczyk.
\newblock First-order interpretations of bounded expansion classes.
\newblock {\em {ACM} Transactions on Computational Logic}, 21(4):29:1--29:41,
  2020.

\bibitem{GajarskyMMOPPSS23}
J.~Gajarsk{\'{y}}, N.~M{\"{a}}hlmann, R.~McCarty, P.~Ohlmann, M.~Pilipczuk,
  W.~Przybyszewski, S.~Siebertz, M.~Soko\l{}owski, and S.~Toru\'nczyk.
\newblock Flipper games for monadically stable graph classes.
\newblock In {\em 50th International Colloquium on Automata, Languages, and
  Programming, {ICALP} 2023}, volume 261 of {\em LIPIcs}, pages 128:1--128:16.
  Schloss Dagstuhl --- Leibniz-Zentrum f{\"{u}}r Informatik, 2023.

\bibitem{RoseLemma}
J.~Gajarsk{\'{y}} and R.~McCarty.
\newblock On classes of bounded tree rank, their interpretations, and efficient
  sparsification.
\newblock In {\em 51st International Colloquium on Automata, Languages, and
  Programming, {ICALP} 2024}, volume 297 of {\em LIPIcs}, pages 137:1--137:20.
  Schloss Dagstuhl --- Leibniz-Zentrum f{\"{u}}r Informatik, 2024.

\bibitem{GajarskyPT22}
J.~Gajarsk{\'{y}}, M.~Pilipczuk, and S.~Toru\'nczyk.
\newblock Stable graphs of bounded twin-width.
\newblock In {\em 37th Annual {ACM/IEEE} Symposium on Logic in Computer
  Science, LICS 2022}, pages 39:1--39:12. {ACM}, 2022.

\bibitem{grokre11}
M.~Grohe and S.~Kreutzer.
\newblock Methods for algorithmic meta theorems.
\newblock In M.~Grohe and J.~Makowsky, editors, {\em Model Theoretic Methods in
  Finite Combinatorics}, volume 558 of {\em Contemporary Mathematics}, pages
  181--206. American Mathematical Society, 2011.

\bibitem{coloring-covering}
M.~Grohe, S.~Kreutzer, R.~Rabinovich, S.~Siebertz, and K.~Stavropoulos.
\newblock Coloring and covering nowhere dense graphs.
\newblock {\em SIAM Journal on Discrete Mathematics}, 32(4):2467--2481, 2018.

\bibitem{haussler1995sphere}
D.~Haussler.
\newblock Sphere packing numbers for subsets of the boolean $n$-cube with
  bounded vapnik-chervonenkis dimension.
\newblock {\em Journal of Combinatorial Theory, Series A}, 69(2):217--232,
  1995.

\bibitem{KazanaS19}
W.~Kazana and L.~Segoufin.
\newblock First-order queries on classes of structures with bounded expansion.
\newblock {\em Log. Methods Comput. Sci.}, 16(1), 2020.

\bibitem{squares_hardness}
R.~Motwani and M.~Sudan.
\newblock Computing roots of graphs is hard.
\newblock {\em Discrete Applied Mathematics}, 54(1):81--88, 1994.

\bibitem{nevsetvril2008grad}
J.~Ne{\v{s}}et{\v{r}}il and P.~Ossona~de Mendez.
\newblock Grad and classes with bounded expansion {I}. {D}ecompositions.
\newblock {\em European Journal of Combinatorics}, 29(3):760--776, 2008.

\bibitem{sparsity}
J.~Ne\v{s}et\v{r}il and P.~{Ossona de Mendez}.
\newblock {\em Sparsity --- {G}raphs, {S}tructures, and {A}lgorithms},
  volume~28 of {\em Algorithms and combinatorics}.
\newblock Springer, 2012.

\bibitem{DBLP:conf/soda/NesetrilMPRS21}
J.~Ne\v{s}et\v{r}il, P.~{Ossona de Mendez}, M.~Pilipczuk, R.~Rabinovich, and
  S.~Siebertz.
\newblock Rankwidth meets stability.
\newblock In {\em 2021 {ACM-SIAM} Symposium on Discrete Algorithms, {SODA}
  2021}, pages 2014--2033. {SIAM}, 2021.

\bibitem{nesetril2021linrw_stable}
J.~Ne\v{s}et\v{r}il, P.~{Ossona de Mendez}, R.~Rabinovich, and S.~Siebertz.
\newblock Classes of graphs with low complexity: The case of classes with
  bounded linear rankwidth.
\newblock {\em European Journal of Combinatorics}, 91:103223, 2021.

\bibitem{michal_survey}
M.~Pilipczuk.
\newblock Graph classes through the lens of logic.
\newblock {\em CoRR}, abs/2501.04166, 2025.

\bibitem{PilipczukST18a}
M.~Pilipczuk, S.~Siebertz, and S.~Toru\'nczyk.
\newblock On the number of types in sparse graphs.
\newblock In {\em 33rd Annual {ACM/IEEE} Symposium on Logic in Computer
  Science, {LICS} 2018}, pages 799--808. {ACM}, 2018.

\bibitem{PilipczukST18}
M.~Pilipczuk, S.~Siebertz, and S.~Toru\'nczyk.
\newblock Parameterized circuit complexity of model-checking on sparse
  structures.
\newblock In {\em 33rd Annual {ACM/IEEE} Symposium on Logic in Computer
  Science, {LICS} 2018}, pages 789--798. {ACM}, 2018.

\bibitem{sparsityNotes}
{\relax Ma}.~Pilipczuk, {\relax Mi}.~Pilipczuk, and S.~Siebertz.
\newblock Lecture notes for the course ``{S}parsity'' given at {F}aculty of
  {M}athematics, {I}nformatics, and {M}echanics of the {U}niversity of
  {W}arsaw, Winter semesters 2017/18 and 2019/20.
\newblock Available online at
  \href{https://www.mimuw.edu.pl/~mp248287/sparsity2}{https://www.mimuw.edu.pl/~mp248287/sparsity2}.

\bibitem{Przybyszewski23}
W.~Przybyszewski.
\newblock Distal combinatorial tools for graphs of bounded twin-width.
\newblock In {\em 38th Annual {ACM/IEEE} Symposium on Logic in Computer
  Science, {LICS} 2023}, pages 1--13. {IEEE}, 2023.

\bibitem{Torunczyk20}
S.~Toru\'nczyk.
\newblock Aggregate queries on sparse databases.
\newblock In {\em 39th {ACM} {SIGMOD-SIGACT-SIGAI} Symposium on Principles of
  Database Systems, {PODS} 2020}, pages 427--443. {ACM}, 2020.

\end{thebibliography}

\appendix

\section{Sparsification of monadically stable classes}\label{sec:mon-dep}

\subsection{Additional preliminaries}

\paragraph*{Almost nowhere denseness.}
We first recall the notion of almost nowhere dense graph classes. For this, we need some standard definitions about shallow minors.

Recall that a graph $H$ is a {\em{depth-$r$ minor}} of a graph $G$ if $H$ can be obtained from a subgraph of $G$ by contracting mutually disjoint connected subgraphs of radius at most $r$. We define the following parameter:

\begin{align*}\grad_r(G)\coloneqq & \max\left\{\frac{\|H\|}{|H|}\colon \textrm{$H$ is a depth-$r$ minor of $G$}\right\}.
\end{align*}

As proved by Dvo\v{r}\'ak~\cite{dvorak2007asymptotical}, the parameters $\grad_r(\cdot)$ are bounded subpolynomially in nowhere dense classes, in the following sense.

\begin{theorem}[{Dvo\v{r}\'ak~\cite{dvorak2007asymptotical}}]\label{thm:nd-alm-nd}
	Let $\Cc$ be a nowhere dense class of graphs. Then for any graph $G\in \Cc$ and any $r\in \N$ and $\eps>0$, we have
	\[\grad_r(G)\leq \Oh_{r,\eps}(|G|^\eps).\]
\end{theorem}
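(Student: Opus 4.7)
The plan is to proceed by contradiction. Suppose some nowhere dense class $\Cc$ fails the claimed bound: there exist a fixed $r$, a fixed $\eps > 0$, and a sequence of graphs $G_n \in \Cc$ with $|G_n| = n \to \infty$ such that $\grad_r(G_n) > n^\eps$. By definition of $\grad_r$, for each such $G_n$ we can extract a depth-$r$ minor $H_n$ satisfying $\|H_n\|/|H_n| > n^\eps$. The goal is to convert this density excess into a violation of nowhere denseness at slightly larger depth, namely to produce $K_t$ as a depth-$(2r+1)$ minor of $G_n$ with $t \to \infty$.

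The first step upgrades the large average degree of $H_n$ to a large complete bipartite subgraph $K_{t,t} \subseteq H_n$. If $|H_n|$ stays bounded along the sequence then the inequality $\|H_n\|/|H_n| > n^\eps$ forces $H_n$ to contain cliques of unbounded size for large $n$, already finishing the argument. Otherwise $|H_n| \to \infty$, and since $|H_n| \le n$ we have $\|H_n\| > |H_n|^{1+\eps/2}$ for $n$ large. The Kővári--Sós--Turán theorem then yields $K_{t,t} \subseteq H_n$ with $t$ tending to infinity with $n$.

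The second step promotes this biclique to a clique minor in $G_n$. Since $H_n$ is a depth-$r$ minor of $G_n$, each vertex of $H_n$ corresponds to a connected branch set in $G_n$ of radius at most $r$. Writing $A = \{a_1, \ldots, a_t\}$ and $B = \{b_1, \ldots, b_t\}$ for the two sides of the biclique, and choosing for each $i$ an edge in $G_n$ witnessing $a_i b_i \in E(H_n)$, we merge the branch sets of $a_i$ and $b_i$ through that edge. Each resulting combined branch set is connected of radius at most $2r+1$. The remaining biclique edges $a_i b_j$ (for $i \ne j$) guarantee that these $t$ combined branch sets pairwise touch in $G_n$, so contracting each one yields $K_t$ as a depth-$(2r+1)$ minor of $G_n$. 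Since $t \to \infty$, the class $\Cc$ contains arbitrarily large complete graphs as depth-$(2r+1)$ minors, contradicting nowhere denseness.

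The delicate part is step one: ensuring that the biclique parameter $t$ really grows with $n$ even when $|H_n|$ is much smaller than $n$. The argument is most subtle when $|H_n|$ is of intermediate size, where one must choose the exponent in the Kővári--Sós--Turán application carefully (balancing the gained density $n^\eps$ against the $|H_n|^{1-1/t}$ threshold) and distinguish cases based on whether $|H_n|$ is sub-polynomial or comparable to $n$. Once this combinatorial calculation is in hand, the contraction step and the final contradiction are entirely formal.
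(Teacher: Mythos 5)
The paper does not prove this statement---it is imported verbatim from Dvo\v{r}\'ak's work---so the comparison here is with the known external proof. Your proposal has a genuine gap at its central step. You claim that $\|H_n\| > |H_n|^{1+\eps/2}$ together with K\H{o}v\'ari--S\'os--Tur\'an yields $K_{t,t}\subseteq H_n$ with $t\to\infty$. This is false for any fixed $\eps<1$: a $K_{t,t}$-free graph on $m$ vertices may have as many as $\Theta_t(m^{2-1/t})$ edges, so edge count $m^{1+\eps}$ only forces $K_{t,t}$ for $t<1/(1-\eps)$, a constant (for $\eps<1/2$ it does not even guarantee a $C_4$; incidence graphs of projective planes have $m^{3/2}$ edges and no $K_{2,2}$). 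The case analysis you flag as ``delicate'' cannot repair this: writing $|H_n|=n^{\alpha}$, the density $n^{\eps}=|H_n|^{\eps/\alpha}$ beats the KST threshold for unboundedly large $t$ only when $\alpha$ is within $(1+o(1))$ of $\eps$, i.e.\ when $H_n$ is already nearly complete; for $\alpha$ bounded away from $\eps$ the extracted biclique order stays bounded. So a single shallow minor of polynomial density simply does not certify somewhere-denseness, and your contradiction never materializes.

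What is missing is the density-amplification (``trichotomy'') argument that is the actual content of the theorem: one shows that a depth-$r$ minor of density $|G|^{\eps}$ forces a depth-$\Oh(r)$ minor whose density exponent is strictly and quantitatively larger than $\eps$, and iterating a bounded number of times (depending on $\eps$) drives the exponent up to $1-o(1)$. Only at that point does a biclique/clique extraction of the kind you describe produce $K_t$ with $t\to\infty$ as a bounded-depth minor, contradicting nowhere denseness. Your step two (merging the two branch sets of a matched biclique pair through a witnessing edge to get $K_t$ as a depth-$(2r+1)$ minor) is correct and is indeed how the endgame goes, but without the amplification engine the argument only refutes density exponents close to $1$, not an arbitrary fixed $\eps>0$.
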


The notion of almost nowhere denseness takes the conclusion of \cref{thm:nd-alm-nd} as the definition:

\begin{definition}
	A class of graphs $\Cc$ is {\em{almost nowhere dense}} if for every $r\in \N$, $\eps>0$, and a graph $G\in \Cc$, we have
	\[\grad_r(G)\leq \Oh_{r,\eps}(|G|^\eps).\]
\end{definition}

By \cref{thm:nd-alm-nd}, every nowhere dense class is almost nowhere dense. The reverse implication holds for hereditary classes, but not in full generality. It is known (see e.g.~\cite{sparsity,sparsityNotes}) that many graph parameters --- the maximum edge density among depth-$r$ topological minors, weak $r$-coloring numbers, strong $r$-coloring numbers, and $r$-admissibility --- are all tied to the parameters $\grad_r(\cdot)$ by polynomial inequalities. Hence, replacing the parameters $\grad_r(\cdot)$ with any of these families in the definition of almost nowhere denseness yields exactly the same notion. We omit the details, as we will not use these connections here.

\paragraph*{Decompositions.} We now recall the notion of {\em{decompositions}} of graph classes proposed by Braunfeld, Ne\v{s}et\v{r}il, Ossona de Mendez, and Siebertz~\cite{horizons}. The concept is inspired by the classic notion of low treedepth colorings from the field of Sparsity (see e.g.~\cite{sparsity,sparsityNotes}).

\begin{definition}[{\cite[Definition~14]{horizons}}]
	Let $\Pi$ be a property of graph classes. We say that a graph class $\Cc$ admits {\em{quasibounded-size $\Pi$-decompositions}} if there are graph classes $\Dd_1,\Dd_2,\Dd_3,\ldots$, all enjoying $\Pi$, so that the following holds: For every graph $G\in \Cc$, $t\in \N$, and $\eps>0$, there exists a vertex coloring $\lambda\colon V(G)\to C$, for some set $C$ consisting of $\Oh_{t,\eps}(|G|^\eps)$ colors, such that
	\[G[\lambda^{-1}(X)]\in \Dd_t\qquad\textrm{for each $X\subseteq C$ with $|X|\leq t$.}\] 
\end{definition}

We note that the original formulation of Braunfeld et al. speaks about partitions of the vertex set instead of vertex colorings, but this is equivalent to the formulation above.

We will rely on the following observation of Braunfeld et al. that almost nowhere denseness persists under quasibounded-size decompositions.

\begin{lemma}[{\cite[Lemma~34]{horizons}}]\label{lem:decomp-almnd}
	Suppose a graph class $\Cc$ admits quasibounded-size almost nowhere dense decompositions. Then $\Cc$ is itself also almost nowhere dense.
\end{lemma}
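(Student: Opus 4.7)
The plan is to bound $\grad_r(G)$ for $G \in \Cc$ by routing every edge of a depth-$r$ minor of $G$ through an induced subgraph $G[\lambda^{-1}(X)] \in \Dd_t$, and then paying a subpolynomial multiplicity for the coloring. Fix $r \in \N$ and $\eps > 0$; the goal is to prove $\grad_r(G) \le \Oh_{r,\eps}(|G|^\eps)$. I would set $t \coloneqq 2r+2$ and $\eps' \coloneqq \eps/(t+1)$, and apply the quasibounded-size decomposition hypothesis to $G$ with these parameters, obtaining a coloring $\lambda \colon V(G) \to C$ with $|C| \le \Oh_{t,\eps'}(|G|^{\eps'})$ colors such that $G[\lambda^{-1}(X)] \in \Dd_t$ whenever $|X| \le t$.

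Next I would take any depth-$r$ minor $H$ of $G$, witnessed by pairwise disjoint connected branch sets $\{B_v \colon v \in V(H)\}$ of radius at most $r$ with chosen centers $c_v$. For every edge $uv \in E(H)$ I fix a witness path $P_{uv}$ in $G$ of length at most $2r+1$, going from $c_u$ along $B_u$ to some $x \in B_u$, crossing an edge $xy \in E(G)$ with $y \in B_v$, and continuing along $B_v$ back to $c_v$. Setting $\chi(uv) \coloneqq \lambda(V(P_{uv}))$, we have $|\chi(uv)| \le t$. Partition $E(H)$ into $E_X \coloneqq \{uv \in E(H) : \chi(uv) = X\}$ indexed by subsets $X \subseteq C$ with $|X| \le t$. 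The key observation is that for each $X$ the graph $(V(H_X), E_X)$, where $V(H_X)$ is the set of endpoints of $E_X$, is itself a depth-$r$ minor of $G[\lambda^{-1}(X)]$: for each $v \in V(H_X)$ let $B'_v$ be the union over $vw \in E_X$ of the $B_v$-portion of $P_{vw}$; then $B'_v$ is connected (all its pieces contain the common center $c_v$), has radius $\le r$, lies inside $\lambda^{-1}(X)$, and the family $\{B'_v\}$ remains pairwise disjoint since the original $\{B_v\}$ is. Moreover, the crossing edges of the $P_{vw}$'s supply the required edges between branch sets in $G[\lambda^{-1}(X)]$.

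The rest is averaging. Since each $\Dd_t$ is almost nowhere dense, $\grad_r(G[\lambda^{-1}(X)]) \le \Oh_{r,\eps'}(|G|^{\eps'})$ for every $X$, and the number of subsets $X \subseteq C$ with $|X| \le t$ is at most $\Oh_t(|C|^t) \le \Oh_{t,\eps'}(|G|^{t\eps'})$. Using $|V(H_X)| \le |V(H)|$ to pass from $|E_X|/|V(H)|$ to $|E_X|/|V(H_X)|$ for those $X$ with $E_X \neq \emptyset$, we obtain
\[
  \frac{\|H\|}{|V(H)|} = \sum_{X} \frac{|E_X|}{|V(H)|} \le \sum_{X} \grad_r(G[\lambda^{-1}(X)]) \le \Oh_{r,\eps}(|G|^{(t+1)\eps'}) = \Oh_{r,\eps}(|G|^\eps).
\]
Taking a supremum over depth-$r$ minors of $G$ yields $\grad_r(G) \le \Oh_{r,\eps}(|G|^\eps)$, as required. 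The main obstacle I anticipate is the branch-set construction of the middle paragraph: checking carefully that the trimmed sets $B'_v$ genuinely witness $(V(H_X), E_X)$ as a depth-$r$ minor of $G[\lambda^{-1}(X)]$ (connectedness from the common center, the radius bound inherited from $B_v$, pairwise disjointness, and the fact that the crossing edges lie in the restricted subgraph). Once that minor structure is in place, the parameter bookkeeping via $t = 2r+2$ and $\eps' = \eps/(t+1)$ is routine.
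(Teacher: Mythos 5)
Your argument is correct. Note that the paper does not prove this lemma at all --- it is imported verbatim as Lemma~34 of Braunfeld, Ne\v{s}et\v{r}il, Ossona de Mendez, and Siebertz~\cite{horizons} --- so there is no in-paper proof to compare against; your write-up is essentially the standard low-treedepth-colorings-imply-bounded-expansion argument of Ne\v{s}et\v{r}il and Ossona de Mendez, adapted to the quasibounded setting, which is also how the cited reference proceeds. The choice $t=2r+2$ correctly covers the at most $2r+2$ vertices of a witness path $c_u \to x \to y \to c_v$, the trimmed branch sets $B'_v$ do satisfy connectivity, radius at most $r$, disjointness, and containment in $\lambda^{-1}(X)$ as you describe, and the final bookkeeping $\Oh(|C|^t)\cdot\Oh_{r,\eps'}(|G|^{\eps'}) = \Oh_{r,\eps}(|G|^{(t+1)\eps'})$ with $\eps'=\eps/(t+1)$ closes the bound.
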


\paragraph*{Monadic dependence and nowhere denseness.}
Recall that a graph class $\Cc$ is {\em{monadically dependent}} if the class of all graphs is not transducible from~$\Cc$; and a graph class $\Dd$ is {\em{weakly sparse}} if there is $h\in \N$ such that no member of $\Dd$ contains $K_{h,h}$ as a subgraph. As noted e.g. in~\cite[Corollary 2.3]{nesetril2021linrw_stable}, the following statement is implied by the results of Adler and Adler~\cite{adler2014interpreting} and of Dvo\v{r}\'ak~\cite{Dvorak18}.

\begin{theorem}[follows from~\cite{adler2014interpreting,Dvorak18}]\label{thm:mondep-nd}
	Every monadically dependent graph class that is weakly sparse is actually nowhere dense. 
\end{theorem}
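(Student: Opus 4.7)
I would prove the contrapositive: if $\Cc$ is weakly sparse and not nowhere dense, then $\Cc$ is not monadically dependent, i.e., the class of all finite graphs is transducible from $\Cc$. The starting point is to unfold the definition of ``not nowhere dense'': there exists $r$ such that $\Cc$ realizes $K_n$ as a depth-$r$ minor for arbitrarily large $n$. Writing out the branch sets and taking center-to-center paths shows that for some $r'=r'(r)$, every such witness graph in $\Cc$ contains a $(\le r')$-subdivision of $K_n$ as a subgraph. Hence $\Cc$ contains subdivisions of arbitrarily large cliques as subgraphs at a uniformly bounded subdivision depth.

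The first real ingredient, due to Dvo\v{r}\'ak~\cite{Dvorak18}, is an \emph{induced-subdivision upgrade} in the weakly sparse regime: if $G$ is $K_{h,h}$-free and contains a $(\le r')$-subdivision of $K_n$ as a subgraph, then $G$ contains an \emph{induced} $(\le r'')$-subdivision of $K_{n''}$, where $r''$ depends only on $r',h$ and $n''\to\infty$ as $n\to\infty$. The argument analyzes which ``chord'' edges may appear between vertices of a given subdivision inside $G$ and applies Ramsey-type pruning, using $K_{h,h}$-freeness to bound how many chord edges any single part can contribute. This yields that $\Cc$ contains induced $(\le r'')$-subdivisions of arbitrarily large cliques. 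With induced subdivisions available, I would apply the second ingredient, essentially due to Adler and Adler~\cite{adler2014interpreting}: a single fixed transduction $\T$ takes such induced clique subdivisions and outputs all finite graphs. Concretely, given a target graph $H$ on $\{v_1,\ldots,v_n\}$ and a host $G\in\Cc$ containing an induced $(\le r'')$-subdivision of $K_n$, mark the $n$ principal vertices by a predicate $P$, and for each edge $v_iv_j\in E(H)$ mark one interior vertex of the corresponding subdivision path by a predicate $Q$. A fixed first-order formula $\psi(x,y)$ (of quantifier depth depending only on $r''$) then asserts that $x$ and $y$ are both $P$-marked and joined by a path of length at most $r''+1$ that passes through some $Q$-marked vertex and otherwise uses only unmarked subdivision vertices. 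Restricting to $P$-marked vertices reconstructs $H$ on the nose. Since $H$ was arbitrary, $\T$ transduces every finite graph from $\Cc$, contradicting monadic dependence.

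The main obstacle is the induced-subdivision upgrade, which is the nontrivial combinatorial content and where weak sparsity is essentially used; the transduction step is then routine, since all the information needed to reconstruct an arbitrary $H$ fits into two unary markings and a fixed-depth formula whose complexity depends only on the constant $r''$. The only delicate point in the transduction is making sure the path witnessing adjacency cannot accidentally traverse another principal vertex or a stray $Q$-marked vertex, which is handled by guarding the path-length formula with the $P$- and $Q$-predicates.
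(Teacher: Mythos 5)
The paper does not actually prove \cref{thm:mondep-nd}; it imports it from the literature (via~\cite{adler2014interpreting,Dvorak18} and the discussion in~\cite{nesetril2021linrw_stable}), and your chain --- somewhere dense $\Rightarrow$ bounded-depth clique subdivisions as subgraphs $\Rightarrow$ (via Dvo\v{r}\'ak and weak sparsity) bounded-depth clique subdivisions as \emph{induced} subgraphs $\Rightarrow$ (via an Adler--Adler-style transduction) all graphs $\Rightarrow$ not monadically dependent --- is exactly the derivation those citations are pointing at. So the strategy is the intended one.

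Two steps are glossed too quickly, however. First, ``writing out the branch sets and taking center-to-center paths'' does not produce a subdivision of $K_n$: each branch set is traversed by $n-1$ of these paths, so they are nowhere near internally disjoint. Passing from $K_n$ as a depth-$r$ minor to a $(\le r')$-subdivision of $K_{n'}$ as a subgraph --- with $n'\to\infty$ as $n\to\infty$, but with a polynomial loss in the order of the clique --- is a genuine theorem (the comparison between shallow minors and shallow topological minors of Dvo\v{r}\'ak and Ne\v{s}et\v{r}il--Ossona de Mendez) and must be invoked as such; your claim that one gets a $(\le r')$-subdivision of $K_n$ itself is false as stated, though the weaker conclusion you actually need (arbitrarily large cliques) is correct. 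Second, in the final transduction you mark ``one interior vertex of the corresponding subdivision path,'' but some of the $\binom{n}{2}$ paths may have no interior vertex at all (edges subdivided zero times), and for those pairs you cannot toggle adjacency by marking. This is repaired by observing that the non-subdivided pairs form a subgraph of the host on the principal vertices, hence a $K_{h,h}$-free graph with $O(n^{2-1/h})$ edges, which contains an independent set of size $\Omega(n^{1/h})$; restricting the principal vertices to that independent set, every remaining pair is subdivided at least once and your marking scheme goes through. Neither issue changes the overall approach, but both are real holes in the argument as written.
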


Note that since every monadically stable graph class is also monadically dependent, the conclusion of \cref{thm:mondep-nd} also applies to weakly sparse monadically stable classes: they are, in fact, nowhere dense.

\subsection{Sparsification of monadically stable graph classes to almost nowhere dense classes}

We now argue that the sparsification procedure presented in the main body of the paper, when applied to a graph $G$ from any monadically stable graph class $\Cc$ without the assumption of inherently linear neighborhood complexity, sparsifies $G$ to a graph $H$ that belongs to an almost nowhere dense graph class. Precisely, we prove the following variation on \cref{thm:main}.

\begin{theorem}\label{thm:almost-nowhere-dense-sparsification}
Let $\C$ be a monadically stable graph class.
Then there exists an almost nowhere dense graph class $\D$, an algorithm $\Aa$, and a first-order interpretation $I$ such that the following holds: For any input graph $G \in \C$ on $n$ vertices and $\eps>0$, the algorithm $\Aa$ computes in time $\Oh_\eps(n^{4+\eps})$ a colored graph $H \in \D$ such that $G = I(H)$.
\end{theorem}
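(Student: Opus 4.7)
The plan is to re-run the architecture of \cref{sec:sparsify} almost verbatim, replacing each constant that previously depended on inherently linear neighborhood complexity by a subpolynomial quantity that is available under monadic stability alone. Thanks to the bipartization reduction of \cref{sec:bipartite}, it suffices to treat bipartite $\C$. Importantly, most of the ``structural'' lemmas from \cref{sec:sparsify} do not actually rely on inherently linear neighborhood complexity: the height bound \cref{lem:tree-height} (via \cref{lem:half-graph_grow}), the recovery interpretation \cref{lem:recover_G_from_S}, and the tree-coloring \cref{lem:tree_coloring} only need monadic stability of $\C$ together with the fact that the branching overlap is bounded by some quantity, which may even depend on $|G|$. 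What genuinely uses inherently linear neighborhood complexity are \cref{lem:main} (through \cref{lem:bip-graph-near-twins}) and the bounded-expansion certification \cref{lem:be}; those are exactly what I would relax.

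The first step is to prove a subpolynomial variant of \cref{lem:main}: for every $\eps>0$ and every $G=(A,B,E)\in\C$ without isolated vertices, the algorithm of \cref{sec:algo-main-lemma} computes, in time $\Oh_\eps(|G|^{4+\eps})$, a partition $\F$ of $B$ with leaders $\ell_1,\ldots,\ell_{|\F|}\in A$ such that every $v\in A$ has neighbors in at most $\Oh_\eps(|G|^\eps)$ parts of $\F$. The argument of \cref{sec:mainlemma} goes through word for word; the only combinatorial change is that \cref{lem:bip-graph-near-twins} is replaced by the following: for every $\eps>0$, every bipartite $G\in\C$ with no twins in $B$ and $|A|\ge 2$ contains an $\Oh_\eps(|G|^\eps)$-near-twin pair in $A$. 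This can be extracted from Haussler's packing lemma combined with the subpolynomial neighborhood-complexity estimate known to hold in every monadically stable class (see e.g.~\cite{horizons,DreierMS23}). Propagating the new bound through the proof of \cref{lem:bd_degree}, the derived parameters $t$ and $k^*$ become $\Oh_\eps(|G|^\eps)$, while the definability statement still holds with the proper-coloring palette growing accordingly.

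Equipped with this relaxed main lemma, I would build $T(G)$, $S(G)$, and the recovery interpretation $I$ exactly as in \cref{sec:algo_sparsify,sec:height,sec:runtime,sec:interp}. The height of $T(G)$ remains bounded by a constant depending only on the ladder index of $\C$, and each $v\in V(G)$ appears in at most $\Oh_\eps(|G|^\eps)$ branches per level; hence $S(G)$ is $\Oh_\eps(|G|^\eps)$-degenerate and the overall running time is $\Oh_\eps(|G|^{4+\eps})$.

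The hard part will be to upgrade \cref{lem:be} so as to conclude almost nowhere denseness of $\D\coloneqq\{S(G)\colon G\in\C\}$ rather than bounded expansion. My plan is to invoke \cref{lem:decomp-almnd} by exhibiting a quasibounded-size decomposition of $\D$ into a nowhere-dense class. Specifically, I would refine the constant-size tree-coloring of \cref{lem:tree_coloring} by assigning a fresh color coordinate to each of the (subpolynomially many) branches meeting a given tree node, thus coloring $V(S(G))$ with $\Oh_\eps(|G|^\eps)$ colors. Within the union of any $t$ such color classes every $v\in V(G)$ enters at most $t$ branches per level by construction, so the induced subgraph inherits bounded branching overlap, which by a direct rerun of \cref{sec:transduce,sec:be} places it in a fixed class $\D_t$ of bounded expansion (in particular almost nowhere dense). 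Monadic stability of $\D_t$, inherited from $\C$ via a fixed $t$-color transduction, together with the degeneracy bound, also gives a nowhere-dense alternative justification via \cref{thm:mondep-nd}. Feeding this decomposition into \cref{lem:decomp-almnd} yields that $\D$ is almost nowhere dense, completing the proof.
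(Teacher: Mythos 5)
Your proposal reproduces the paper's architecture almost exactly: relax \cref{lem:main} using the subpolynomial neighborhood complexity of monadically stable classes (\cref{thm:neicomp-stable}), keep the construction of $T(G)$, $S(G)$ and the interpretation $I$ unchanged, and certify almost nowhere denseness via a quasibounded-size decomposition fed into \cref{lem:decomp-almnd}. Two points, however, are genuine gaps. First, ``propagating the new bound through the proof of \cref{lem:bd_degree}'' is not automatic, and your claim that ``$t$ and $k^*$ become $\Oh_\eps(|G|^\eps)$'' is off: the half-graph order $t$ of \cref{lem:quotient_no_halfgraph} \emph{must remain a constant} (and does, since that lemma uses only monadic stability) --- this is essential both for the constant height of $T(G)$ and for the final bound on $k^*$, because \cref{lem:Rose_lemma} only gives $k^*\le h(k^\circ,t)+1$ for an unspecified function $h$. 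To conclude $k^*\le\Oh_\eps(|G|^\eps)$ from $k^\circ\le\Oh_\eps(|G|^\eps)$ one must open up the proof of \cref{lem:Rose_lemma} and check that $h(k,t)\le f(t)\cdot(k+1)$, i.e.\ that the dependence on $k$ is linear for fixed $t$; the paper does exactly this, and without it the relaxed main lemma is unproved.

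Second, your primary justification that the $t$-color pieces land in a fixed bounded expansion class ``by a direct rerun of \cref{sec:transduce,sec:be}'' cannot work: the proof of \cref{lem:be} is a contradiction against \emph{linear} neighborhood complexity of a class transduced from $\C$, which is precisely the hypothesis that has been dropped. Your fallback --- degeneracy of the pieces plus monadic dependence plus \cref{thm:mondep-nd} yields nowhere denseness --- is the correct route and is what the paper does, so it should be the main argument rather than an aside. But to run it you need each piece $S(G)[\lambda^{-1}(X)]$ with $|X|\le t$ to lie in $\ST^t(\C)$ for a \emph{fixed} transduction $\ST^t$ (otherwise monadic dependence of the class $\D_t$ is not inherited). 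Your coloring is described only in terms of branch-disjointness; since the branching is now only ``almost definable'' (one formula and one unary predicate per each of $\Oh_\eps(|G|^\eps)$ colors), the decomposition coloring must additionally record the definability color of each branch --- the paper takes the product of the disjointness and definability colorings --- so that restricting to $t$ colors leaves only $\Oh_t(1)$ formulas and predicates to aggregate. One must also take as the final color of a node the word of colors along its root-to-node path and close $X$ under prefixes, so that the surviving node set is ancestor-closed and the level-by-level transduction of \cref{lem:transducible} can be reassembled. With these repairs your argument coincides with the paper's proof of \cref{cl:heart}.
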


As we mentioned, a statement of this kind was proved by Braunfeld et al. as~\cite[Theorem~59]{horizons}. There, the graph $H$ and the interpretation $I$ have a concrete shape given by the notion of a {\em{quasi-bush}}, introduced by Dreier et al.~\cite{bushes_quasibushes}. The goal of this section is to demonstrate that the method proposed in this paper can be used to given an alternative proof of this result. While we do not stick to the formal definition of a quasi-bush as proposed in~\cite{bushes_quasibushes}, the outcome of our sparsification procedure is, in fact, very similar in spirit.

\bigskip

The remainder of this section is devoted to the proof of \cref{thm:almost-nowhere-dense-sparsification}.
We assume the reader's familiarity with the proof of \cref{thm:main}, presented in \cref{sec:mainlemma,sec:sparsify}. Here, we only sketch how the reasoning can be amended to obtain \cref{thm:almost-nowhere-dense-sparsification}.

To replace the assumption about linear neighborhood complexity, 	
we will use the result of Dreier, Eleftheriadis, M\"ahlmann, McCarty, Pilipczuk, and Toru\'nczyk~\cite{stable_MC} that monadically stable classes have almost linear neighborhood complexity.
	
	\begin{theorem}[\cite{stable_MC}]\label{thm:neicomp-stable}
		Let $\C$ be a fixed monadically stable graph class and let $\eps>0$. Then for every graph $G\in \C$ and a nonempty vertex subset $S\subseteq V(G)$, we have
		\[|\{N(v)\cap S\colon v\in V(G)\}|\leq \Oh_{\eps}(|S|^{1+\eps}).\]
	\end{theorem}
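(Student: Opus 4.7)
The bound is established in~\cite{stable_MC}; I outline one plausible route. The plan is to first obtain a polynomial Sauer--Shelah bound from the bounded VC dimension implied by monadic stability, and then invoke the flip-flatness structure theory of monadically stable classes to sharpen the exponent from the VC dimension to $1+\eps$.

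First I would observe that monadic stability implies a bounded VC dimension of the neighborhood set system $\mathcal{H}_S\coloneqq\{N_G(v)\cap S\colon v\in V(G)\}$: a standard Ramsey argument shows that any large shattered set would force a large induced half-graph, contradicting the bounded ladder index of $\C$. Hence there is a constant $d=d(\C)$ so that the neighborhood VC dimension is at most $d$, and the Sauer--Shelah lemma immediately gives $|\mathcal{H}_S|\leq\Oh(|S|^d)$. This already has the correct qualitative form but the wrong exponent.

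To bring the exponent down from $d$ to $1+\eps$, I would invoke the flip-flatness characterization of monadic stability from~\cite{dreier2022indiscernibles,GajarskyMMOPPSS23}: for any $G\in\C$ and any sufficiently large set $A\subseteq V(G)$, after applying a \emph{flip} (a bounded Boolean combination of adjacency relations modulo a bounded vertex coloring), $A$ contains a large subset $A'$ on which external vertices realize only a bounded number of distinct adjacency patterns, i.e., linear-in-$|A'|$ neighborhood complexity. Starting from $S$, I would iteratively peel off flat subsets, producing a chain $S=S_0\supseteq S_1\supseteq\ldots\supseteq S_m$ with $m=\Oh_\eps(1)$ and $|S_{i+1}|\leq|S_i|^{1-\eps'}$ for some $\eps'=\eps'(\eps)$; on each difference $S_i\setminus S_{i+1}$ the flat-subset argument yields a linear trace bound, and summing the contributions over the $\Oh_\eps(1)$ iterations yields $|\mathcal{H}_S|\leq\Oh_\eps(|S|^{1+\eps})$.

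The main obstacle will be the iteration step: bridging from the qualitative flip-flat subset extraction (which only asserts the existence of a flat subset after a bounded flip) to a concrete linear bound on the number of distinct trace patterns restricted to each flat layer, and then carefully controlling how the (finite but potentially growing) number of flip markings composes across iterations without blowing up the final subpolynomial prefactor. One technicality is that external vertices are evaluated against the original adjacency of $G$, while the flat-subset guarantee lives inside the flipped structure; the flip must therefore be inverted in a controlled manner when counting traces, which seems to require bounded-depth bookkeeping of the flip history.
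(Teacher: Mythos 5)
First, note that the paper does not prove \cref{thm:neicomp-stable} at all: it is imported verbatim from~\cite{stable_MC} and used as a black box in \cref{sec:mon-dep}, so there is no internal proof to compare your sketch against; the comparison can only be with the argument in the cited reference, which is a substantial technical contribution of that paper rather than something reproducible in a few lines.

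As a standalone proof attempt, your proposal has genuine gaps precisely at the point you flag as ``the main obstacle.'' The first half (monadic stability bounds the ladder index, hence the VC dimension of the neighborhood set system, hence Sauer--Shelah gives $\Oh(|S|^{d})$) is fine but only delivers a fixed polynomial exponent. The reduction from exponent $d$ to $1+\eps$ is where all the work lies, and the peeling scheme you describe does not go through as stated. Flip-flatness in~\cite{dreier2022indiscernibles} is a qualitative statement: every sufficiently large set contains \emph{some} large distance-$r$ flip-flat subset, with no guarantee that this subset has size polynomial in the original set (the bounds there come from Ramsey-type arguments). Your iteration needs quantitative control of the form $|S_{i+1}|\leq |S_i|^{1-\eps'}$ with only $\Oh_\eps(1)$ rounds, i.e.\ flat subsets of polynomial size $|S_i|^{\eps'}$; without such a quantitative flatness statement the recursion either runs for superconstantly many rounds (blowing up both the number of accumulated flips and the prefactor) or simply does not terminate with the claimed bound. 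Moreover, even granting a flat layer, the step ``external vertices realize only a bounded number of adjacency patterns on it'' requires distance-$2$ flatness plus an argument that undoing the flip multiplies the number of traces by only a constant depending on the number of flip classes; this is plausible but is exactly the bookkeeping you defer. In short, the proposal is a reasonable high-level plan that mirrors the general philosophy of the stability-theoretic literature, but the quantitative flatness input and the trace-counting across iterations --- the actual content of the theorem in~\cite{stable_MC} --- are assumed rather than proved, so the argument as written does not establish the $\Oh_\eps(|S|^{1+\eps})$ bound.
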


Note that since any class transducible from a monadically stable class is again monadically stable, the bound postulated in \cref{thm:neicomp-stable}  holds also for every graph class transducible from $\Cc$.

\paragraph{The analogue of \cref{lem:main}.}
	The first step is to obtain an analogue of
	the main technical lemma, \cref{lem:main}. Precisely, we argue the following statement.
	
	\begin{lemma}
		\label{lem:main-analogue}
		Fix a monadically stable class of bipartite graphs $\C$. Then there is an algorithm with running time $\mathcal{O}(|G|^4)$ that for any given  graph $G = (A,B,E)$ from $\C$ without isolated vertices, computes a partition $\F= \{ P_1,\ldots, P_{|\F|}\}$ of $B$ and vertices $\ell_1,\ldots, \ell_{|\F|}$ belonging to $A$ such that:
		\begin{itemize}[itemsep=1pt]
			\item $P_i \subseteq N_G(\ell_i)$, for each $i\in \{1,\ldots,|\F|\}$; and
			\item every $v \in A$ has a neighbor in at most $k^*$ parts of $\F$, where $k^\star$ is an integer satisfying $k^\star\leq \Oh_\eps(|G|^\eps)$, for any $\eps>0$.
			That is, we have
			\(|\{P \in \F\ |\ N_G(v) \cap P \not= \emptyset\}| \le k^*.\)
		\end{itemize}
		Moreover, the partition $\F$ of $B$ is almost definable in the following sense: There is a coloring $\mu\colon \F\to D$ for some set of colors $D$ satisfying $|D|\leq \Oh_{\eps}(|G|^\eps)$ for any $\eps>0$, so that for each color $c\in D$, there is a unary coloring $G^+_c$ of $G$ and a formula $\leader_c(x,y)$, of length bounded by a universal constant (i.e., independent of $G$ or $\C$), such that for any $u\in A$ and $v\in B$, we have \[G^+_c\models \leader_c(u,v)\quad \textrm{if and only if}\quad \textrm{there exists } i\in \{1,\ldots,|\F|\}\textrm{ such that } \mu(P_i)=c\textrm{, }u=\ell_i,\textrm{ and }v\in P_i.\]
	\end{lemma}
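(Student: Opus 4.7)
The plan is to retrace the proof of \cref{lem:main}, replacing each appeal to linear neighborhood complexity by the sub-polynomial analogue from \cref{thm:neicomp-stable}. After closing $\C$ under induced subgraphs (which preserves monadic stability), we first establish a variant of \cref{lem:bip-graph-near-twins}: for every $\eps > 0$ there is a parameter $k_\eps \leq \Oh_\eps(|G|^\eps)$ such that every $G = (A,B,E) \in \C$ with no twins in $B$ and $|A| \geq 2$ contains a pair of $k_\eps$-near-twins in $A$. The derivation is exactly the standard Haussler packing argument, except that the bound $\Oh_\eps(|S|^{1+\eps})$ from \cref{thm:neicomp-stable} on the number of distinct traces of neighborhoods on $S$ replaces the linear bound $c|S|$; this weakens the conclusion from a constant near-twin parameter to a sub-polynomial one.

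We then run the algorithm of \cref{sec:algo-main-lemma} verbatim with $k^\circ$ replaced by $k_\eps$. The running time is still $\Oh(|G|^4)$, and all structural statements of \cref{sec:lem:bd_degree} carry over without modification. Specifically, \cref{lem:partition,lem:coarsening,lem:coarsening_twins} are combinatorial and do not depend on the near-twin parameter; \cref{lem:near_twins} yields, for each $i<n$, some $j>i$ with $a_i,a_j$ being $k_\eps$-near-twins in $G(A,\F)$; and \cref{lem:quotient_no_halfgraph} holds because its proof only uses monadic stability of $\C$ (no neighborhood-complexity assumption enters). Consequently $G(A,\F)$ still has ladder index bounded by a constant $t$ depending only on $\C$. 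Invoking \cref{lem:Rose_lemma} exactly as in the proof of \cref{lem:bd_degree} shows that every vertex of $A$ has degree at most $k^* := h(k_\eps, t) + 1$ in $G(A,\F)$. Since $t$ is constant and $h(\cdot, t)$ is polynomial in its first argument (an inspection of the argument of \cite{RoseLemma} confirms this), we get $k^* \leq \Oh_\eps(|G|^\eps)$, as required.

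It remains to handle the almost-definability clause. Retracing \cref{sec:definability}, we construct the auxiliary graph $K$ on vertex set $\F$ with an edge $P_iP_j$ whenever one of the leaders $\ell_i,\ell_j$ has a neighbor in the other's part; the degree bound $k^*$ implies that $K$ is $2k^*$-degenerate and hence admits a proper coloring $\mu : \F \to D$ with $|D| \le 2k^*+1 \leq \Oh_\eps(|G|^\eps)$. For each color $c \in D$, introduce a single unary predicate $M_c$, and let $G^+_c$ be obtained from $G$ by marking, with $M_c$, every vertex $\ell_i$ and every vertex of $P_i$ for every $i$ with $\mu(P_i) = c$. Define
\[\leader_c(x,y) \;:=\; L(x) \land M_c(x) \land M_c(y) \land \adj(x,y).\]
This formula has length independent of $G$ and $\C$, and the argument of \cref{sec:definability} (which uses only that $\mu$ is proper on $K$) then verifies that $G^+_c \models \leader_c(u,v)$ exactly when there exists $i$ with $\mu(P_i) = c$, $u = \ell_i$, and $v \in P_i$.

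The main obstacle is to make sure that the function $h$ supplied by \cref{lem:Rose_lemma} does not inflate a sub-polynomial first argument into a super-polynomial output: if $h(k,t)$ were exponential in $k$, then $h(k_\eps,t)$ would no longer be of the form $\Oh_\eps(|G|^\eps)$ and the target degree bound would be lost. The polynomial dependence of $h$ in its first argument (with exponent depending on $t$, which is a constant) is the one genuine non-cosmetic point where the present proof differs from that of \cref{lem:main}; everything else is a straightforward replacement of constants by sub-polynomial quantities.
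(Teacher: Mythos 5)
Your proposal is correct and follows essentially the same route as the paper: rerun the algorithm of \cref{sec:algo-main-lemma} with the Haussler constant replaced by the sub-polynomial near-twin bound obtained from \cref{thm:neicomp-stable}, observe that \cref{lem:quotient_no_halfgraph} needs only monadic stability, control $h(k,t)$ from \cref{lem:Rose_lemma} via its dependence on $k$ (the paper extracts a linear bound $h(k,t)\le f(t)(k+1)$; your polynomial bound with constant exponent suffices equally after rescaling $\eps$), and split the leader formula into one constant-length formula per color class of the degeneracy coloring of $K$. You also correctly single out the growth of $h$ in its first argument as the one genuinely non-cosmetic verification, which matches the paper's own treatment.
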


	In summary, the differences between \cref{lem:main} and \cref{lem:main-analogue} are the following:
	\begin{itemize}[nosep]
		\item In \cref{lem:main}, $k^*$ is a constant depending only on $\Cc$. In \cref{lem:main-analogue}, we allow this parameter to be superconstant, but bounded subpolynomially in the size of the considered graph $G$.
		\item The definability property asserted in \cref{lem:main} is replaced by the almost definability property. In essence, instead of requiring the leadership relation to be fully definable using a single formula, we require it to be definable ``piecewise'': $\F$ can be broken into $\Oh_{\eps}(|G|^\eps)$ colors so that the leadership property can be defined within each color separately.
	\end{itemize}

	Let us now explain the amendments that need to be applied to the proof of \cref{lem:main} in order to obtain the proof of \cref{lem:main-analogue}.
	Recall that the proof of \cref{lem:main} consists of: the algorithm to construct the partition $\F$ (described in \cref{sec:algo-main-lemma}); the proof of \cref{lem:bd_degree} (presented in \cref{sec:lem:bd_degree}), which provides a bound on $k^*$; and the proof of the definability property (presented in \cref{sec:definability}).
	
	As for the algorithm, we apply exactly the same algorithm to compute $\F$ as the one described in \cref{sec:algo-main-lemma}. (Note that the algorithm itself depends only on the input graph $G$ and is oblivious of the graph class $\C$ from which $G$ is drawn.) Consequently, the computed family $\F=\{P_1,\ldots,P_{|\F|}\}$ is again a partition of $B$, and each part $P_i$ is supplied with a leader $\ell_i\in A$ satisfying $P_i\subseteq N_G(\ell_i)$.
	
	As for \cref{lem:bd_degree}, we amend it so that it provides a bound $k^*\leq \Oh_{\eps}(|G|^\eps)$, for any $\eps>0$. Recall that the proof of \cref{lem:bd_degree} relies on two lemmas: \cref{lem:near_twins,lem:quotient_no_halfgraph}. \cref{lem:quotient_no_halfgraph} asserts that the graph $G(A,\F)$, encoding the adjacency between the vertices of $A$ and the parts of $\F$, excludes a half-graph of order $t$ as an induced subgraph, for some constant $t$ depending on $\C$. The proof of this lemma only uses the assumption that $\C$ is monadically stable, so this lemma holds in our setting as well without any changes; and $t$ is a constant depending only on $\C$. In turn, \cref{lem:near_twins} asserts that if $a_1,\ldots,a_n$ is the ordering of $A$ computed by the algorithm, then for each $i<n$ there exists $j>i$ such that $a_i$ and $a_j$ are $k^\circ$-near twins in $G(A,\F)$, where $k^\circ$ is the constant provided by 
	\cref{lem:bip-graph-near-twins}. Recall that \cref{lem:bip-graph-near-twins} is a reformulation of the Haussler's Packing Lemma for set systems with linear neighborhood complexity, and this reformulation provides a pair of $k^\circ$-near twins, where $k^\circ$ is a constant. For set systems with almost linear neighborhood complexity, in the sense of \cref{thm:neicomp-stable}, the same reformulation provides a pair of $k^\circ$-near twins, where $k^\circ\leq \Oh_\eps(|G|^\eps)$. So if we use this statement instead, we obtain a variant of \cref{lem:near_twins} that assumes only monadic stability of $\C$ and provides a bound $k^\circ\leq \Oh_\eps(|G|^\eps)$. Finally,  \cref{lem:near_twins,lem:quotient_no_halfgraph} are combined in the proof of \cref{lem:bd_degree} using \cref{lem:Rose_lemma}, providing a bound $k^*\leq h(k,t)+1$, where $h$ is the function given by \cref{lem:Rose_lemma}. Inspection of the proof of \cref{lem:Rose_lemma} in \cite{RoseLemma} yields that we in fact have $h(k,t)\leq f(t)\cdot (k+1)$ for some function $f\colon \N\to \N$. Combining this with $k^\circ\leq \Oh_\eps(|G|^\eps)$, we obtain the desired bound $k^*\leq \Oh_\eps(|G|^\eps)$.
	
	Finally, we need to argue the almost definability property. For this, we set $\mu(P_i)=\lambda(i)$, where $\lambda$ is the coloring of $[|\F|]$ with $2k^*+1=\Oh_\eps(|G|^\eps)$ colors constructed in \cref{sec:definability}. For each color $c$, we set $G_c^+$ to be the unary extension of $G$ that marks all parts of color $c$ and their leader with a single additional unary predicate $M_c$. Then, similarly as in \cref{sec:definability}, we can simply take
	\[\leader_c(x,y)\coloneqq L(x)\wedge M_c(x)\wedge M_c(y)\wedge \adj(x,y).\]
	 
	This concludes the sketch of the proof of \cref{lem:main-analogue}.

	\paragraph*{The full construction.}
	Armed with \cref{lem:main-analogue}, which is the analogue of \cref{lem:main}, we proceed to amending the reasoning of \cref{sec:sparsify}. We fix $\eps>0$ to be used throughout the proof.
	The reduction to the setting of bipartite graphs, presented in \cref{sec:bipartite}, can be applied in the same way.
	
	We use the same definition of branchings (\cref{def:branching}). The proof of \cref{lem:branching_compute} can be adjusted to prove the following analogue:
	\begin{lemma}\label{lem:branching_compute-analogue}
		There exists an algorithm with runtime $\Oh(|G|^4)$ that computes for any $G \in \C$ a left (or right)
		branching of overlap at most $k$, where $k\leq \Oh_\eps(|G|^\eps)$.
		
		Furthermore, the computed branching $\cal B$ is almost definable in the following sense: There is a coloring $\mu\colon {\cal B}\to D$ for a set $D$ consisting of $\Oh_\eps(|G|^\eps)$ colors so that for each $c\in D$, there is a unary coloring $G_c^+$ of $G$ and a first-order formula $\branch_c(x,y)$, of length bounded by a universal constant (independent of $G$, $\C$, or $\eps$), that satisfy the following: $G_c^+\models \branch_c(u,v)$ if and only if there exists $H\in \cal B$ with $\mu(H)=c$ such that $v\in V(H)$ and $u$ is the leader of $H$.
	\end{lemma}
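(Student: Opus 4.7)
The plan is to mirror the original proof of \cref{lem:branching_compute} step-for-step, substituting \cref{lem:main-analogue} for \cref{lem:main} and propagating the subpolynomial bounds through; I will handle the left-branching case, with right branchings following by symmetry.

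First I would apply the algorithm of \cref{lem:main-analogue} to $G$ in time $\Oh(|G|^4)$, obtaining a partition $\F = \{P_1,\ldots,P_m\}$ of $R(G)$, distinct leaders $\ell_1,\ldots,\ell_m \in L(G)$ with $P_i \subseteq N_G(\ell_i)$, and the bound $k^\star \leq \Oh_\eps(|G|^\eps)$ on the number of parts any left-vertex meets. Setting $G_i \coloneqq G[N_G(P_i), P_i]$, I would verify the three conditions of \cref{def:branching} exactly as in the original proof: property~1 from the leader condition, property~3 from $\F$ partitioning $R(G)$, and property~2 since each edge $uv$ with $v \in P_i$ satisfies $u \in N_G(P_i) \subseteq V(G_i)$. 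Right-vertices belong to exactly one branch, while each left-vertex belongs to at most $k^\star$ branches by the degree bound in \cref{lem:main-analogue}, yielding the desired overlap.

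Second, for the almost-definability clause, I would set $\mu(G_i) \coloneqq \mu_\F(P_i)$, where $\mu_\F \colon \F \to D$ is the coloring supplied by \cref{lem:main-analogue}; this preserves $|D| \leq \Oh_\eps(|G|^\eps)$. For each color $c \in D$ I would reuse the unary coloring $G_c^+$ and formula $\leader_c$ from \cref{lem:main-analogue}, and define
\[
  \branch_c(x,y) \coloneqq \leader_c(x,y) \vee \exists z\,\bigl(\leader_c(x,z) \wedge \adj(y,z)\bigr).
\]
Since $V(G_i) = R(G_i) \cup L(G_i) = P_i \cup N_G(P_i)$, the first disjunct captures $y \in P_i$ and the second captures $y \in N_G(P_i)$ via a witness $z \in P_i$ adjacent to $y$. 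The formula has length bounded by a universal constant, and its correctness reduces directly to the correctness of $\leader_c$ together with the characterisation $V(G_i) = P_i \cup N_G(P_i)$.

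I do not anticipate any substantial obstacle: each branch $G_i$ is fully determined by the pair $(\ell_i,P_i)$ together with the ambient adjacency of $G$, so the entire technical core has been absorbed into \cref{lem:main-analogue}. The only point requiring care is that $\mu(G_i)$ depends only on the part $P_i$ and not on the identity of the leader, so that a single formula $\branch_c$ uniformly handles all branches of color $c$; this is automatic from the definition.
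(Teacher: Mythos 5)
Your proposal is correct and matches the paper's proof, which likewise obtains the branching by setting $G_i \coloneqq G[N(P_i),P_i]$ from the output of \cref{lem:main-analogue}, inherits the overlap bound $k\le k^\star\le \Oh_\eps(|G|^\eps)$ from the degree bound there, and derives almost definability directly from the per-color formulas $\leader_c$ and colorings $G_c^+$ with $\mu(G_i)\coloneqq\mu_\F(P_i)$. Your explicit formula $\branch_c(x,y)\coloneqq \leader_c(x,y)\vee\exists z\,(\leader_c(x,z)\wedge\adj(y,z))$ correctly captures $V(G_i)=P_i\cup N_G(P_i)$ and is in fact more detailed than the paper's one-line reduction.
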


	Recall that \cref{lem:branching_compute} is a simple corollary of \cref{lem:main}, where the partition $\F$ computed by the algorithm of \cref{lem:main} is turned into a branching by expanding each part of $\F$ to its neighborhood. Applying the same argument with \cref{lem:main} replaced with \cref{lem:main-analogue} instead, yields a proof of \cref{lem:branching_compute-analogue}: the $\Oh_\eps(|G|^\eps)$ bound on the overlap of the computed branching is directly implied by the  $k^*\leq \Oh_\eps(|G|^\eps)$ bound of \cref{lem:main-analogue}, and the almost definability property of the branching, asserted in \cref{lem:branching_compute-analogue}, is a direct consequence of the almost definability property asserted in \cref{lem:main-analogue}.

	Next, the algorithm to compute the tree $T(G)$ and the graph $S(G)$, presented in \cref{sec:algo_sparsify}, remains unchanged, except for the usage of \cref{lem:branching_compute-analogue} instead of \cref{lem:branching_compute}. Note that the definitions of $T(G)$ and of $S(G)$ do not depend on $\eps$. Consequently, we define the class $\D\coloneqq \{S(G)\colon G\in \C\}$  in exactly the same way.
	The proof that $T(G)$ has height bounded by a constant depending only on $\C$, presented in \cref{sec:height}, applies verbatim, as it relies only on the exclusion of half-graphs, following from the assumption of monadic stability. We denote this constant by $h$. 
	
	As for the running time of the algorithm to compute $T(G)$ and $S(G)$, the analysis from the proof of \cref{lem:runtime} provides a bound of $\Oh(k^h\cdot |G|^4)$, where $k$ is the bound on the overlap of branchings provided by \cref{lem:branching_compute}. With $k$ being bounded by $\Oh_\eps(|G|^\eps)$ (as asserted by \cref{lem:branching_compute-analogue}) and $h$ being a constant depending only on $\C$, we have that $k^h\leq \Oh_\eps(|G|^{h\eps})$, which becomes $\Oh_\eps(|G|^\eps)$ after rescaling $\eps$ by a factor of $h$. Therefore, the running time of the algorithm becomes $\Oh_\eps(|G|^{4+\eps})$.
	
	The interpretation $I$ satisfying $G=I(S(G))$, proposed in \cref{sec:interp}, may remain unchanged, as its construction relies only on $h$.
	
	We are left with the crux of the proof: arguing that the constructed class $\Dd$ is almost nowhere dense. By \cref{lem:decomp-almnd}, it suffices to prove that $\Dd$ admits quasibounded-size almost nowhere dense decompositions. In fact, we shall prove that $\Dd$ admits quasibounded-size nowhere dense decompositions, which is sufficient by \cref{thm:nd-alm-nd}.
	
	To this end, fix a graph $G\in \Cc$ and consider the graph $S(G)\in \Dd$. Our first goal is to define a suitable vertex coloring of $S(G)$ with $\Oh_\eps(|G|^\eps)$ colors. Recall that the vertex set of $S(G)$ is the disjoint union of the vertex set of $G$ and the node set of $T(G)$. Consider any non-leaf node $p$ of~$T(G)$. Recall that in the construction we have computed a branching of $G(p)$ (with the isolated vertices removed); call it ${\cal B}(p)$. For each graph $G_i\in {\cal B}(p)$ we have constructed a child $p_i$ of $p$ associated with the graph $G(p_i)$, defined as the bipartite complement of $G_i$. Further, we have a coloring $\mu_p\colon {\cal B}(p)\to D_p$, for some set $D_p$ of $\Oh_{\eps}(|G|^\eps)$ colors, that witnesses the almost definability property of the branching ${\cal B}(p)$ (as defined in \cref{lem:branching_compute-analogue}). We may assume that $D_p=D$ for some set $D$ of $\Oh_{\eps}(|G|^\eps)$ colors, fixed for the whole tree $T(G)$. We now define a coloring $\mu\colon V(T(G))\to D$ by setting $\mu(q)\coloneqq \mu_p(G_i)$ for every non-root node $q$, where $p$ is the parent of $q$ and $G_i\in {\cal B}(p)$ is such that $G(q)$ is the bipartite complement of $G_i$. The color of the root of $T(G)$ is set to be an arbitrary element of $D$.
	
	On the other hand, keeping in mind that the overlap of the branching ${\cal B}(p)$ is bounded by $k=\Oh_\eps(|G|^\eps)$, we may repeat the reasoning of \cref{lem:branching_graph_sparse,lem:tree_coloring} to find another coloring $\nu\colon V(T(G))\to D'$, again for a set of colors $D'$ of size $\Oh_{\eps}(|G|^\eps)$, satisfying the following property: for all $p,q\in V(T(G))$ with $\nu(p)=\nu(q)$, we have $V(G(p))\cap V(G(q))=\emptyset$. We define $\sigma\colon V(T(G))\to D\times D'$ to be the product coloring of $\mu$ and $\nu$:  $\sigma(p)=(\mu(p),\nu(p))$ for every node $p$. Then $\sigma$ satisfies both the discussed properties:
	\begin{itemize}[nosep]
		\item For each non-leaf node $p$ of $T(G)$, the coloring $\sigma$ restricted to the children of $p$ witnesses the almost definability of ${\cal B}(p)$.
		\item For all $p,q\in V(T(G))$ with $\sigma(p)=\sigma(q)$, we have $V(G(p))\cap V(G(q))=\emptyset$.
	\end{itemize}
	We remark that the first assertion above requires an easy adjustment of the unary expansions $G^+_c$ and the formulas $\branch_c(x,y)$.
	
	Finally, for every node $p$ of $T$, we define
	\[\lambda(p)\coloneqq \sigma(p_1)\sigma(p_2)\ldots \sigma(p_\ell)\in (D\times D')^{\leq h},\]
	where $p_1,p_2\ldots,p_\ell$ are the nodes on the root-to-$p$ path in $T(G)$, with $p_1$ being the root and $p_\ell=p$, and $C\coloneqq (D\times D')^{\leq h}$ is the set of words of length at most $h$ over the alphabet $D\times D'$. By rescaling $\eps$ by a factor of $2h$ (which is a constant depending on $\Cc$ only), we have $|C|\leq \Oh_\eps(|G|^\eps)$. We extend the coloring $\lambda$ also to the vertices of $G$ by mapping all of them to the same fresh color; this increases $|C|$ by $1$. Thus, $\lambda$ is a vertex coloring of $S(G)$.
	
	Having defined the coloring $\lambda$ for each graph $S(G)\in \Dd$, we verify that these colorings witness that $\Dd$ admits quasibounded-size nowhere dense decompositions. For this, we prove the following claim.
	
	\begin{claim}\label{cl:heart}
		Fix $t\in \N$.
		For every $G\in \Cc$, if we consider the coloring $\lambda\colon V(S(G))\to C$ constructed above, then for every subset of colors $X\subseteq C$ with $|X|\leq t$ we have the following:
		\begin{itemize}[nosep]
			\item $S(G)[\lambda^{-1}(X)]$ is $t$-degenerate, and
			\item $S(G)[\lambda^{-1}(X)]\in \ST^t(G)$, where $\ST^t$ is a transduction depending only on $\Cc$ and $t$.
		\end{itemize}   
	\end{claim}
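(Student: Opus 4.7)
The argument rests on two easily verified properties of the coloring $\lambda$. Restricted to $V(T(G))$, the coloring $\lambda$ refines $\sigma$: if $\lambda(p)=\lambda(q)$ then $\sigma(p)=\sigma(q)$, since $\sigma(p)$ is the last symbol of the sequence $\lambda(p)$. Combined with the second property of $\sigma$ established just before the claim, this shows that any two distinct nodes $p,q$ of $T(G)$ with $\lambda(p)=\lambda(q)$ satisfy $V(G(p))\cap V(G(q))=\emptyset$.

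Now take any subgraph $H'$ of $S(G)[\lambda^{-1}(X)]$. If $V(H')\cap V(G)\neq\emptyset$, fix $v\in V(H')\cap V(G)$. Since $V(G)$ is independent in $S(G)$, the neighbors of $v$ in $S(G)$ are exactly the nodes $p\in V(T(G))$ with $v\in V(G(p))$. Within any single $\lambda$-color class $\lambda^{-1}(\mathbf{c})$ with $\mathbf{c}\in X$, the disjointness above implies that $v$ belongs to $V(G(p))$ for at most one node $p$. Hence $v$ has at most $|X|\le t$ neighbors in $S(G)[\lambda^{-1}(X)]$, and so at most $t$ neighbors in $H'$. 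Otherwise $V(H')\subseteq V(T(G))$, in which case $H'$ is a subgraph of the tree $T(G)$, hence a forest with a vertex of degree at most $1$. In either case $H'$ contains a vertex of degree at most $t$, proving $t$-degeneracy.

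\textbf{Part 2 (transducibility).} We adapt the iterative construction of $S(G)$ from $G$ given in the proof of \Cref{lem:transducible}. Since $|X|\le t$ and each sequence in $X$ has length at most $h+1$, at every depth $d$ the set $X_d$ of $\sigma$-colors that appear as the $d$-th symbol of some $\mathbf{c}\in X$ has size at most $t$. We construct $S(G)[\lambda^{-1}(X)]$ as the image of $G$ under a composition $\ST^t \coloneqq \ST_h^t\circ\cdots\circ\ST_0^t$, where each $\ST_d^t$ extends the partial construction (the restricted tree up to depth $d$, together with $V(G)$ and its edges) by the depth-$(d{+}1)$ nodes whose $\sigma$-color lies in $X_{d+1}$. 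For each $c\in X_{d+1}$, the formula $\branch_c$ and unary expansion $G_c^+$ supplied by \Cref{lem:branching_compute-analogue} pick out the leaders and members of the branches of color $c$; these formulas are applied inside each graph $G(p)$ at depth $d$ by means of \Cref{lem:relativize}, using precisely the disjointness of the $G(p)$-sets across distinct nodes of the same $\sigma$-color. Since each $\ST_d^t$ invokes at most $t$ such formulas, it is a fixed transduction, and hence so is the composition $\ST^t$, which depends only on $\Cc$ and $t$.

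\textbf{Main obstacle.} The conceptual ideas are direct adaptations of earlier material; the principal delicate point is the precise design of $\ST_d^t$, which must encode the (a priori unknown but boundedly many) colors of $X_{d+1}$ together with the corresponding unary expansions $G_c^+$ as part of its marking of the input, and must use copies to introduce the new depth-$(d{+}1)$ nodes with the correct adjacencies both to the previous level and to the vertices of $V(G)$. This is a detailed but essentially template-matched reworking of the proof of \Cref{lem:transducible}; the crucial gain, which is what makes the construction succeed under the weaker almost-definability hypothesis of \Cref{lem:branching_compute-analogue}, is that only boundedly many formulas (in $t$ and $\Cc$, independent of $|G|$) are invoked at each of the $h+1$ levels.
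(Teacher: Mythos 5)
Your proposal is correct and follows essentially the same route as the paper: the degeneracy argument is identical, and for transducibility the paper likewise reruns the level-by-level construction from \cref{lem:transducible} with a bounded per-level color budget, the only cosmetic difference being that it closes $X$ under nonempty prefixes to get a set $X'$ of size at most $ht$ and builds $S(G)[\lambda^{-1}(X')]$, rather than working with your per-level projections $X_d$. The only details you leave implicit are the final steps of restricting the (slightly too large) constructed tree to $\lambda^{-1}(X)\cup V(G)$ and deleting the edges of $G$ — both permitted by the definition of a transduction and handled explicitly in the paper via the closing induced-subgraph step and the transduction $\ET$.
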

	\begin{proof}
		To see that $S(G)[\lambda^{-1}(X)]$ is $t$-degenerate, note that every vertex $u\in \lambda^{-1}(X)\cap V(G)$ has degree at most $t$ in $S(G)[\lambda^{-1}(X)]$, for all its neighbors in $V(T(G))\cap \lambda^{-1}(X)$ have pairwise different colors under $\sigma$, hence also under $\lambda$. Therefore, every subgraph of $S(G)[\lambda^{-1}(X)]$ either contains a vertex of $\lambda^{-1}(X)\cap V(G)$, which has degree at most $t$, or is a subtree of $T(G)[\lambda^{-1}(X)]$, and hence contains a vertex of degree at most $1$.
		
		For the second assertion, let $X'$ be the closure of $X$ under taking nonempty prefixes. That is, for each color $c\in X$ (which, recall, is a word over $D\times D'$ of length at most $h$), add all the nonempty prefixes of $c$ to~$X'$. Thus, $X'\supseteq X$ and $|X'|\leq ht$. As $S(G)[\lambda^{-1}(X)]$ is an induced subgraph of $S(G)[\lambda^{-1}(X')]$, it suffices to prove that $S(G)[\lambda^{-1}(X')]\in \ST^t(G)$, for some fixed transduction $\ST^t$ depending only on $\Cc$ and $t$. To argue this, we may repeat the reasoning from the proof of \cref{lem:transducible}, but with the following amendment: in the definition of $S_d(G)$, we additionally remove all the vertices that do not belong to $\lambda^{-1}(X')$. (We may assume here without loss of generality that $V(G)\subseteq \lambda^{-1}(X')$, thus this only removes some the nodes of $V(T(G))\setminus \lambda^{-1}(X')$.) Since $X'$ is closed under taking nonempty prefixes by definition, we see that $V(S_d(G))\cap V(T(G))$ is a prefix $T(G)$, that is, a set of nodes of $T(G)$ closed under taking ancestors. With this observation, the transduction $\ST^t_d$, used to transduce $S_d(G)$ from $S_{d-1}(G)$, can be constructed in a similar way, except for the following adjustment.
		To define the formula $\psi(x,y,z)$, instead of using, for each node $p$ of $T(G)$, the unary coloring $G^+(p)$, we consider the unary colorings $G^+_c(p)$ for all the colors $c\in X''$, where $X''\subseteq D\times D'$ is the set of all the colors featured in the elements of $X$ (thus $|X''|\leq ht$). Also, instead of considering all the colors $c\in C$ that can be possibly used by $\lambda$, we restrict attention only to the colors $c\in X''$.
		Then, from the formulas $\branch_c(x,y)$ for $c\in X''$ we may piece together
		a formula $\psi(x,y,z)$ that works as in the proof of \cref{lem:transducible}, but is applied to a unary coloring of $G(p)$ with a bounded number of unary predicates, and expresses the condition that $y$ is the leader of the branch containing $z$ only under the assumption that this branch is of color belonging to $X''$. With such a formula $\psi$ in place, in the same manner as in the proof of \cref{lem:transducible} we may obtain a transduction $\ST^t_d$ that constructs, for each $p\in T_d\cap \lambda^{-1}(X')$, the children of $p$ of colors belonging to $X'$, together with all the edges connecting them to the vertices of $G$. 
		
		As in the proof of \cref{lem:transducible}, the final transduction~$\ST^t$ is obtained by composing all the transductions~$\ST^t_d$, for $d$ ranging from $1$ to $h$, and a transduction that clears all the edges of $G$.
	\renewcommand{\qed}{\hfill$\lrcorner$}\end{proof}
		
	Given \cref{cl:heart}, we define $\Ee_t$ to be the class of all $t$-degenerate graphs belonging to $\ST^t(\Cc)$. Thus, $\Ee_t$ is in particular weakly sparse. Since $\Ee_t$ is transducible from a monadically dependent (even monadically stable) class $\Cc$, it follows that $\Ee_t$ is also monadically dependent. Hence, from \cref{thm:mondep-nd} we conclude that $\Ee_t$ is actually nowhere dense. Therefore, by \cref{cl:heart}, the classes $\Ee_1,\Ee_2,\Ee_3,\ldots$ and the colorings $\lambda$ witness that $\Dd$ admits quasibounded-size nowhere dense decompositions, as claimed. This finishes the proof of \cref{thm:almost-nowhere-dense-sparsification}.

\end{document}